\let\newfloat\newfloat@ltx
\newtcolorbox[auto counter]{pabox}[2][]{fonttitle=\bfseries,
title=Example~\thetcbcounter: #2,#1,colframe=gray}
\def\Cbb{\mathbb{C}}
\def\HC{\mathcal{H}}
\def\ad{^{\dagger}}
\newcommand{\dya}[1]{\ket{#1}\!\bra{#1}}
\newcommand{\Rbb}{\mathbb{R}}
\newcommand{\Zbb}{\mathbb{Z}}
\newcommand{\DC}{\mathcal{D}}
\newcommand{\OC}{\mathcal{O}}
\newcommand{\RC}{\mathcal{R}}
\newcommand{\SC}{\mathcal{S}}
\newcommand{\TC}{\mathcal{T}}
\newcommand{\WC}{\mathcal{W}}
\newcommand{\YC}{\mathcal{Y}}
\newcommand{\Span}{{\rm span}}
\renewcommand{\geq}{\geqslant}
\renewcommand{\leq}{\leqslant}
\newcommand{\SWAP}{\mathrm{SWAP}}
\renewcommand{\vec}[1]{\boldsymbol{#1}}  
\newcommand*{\id}{\openone}
\def\idty{\mathds{1}}
\newcommand{\bs}{\textsf{BS}}
\newcommand{\thv}{\vec{\theta}}
\def\be{\begin{equation}}
\def\ee{\end{equation}}
\def\bs{\begin{split}}
\def\e{\end{split}}
\def\ba{\begin{eqnarray}}
\def\bea{\begin{eqnarray}}
\def\tea{\end{eqnarray}}
\def\ea{\end{eqnarray}}
\def\eea{\end{eqnarray}}
\def\R{\mathds{R}}
\def\liea{\mathfrak{k}}
\def\liea{\mathfrak{g}}
\def\su{\mathfrak{s}\mathfrak{u}}
\def\liea{\mathfrak{k}}
\def\liea{\mathfrak{g}}
\def\su{\mathfrak{su}}
\newtheorem{theorem}{Theorem}
\newtheorem{corollary}{Corollary}
\newtheorem{observation}{Observation}
\newtheorem{proposition}{Proposition}
\newtheorem{definition}{Definition}
\def\be{\begin{equation}}
\def\te{\end{equation}}
\def\ee{\end{equation}}
\def\ba{\begin{eqnarray}}
\def\bea{\begin{eqnarray}}
\def\tea{\end{eqnarray}}
\def\ea{\end{eqnarray}}
\def\eea{\end{eqnarray}}
\newcommand{\paran}[1]{\left( #1 \right)}
\def\idty{\mathds{1}}
\definecolor{myblue}{RGB}{0,163,243}
\definecolor{myred}{RGB}{255,100,100}
\definecolor{mygreen}{RGB}{0,153,0}
\definecolor{mypurple1}{RGB}{46, 156, 202}
\definecolor{mypurple2}{RGB}{190,90,175}
\definecolor{fundamental}{RGB}{55, 110, 111}
\definecolor{tensor}{RGB}{161, 195, 209}
\newtcolorbox[use counter from=pabox]{red_boxed_example}[2][]{colback=myred!5!white,colframe=myred!75!black,fonttitle=\bfseries,floatplacement=h!t,float,title=Example~\thetcbcounter: #2,#1}
\newtcolorbox[use counter from=pabox]{green_boxed_example}[2][]{colback=mygreen!5!white,colframe=mygreen!75!black,fonttitle=\bfseries,floatplacement=h!t,float,title=Example~\thetcbcounter: #2,#1}
\newtcolorbox[use counter from=pabox]{purple_boxed_example1}[2][]{colback=mypurple1!5!white,colframe=mypurple1!75!black,fonttitle=\bfseries,floatplacement=h!t,float,
title=Example~\thetcbcounter: #2,#1}
\newtcolorbox[use counter from=pabox]{purple_boxed_example2}[2][]{colback=mypurple2!5!white,colframe=mypurple2!75!black,fonttitle=\bfseries,floatplacement=h!t,float,title=Example~\thetcbcounter: #2,#1}
\newtcolorbox[use counter from=pabox]{purple_boxed_example}[2][]{%
colback=mypurple2!5!white,colframe=mypurple2!75!black,fonttitle=\bfseries,floatplacement=h!t,float,
title=Example~\thetcbcounter: #2,#1}
\newtcolorbox[use counter from=pabox]{blue_boxed_example}[2][]{%
colback=myblue!5!white,colframe=myblue!75!black,fonttitle=\bfseries,floatplacement=h!t,float,
title=Example~\thetcbcounter: #2,#1}
\newtcolorbox[use counter from=pabox]{fundamental_boxed_example}[2][]{%
colback=fundamental!5!white,colframe=fundamental!75!black,fonttitle=\bfseries,floatplacement=h!t,float,
title=Example~\thetcbcounter: #2,#1}
\newtcolorbox[use counter from=pabox]{tensor_boxed_example}[2][]{%
colback=tensor!5!white,colframe=tensor!75!black,fonttitle=\bfseries,floatplacement=h!t,float,
title=Example~\thetcbcounter: #2,#1}
\newcommand{\umatxone}{\boxed{\begin{matrix} U_1 & & 0 \\ 
& \ddots &  \\
0 & & U_1 \end{matrix}}}
\newcommand{\umatxk}{\boxed{\begin{matrix} U_k & & 0 \\ 
& \ddots &  \\
0 & & U_k \end{matrix}}}
\newcommand{\umatxK}{\boxed{\begin{matrix} U_K & & 0 \\ 
& \ddots &  \\
0 & & U_K \end{matrix}}}
\newcommand{\bmatxone}{\boxed{\begin{matrix}  & &  \\ 
& \mathcal{B}_{m_1}\otimes \idty_{\dim(U_1)} &  \\
 & &  \end{matrix}}}
\newcommand{\bmatxk}{\boxed{\begin{matrix}  & &  \\ 
& \mathcal{B}_{m_k}\otimes \idty_{\dim(U_k)} &  \\
 & &  \end{matrix}}}
\newcommand{\bmatxK}{\boxed{\begin{matrix}  & &  \\ 
& \mathcal{B}_{m_K}\otimes \idty_{\dim(U_K)} &  \\
 & &  \end{matrix}}}
\begin{document}

\title{Representation Theory for Geometric Quantum Machine Learning}

\author{Michael Ragone}
\affiliation{Theoretical Division, Los Alamos National Laboratory, Los Alamos, New Mexico 87545, USA}
\affiliation{Department of Mathematics, University of California Davis, Davis, California 95616, USA}

\author{Paolo Braccia}
\affiliation{Theoretical Division, Los Alamos National Laboratory, Los Alamos, New Mexico 87545, USA}
\affiliation{Dipartimento di Fisica e Astronomia, Universit\`{a} di Firenze, Sesto Fiorentino (FI), 50019 , Italy}

\author{Quynh T. Nguyen}
\affiliation{Theoretical Division, Los Alamos National Laboratory, Los Alamos, New Mexico 87545, USA}
\affiliation{Harvard Quantum Initiative, Harvard University,
Cambridge, Massachusetts 02138, USA}

\author{Louis Schatzki}
\affiliation{Information Sciences, Los Alamos National Laboratory, Los Alamos, New Mexico 87545, USA}
\affiliation{Department of Electrical and Computer Engineering, University of Illinois at Urbana-Champaign, Urbana, Illinois 61801, USA}

\author{Patrick J. Coles}
\affiliation{Theoretical Division, Los Alamos National Laboratory, Los Alamos, New Mexico 87545, USA}

\author{Fr\'{e}d\'{e}ric Sauvage}
\affiliation{Theoretical Division, Los Alamos National Laboratory, Los Alamos, New Mexico 87545, USA}

\author{Mart\'{i}n Larocca}
\affiliation{Theoretical Division, Los Alamos National Laboratory, Los Alamos, New Mexico 87545, USA}
\affiliation{Center for Nonlinear Studies, Los Alamos National Laboratory, Los Alamos, New Mexico 87545, USA}

\author{M. Cerezo}
\affiliation{Information Sciences, Los Alamos National Laboratory, Los Alamos, New Mexico 87545, USA}

\begin{abstract}
Recent advances in classical machine learning have shown that creating models with inductive biases  encoding the symmetries of a problem can greatly improve performance. Importation of these ideas, combined with an existing rich body of work at the nexus of quantum theory and symmetry, has given rise to the field of Geometric Quantum Machine Learning (GQML). Following the success of its classical counterpart, it is reasonable to expect that GQML will play a crucial role in developing problem-specific and quantum-aware models capable of achieving a computational advantage. Despite the simplicity of the main idea of GQML -- create architectures respecting the symmetries of the data -- its practical implementation requires a significant amount of knowledge of group representation theory. We present an introduction to representation theory tools from the optics of quantum learning, driven by key examples involving discrete and continuous groups. These examples are sewn together by an exposition outlining the formal capture of GQML symmetries via ``label invariance under the action of a group representation'', a brief (but rigorous) tour through finite and compact Lie group representation theory, a reexamination of ubiquitous tools like Haar integration and twirling, and an overview of some successful strategies for detecting symmetries.  
\end{abstract}

\maketitle
{
  \hypersetup{linkcolor=blue}
  \tableofcontents
}

\section{Introduction}

Quantum Machine Learning (QML) has recently emerged as one of the most promising candidates to make practical use of quantum computers~\cite{biamonte2017quantum,cerezo2020variationalreview,schuld2021machine,cerezo2022challenges}. 
By leveraging quantum systems to process information, QML presents the ultimate framework for data analysis~\cite{cerezo2022challenges}. In the near-term~\cite{preskill2018quantum}, QML aims to achieve a quantum advantage, i.e.,  to solve certain tasks exponentially faster than any classical supercomputer. In the fault-tolerant era, QML will be the most natural choice to learn from the data created by large-scale quantum devices.

Despite some promising results~\cite{huang2021provably,abbas2020power,caro2021generalization,havlivcek2019supervised,liu2021rigorous,huang2021quantum,sharma2020reformulation}, the field of QML is in its infancy. One of the main challenges for QML is developing quantum-aware and problem-specific models, as it has been shown that simply translating classical architectures into quantum ones, or using problem-agnostic architectures, can lead to serious issues that hinder the performance of the QML model~\cite{mcclean2018barren,cerezo2020cost,sharma2020trainability,patti2020entanglement,marrero2020entanglement,holmes2021connecting,thanasilp2021subtleties,larocca2021diagnosing,arrasmith2021equivalence}.

\begin{figure*}[th]
    \centering
    \includegraphics[width=1\linewidth]{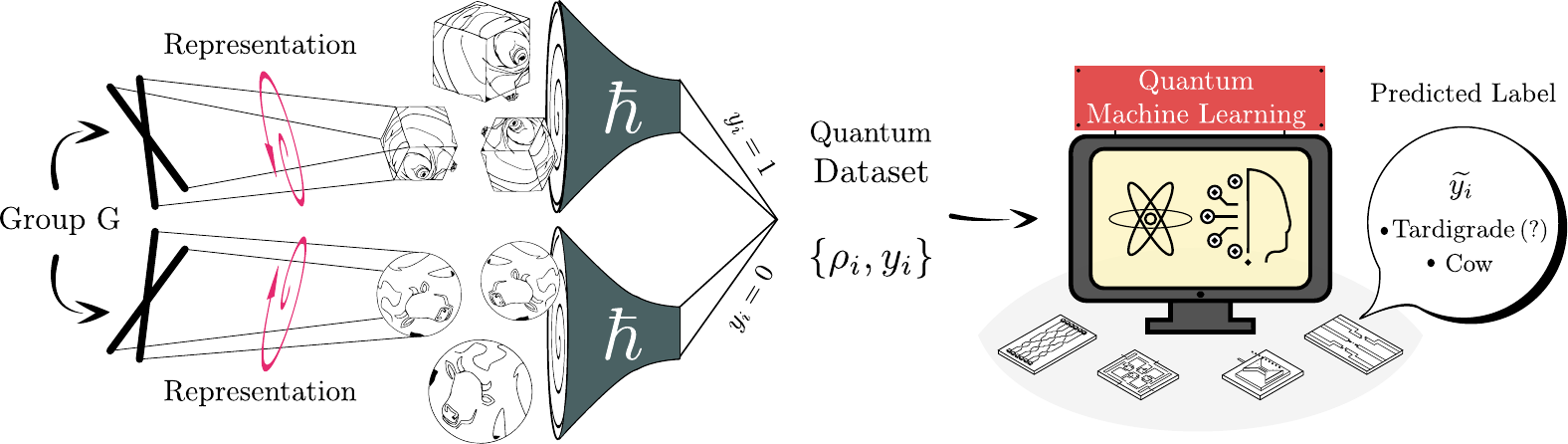}
    \caption{\textbf{Symmetries in QML.} Consider the QML task of classifying between spherical cows and cubic tardigrades~\cite{lee2021entanglement,vedral2021microscopic} whose information has been encoded into quantum states~\cite{havlivcek2019supervised}. Note that a rotated spherical cow is still a spherical cow, and as such the QML model should be able to accurately classify it regardless of how the cow was rotated. In this case a rotation in three-dimensions corresponds to a symmetry of the dataset, as the group of rotations do not change the labels.    }
    \label{fig:fig1}
\end{figure*}

Recently, and inspired by the tremendous success of classical geometric deep learning~\cite{bronstein2021geometric}, there have been several efforts to create QML models with strong inductive biases that respect the underlying structure and symmetries of the data over which they act~\cite{larocca2022group,skolik2022equivariant,meyer2022exploiting,glick2021covariant,zheng2021speeding,sauvage2022building,mernyei2022equivariant}. The goal of inductive biases is to restrict the space of functions explored by the QML model by imposing task-specific knowledge or assumptions. It is expected that models tailored to a given task will have better performance, both in training and generalization, than those without inductive biases~\cite{bronstein2021geometric,larocca2022group,skolik2022equivariant,meyer2022exploiting,glick2021covariant,zheng2021speeding,sauvage2022building, mernyei2022equivariant,schatzki2022theoretical}.  These efforts have led to the inception of the field of \textit{Geometric Quantum Machine Learning} (GQML). We remark that the scope of GQML is quite broad and is relevant to quantum deep learning, quantum kernels, quantum generative modeling, and other related topics.

A rich source of inductive biases in GQML arises from the analysis of symmetries in datasets. Symmetries are mathematically captured by group theory and representation theory, and so GQML researchers will need to be armed with their fundamentals to attack symmetric tasks. Group theory studies abstract algebraic objects called groups: sets equipped with a binary operation for combining elements satisfying certain properties. Representation theory, on the other hand, studies groups (and other algebraic structures) by representing them as collections of linear transformations acting upon a vector space. The two are intimately related, and a common motif is: \textit{groups encode abstract symmetries, and representations describe the concrete actions of these symmetries}. Their joint impact within physics has been deep and pervasive. Noether's theorem rests fundamentally upon continuous symmetries to extract conserved physical quantities~\cite{noether1918invariante}. The analysis of central potentials, notably the hydrogen atom, uses symmetries to perform a block-diagonalization which turns an intimidating 3D problem into a morally 1D problem~\cite{sakurai1995modern}. The development of quantum mechanics as a field was profoundly shaped by groups and their representations, and they continue to be central tools across quantum information theory~\cite{ritter2005quantum,bartlett2007reference,zanardi2000stabilizing,nielsen2000quantum,childs2010quantum,hayashi2017group}, the study of phase transitions and critical phenomena in condensed matter systems~\cite{dresselhaus2008application,onuki2002phase,pelissetto2002critical}, quantum field theory~\cite{feynman76theory,chriss1997representation,frohlich2006quantum}, and even the work towards understanding the exotic behavior of black holes~\cite{ashtekar2000quantum,domagala2004black,engle2010black,engle2010blackhole}. 

In the program of GQML, group and representation theory are crucial to manipulate the symmetries underlying the data, and to understand the interplay between these symmetries and the quantum learning process. For instance, representation theoretical tools can be used  to create \textit{equivariant quantum neural networks and measurement operators}, as well as to understand how \textit{different representations} of the same symmetry group can access \textit{different types of information} in a quantum state. 

While group theory is usually covered and studied in many undergraduate classes, the same cannot be said about representation theory, which is usually left as a topic for more advanced and specialized mathematical courses. Moreover, the literature for representation theory can sometimes be hard to access for non-experts as it is written from a purely mathematical and algebraic perspective. The previous has motivated us to present the basic mathematical tools for group and representation theory through the lens of QML. Our hope is to popularize the use of representation theory as a fundamental ingredient to both near-term and fault-tolerant QML model design. That being said, we note that this article is not aimed at being a comprehensive review of representation theory. Hence, we also point readers to  Refs.~\cite{hall2013lie,simon1996representations,fulton1991representation,serre1977linear} for more detailed presentations of representation theory. 

This work is aimed at readers with a (small but non-zero) background in QML. The topics covered are as follows. Section \ref{sec:quantum machine learning} briefly introduces us to QML and to the importance of identifying symmetries in a dataset. Section \ref{sec:symmetries and groups in QML} unravels a general framework to formally describe symmetries in QML, which Section \ref{sec:examples of discrete and continuous symmetries in QML} then puts to work through several concrete examples of discrete and continuous symmetries. Section \ref{sec:abstractifying physical symmetries to groups} sheds light on the link between symmetries, groups, and representations. Section \ref{sec:representation theory for discrete and continuous groups} dives deep into the world of finite and Lie group representation theory. This is the meatiest chunk of the paper, and contains important definitions, foundational theorems, and illustrations of guiding themes within representation theory. Section \ref{sec:some representation theory-rich constructions in QML} ties representation theory to key constructions and methods widely used within QML (such as Haar integrals and twirling), and Section \ref{sec:symmetries in the wild} finishes with a recapitulation of strategies to detect symmetries in a given problem. We emphasize that this text is written with a rich bank of examples as the guiding stars, and we encourage the reader to constantly return to them when lost in the seas of abstraction. Group representation examples are presented throughout the article in colored boxes (the same color indicates that we are studying the same symmetry group and the same representation), and QML-relevant examples appear across several figures. Finally, the end of the article is capped off by a sort of field guide with boxes containing important information and facts about several common symmetry groups.

\section{Quantum machine learning}\label{sec:quantum machine learning}

In this work we focus on the supervised QML problem of classifying labeled quantum data. However, we remark that the tools presented here can broadly be used in more general QML scenarios such as unsupervised learning, reinforcement learning, and generative modeling. For our purposes, we assume that one is given repeated access to a dataset of the form $\SC=\{\rho_i,y_i\}_{i=1}^N$, where $\rho_i$ are $n$-qubit states belonging to a data domain $\RC$ in a $d$-dimensional Hilbert space $\HC$ (with $d=2^n$), while $y_i$ are real-valued labels in some label domain $\YC$. We further assume that the data instances in $\SC$ are drawn i.i.d. from a distribution defined over $\RC\times\YC$, such that the label $y_i$ associated to the state $\rho_i$ is assigned according to some (unknown and potentially probabilistic) function $f : \RC \rightarrow \YC$. That is, $f(\rho_i) = y_i$. We note that we will make no particular assumption regarding to how the data states $\rho_i$ are created. That is, $\rho_i$ could be the result of embedding classical data into quantum states (QML for classical data~\cite{havlivcek2019supervised}), or they could be obtained from some physical quantum mechanical process (QML for quantum data~\cite{schatzki2021entangled}).

The goal is to train a  model $h_{\vec{\theta}}: \RC \rightarrow \YC$, with $\thv$ being trainable parameters,  to produce labels that match those of $f$ with high probability over the training set (small training error), but also over new and previously unseen cases (small generalization error). QML models come in many forms and flavors, leveraging both the power of  quantum hardware  (e.g., to compute some classically intractable expectation value over the input data) as well as that of classical computers (e.g., use an optimizer to train the parameters $\vec{\theta}$, or analyze the quantum measurement outcomes with some classical neural network). In any case, the success of the QML model hinges on several factors, but perhaps the most important one is how the model is defined, i.e., what are the inductive biases encoded in the model. As previously mentioned, the goal of GQML is to embed information about the symmetries of the data in $\SC$ into the model. Here we will not concern ourselves with how to actually create QML models encoding these symmetries (we refer the reader to Refs.~\cite{larocca2022group,skolik2022equivariant,meyer2022exploiting,glick2021covariant,zheng2021speeding,sauvage2022building,mernyei2022equivariant} for that), but rather we will present the tools to theoretically understand and handle the symmetries themselves. 

Throughout this work we will present several QML tasks and their respective symmetries. For the sake of simplicity, we will consider the case when the parameterized model $h_{\vec{\theta}}$ is simply given by taking $k$ copies of the input states from the dataset $\SC$, sending them through a parameterized channel (usually called a \textit{quantum neural network}), and making a measurement at the output. That is, we will focus on QML models of the form
\begin{equation}\label{eq:QML-model}
    h_{\vec{\theta}}(\rho_i)=\Tr[\WC_{\vec{\theta}}(\rho_i^{\otimes k}) M_i]\,,
\end{equation}
where $\WC_{\vec{\theta}}:B(\HC^{\otimes k })\rightarrow B(\HC^{\otimes k'})$  is a trainable parameterized quantum channel   (usually a unitary channel), and $M_i$ is a -- potentially data-dependent -- Hermitian measurement operator. Here, $B(\HC^{\otimes k })$ denotes the space of bounded linear operators on $\HC^{\otimes k })$.

\section{Symmetries and groups in QML}\label{sec:symmetries and groups in QML}

As shown in Fig.~\ref{fig:fig1}, the main goal of this article is to study symmetries in QML within the framework of representation theory. As such let us first define \textit{what a symmetry is}. At the highest level, a \textit{symmetry describes some property of the data in $\RC$ or of the underlying function $f$, which is left unchanged under some transformation}. We will consider that such transformation refers to a \textit{unitary} evolution applied to the quantum state, i.e., to a map $\rho\rightarrow U\rho U\ad$ for some $U$. As we will see below, in many cases this type of unitary transformation suffices to encompass a wide range of scenarios of interest.\footnote{We will encounter and work with symmetry representations that are ostensibly not unitary. However, Theorem \ref{thm:Weyl's unitary trick} will show that a wide class of representations are equivalent to unitary ones. Further, Wigner's theorem~\cite{wigner2012group} guarantees that all symmetry transformations of quantum states preserving inner products are either unitary or antiunitary, and often antiunitary transformations are ``unitary and complex conjugation''.} Let us now consider the following proposition.
\begin{proposition}~\label{prop:symm-group}
Let $G$ be the set of all unitary symmetry transformations, such that for any $U\in G$, the map  $\rho\rightarrow U\rho U\ad$ leaves some property of $\rho$ unchanged. Then, $G$ forms a group.
\end{proposition}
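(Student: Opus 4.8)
The plan is to recognize that $G$ is by construction a subset of the unitary group $\U(d)$ acting on the $d$-dimensional Hilbert space $\HC$, so that rather than checking all four group axioms from scratch it suffices to verify the subgroup criteria: that $G$ is nonempty (contains the identity), closed under composition, and closed under inverses. Associativity and the ambient group structure of $\U(d)$ are then inherited automatically. Before doing so I would make the defining condition precise by fixing the invariant quantity: let $P$ denote the property of interest (for instance the label $f(\rho)$, or any functional of the state deemed relevant) defined on the data domain, and declare that $U \in G$ exactly when $P(U\rho U\ad) = P(\rho)$ for every admissible $\rho$.

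The identity and closure are then immediate. Since $\id\,\rho\,\id\ad = \rho$, the identity trivially satisfies $P(\id\,\rho\,\id\ad) = P(\rho)$, so $\id \in G$ and $G$ is nonempty. For closure, take $U, V \in G$ and write $(UV)\rho(UV)\ad = U(V\rho V\ad)U\ad$; invariance under $V$ gives $P(V\rho V\ad) = P(\rho)$, and invariance under $U$ applied to the state $V\rho V\ad$ gives $P(U(V\rho V\ad)U\ad) = P(V\rho V\ad)$, so chaining the two equalities yields $P((UV)\rho(UV)\ad) = P(\rho)$ and hence $UV \in G$.

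Closure under inverses is where I expect the only genuine subtlety, and it rests on a careful reading of the universal quantifier in the definition. Given $U \in G$, to show $U\ad \in G$ I would substitute $\sigma = U\ad\rho U$, so that $\rho = U\sigma U\ad$, and invoke invariance of $P$ under $U$ to obtain $P(\rho) = P(U\sigma U\ad) = P(\sigma) = P(U\ad\rho U)$; thus $U\ad$ preserves $P$ as well. The point that must be handled with care is that this substitution is only legitimate if the domain on which $P$ is defined is itself invariant under the action of $G$, so that $\sigma$ remains an admissible argument of $P$; otherwise preservation in one direction need not imply it in the other. I would secure this either by taking $P$ to be defined on all of $B(\HC)$ (where invariance of the domain is automatic), or by explicitly folding closure of the data domain under the candidate symmetries into the hypotheses. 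With this addressed, $G$ satisfies the subgroup criteria inside $\U(d)$ and is therefore a group.
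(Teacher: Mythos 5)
Your proof is correct, and its skeleton (identity, closure under products, closure under inverses, associativity inherited from matrix multiplication) matches the paper's direct verification of the group axioms; in particular your closure argument --- concatenating two property-preserving conjugations --- is word-for-word the paper's argument. The genuine difference lies in the inverse axiom. The paper simply \emph{asserts} that for each $U\in G$ the element $U\ad$ lies in $G$, identifying $U\ad$ as the matrix inverse, but it never argues that conjugation by $U\ad$ actually preserves the property $P$. You supply exactly this missing step, via the substitution $\sigma = U\ad\rho\, U$ so that $P(\rho)=P(U\sigma U\ad)=P(\sigma)=P(U\ad \rho\, U)$, and moreover you isolate the hidden hypothesis that makes the substitution legitimate: the domain on which $P$ is defined must itself be invariant under the candidate symmetries (or $P$ must be defined on all of $B(\HC)$), since otherwise $\sigma$ need not be an admissible argument and preservation in one direction does not formally imply preservation in the other. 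This is a real subtlety --- in the QML setting of the paper the data domain $\RC$ is an arbitrary subset of states, so the caveat is not vacuous. What the paper's looser treatment buys is brevity appropriate to an expository proposition; what yours buys is an actual proof of the one nontrivial axiom, plus an explicit statement of the regularity assumption under which the proposition is true as stated. The subgroup-criterion framing inside $\U(d)$ is also a slightly cleaner packaging, since associativity and the meaning of ``inverse'' then come for free rather than being restated as separate axioms.
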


We note that given any two unitaries $U$ and $V$ in $G$, the unitary $V \cdot U$ obtained by multiplying $V$ and $U$ is also a symmetry transformation. This follows from the fact that concatenating two property-preserving transformations $\rho\rightarrow U\rho U\ad\rightarrow V \cdot U\rho U\ad \cdot V\ad$ constitutes in itself a property-preserving transformation. Since $G$ is a group, it satisfies the group axioms: associativity, existence of identity, and existence of inverse. 

\textit{Associativity}. Given any $U$, $V$ and $W$ in $G$, then $(W\cdot V)\cdot U= W\cdot (V\cdot U)$.

\textit{Identity}. There exists an element in $G$, corresponding to the $d\times d$ identity matrix $\id$, such that $\id\cdot U= U\cdot \id =U$.

\textit{Inverses}. For each $U$ in $G$, there exists an element $U\ad$ in $G$ such that $U \cdot U\ad=U\ad \cdot U =\id$, where $\id$ is the identity matrix, and $U\ad$ is the inverse (conjugate transpose) of $U$.

Since  symmetries are ubiquitous in physics,  here we will focus on those that are relevant for QML, i.e., those that preserve the labels of the data. As such we introduce the following definition.
\begin{definition}[Label invariance]\label{def:label-inv}
The action of the group $G$ is said to leave the data labels $y_i$ invariant, if 
\begin{equation}
    f(U\rho_i U\ad)=f(\rho_i)=y_i\,,
\end{equation}
for all $\rho_i$ with label $y_i$ and for all  $U\in G$.
\end{definition}

Here we make several important remarks. 
First, we note that if the states themselves are invariant, i.e., if $U \rho_i U^\dagger = \rho_i$, then label invariance is immediately satisfied. 
But this need not be the case: there are many interesting examples wherein the states are not invariant but their labels are: i.e., $U\rho_i U^\dagger \neq \rho_i$, but $f(U\rho_i U^\dagger) = f(\rho_i)$ (we will present some examples below!). 
This is a wider class of symmetries than the frequently encountered state symmetries in physics, and it captures a notion more QML-ish in spirit: \textit{the essential characteristic of a data point is its label, so label symmetries are the correct symmetries to track}. Second, just as a square has both reflection and rotation symmetry, data and their labels commonly support several different symmetry groups $G_i$. Further, endowing fixed data with different labels will often change the symmetry groups at play. This could lead, for instance, to a scenario where the data in different classes is associated with different symmetry groups. In all cases, it is up to the wisdom of the QML practitioner to both discover the symmetries in a given task and decide which of them to leverage. As usual in science, good examples will cultivate this wisdom, and we have thus tried to make this tutorial as example-driven as possible.

Recently, there have been several proposals within the nascent field of GQML to create QML models  $h_{\vec{\theta}}$ that respect the symmetries and label invariances of the problem at hand, as this can leads to models with less data-requirements, simpler training landscapes, and better generalization (see Refs.~\cite{bronstein2021geometric,larocca2022group,skolik2022equivariant,meyer2022exploiting,glick2021covariant,zheng2021speeding,sauvage2022building,mernyei2022equivariant,astrakhantsev2022algorithmic,astrakhantsev2022algorithmic,schatzki2022theoretical}).

While there are many ways to ensure that a QML model $h_{\vec{\theta}}$ as in Eq.~\eqref{eq:QML-model} is invariant under $G$, here we invoke a strategy requiring the following two conditions: \textit{equivariance} under $G$ of the parameterized quantum channel
\begin{equation}\label{eq:equiv}
    \WC_{\vec{\theta}}(U^{\otimes k}\rho_i^{\otimes k}(U\ad)^{\otimes k})= U^{\otimes k'}\WC_{\vec{\theta}}((\rho_i)^{\otimes k})(U\ad)^{\otimes k'}\!\!,\,\,\,  \forall U\in G\,,
\end{equation}
and equivariance of the measurement operator
\begin{equation}\label{eq:inv}
    [M,U^{\otimes k'}]= 0\,,  \quad \forall U\in G\,.
\end{equation}
One can readily verify that if Eqs.~\eqref{eq:equiv} and~\eqref{eq:inv} are satisfied, then  the model produces predicted labels that are \textit{invariant} under the action of $G$:
\begin{align}
    h_{\vec{\theta}}(U \rho_iU\ad)&=\Tr[\WC_{\vec{\theta}}((U\rho_iU\ad)^{\otimes k}) M_i]\nonumber\\
    &=\Tr[U^{\otimes k'}\WC_{\vec{\theta}}((\rho_i)^{\otimes k})(U\ad)^{\otimes k'} M_i]\nonumber\\
    &=\Tr[\WC_{\vec{\theta}}((\rho_i)^{\otimes k})(U\ad)^{\otimes k'} M_i U^{\otimes k'}]\nonumber\\
    &=\Tr[\WC_{\vec{\theta}}((\rho_i)^{\otimes k}) M_i ]\nonumber\\
    &=h_{\vec{\theta}}(\rho_i)\,, \quad \forall U\in G\,.
\end{align}
Conceptually, we can think of equivariant quantum neural networks as ``\textit{passing}'' the action of the symmetry from their input, to their output, while equivariant measurements lead to models that ``\textit{absorb}'' the action of the symmetry. We refer the reader to the existing GQML literature of Refs.~\cite{bronstein2021geometric,larocca2022group,skolik2022equivariant,meyer2022exploiting,glick2021covariant,zheng2021speeding,sauvage2022building,mernyei2022equivariant,astrakhantsev2022algorithmic,astrakhantsev2022algorithmic,schatzki2022theoretical} for additional details on the importance of equivariance and invariance, and their crucial role on improving the performance of QML models.

\section{Examples of discrete and continuous symmetries in QML}\label{sec:examples of discrete and continuous symmetries in QML}

Discrete groups, as the name implies, have underlying discrete sets, often finite sets or the set of integers $\mathbb{Z}$.
For instance, consider a QML task of where we want to classify one-dimensional real-valued data. Namely, as shown in Fig.~\ref{fig:fig2}(a), we want to distinguish blue circles (with label $y_i=0$) from orange triangles (with label $y_i=1$). Note that here the labels of the data are invariant under the transformation $x\rightarrow -x$. To translate this into a quantum problem, we encode the data in a single qubit state (see Fig.~\ref{fig:fig2}(a)), with a data-dependent rotation about the $Y$ axis of the Bloch sphere acting on the $\ket{+}$ state. One can verify that now the labels are invariant, for example, under a bit-flip transformation, corresponding to the group
\begin{equation}\label{eq:par_group}
    G_{\rm{bflip}}=\{\id,X\}\,,
\end{equation}
with $X$ the Pauli-$x$ operator. Clearly this group is discrete and contains two elements.

\begin{figure}
    \centering
    \includegraphics[width=0.9\columnwidth]{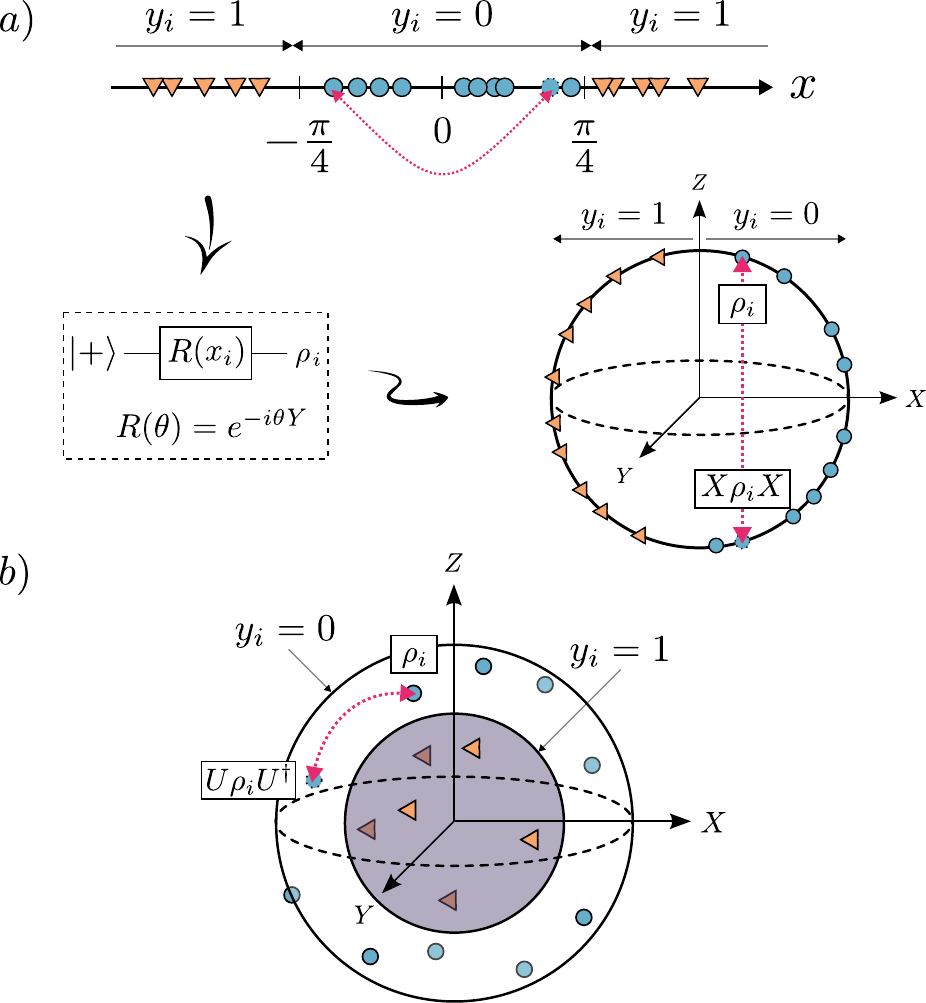}
    \caption{\textbf{Examples of discrete and continuous symmetry groups.} a) In this QML task the goal is to classify real valued data $x$. The data $x_i$ with label $y_i=0$ (blue circles) is sampled from the interval $[-\frac{\pi}{4},\frac{\pi}{4}]$, while the data $x_i$ with label $y_i=1$ (orange triangles) is sampled from $(-\frac{\pi}{2},-\frac{\pi}{4}]\cup[\frac{\pi}{4},\frac{\pi}{2})$. As indicated by the pink dashed arrow, the symmetry operation $x\rightarrow -x$ preserves the label. We encode this classical data in a quantum state by initializing a single qubit to the state $\ket{+}$ and performing a rotation about the $y$-axis with an angle $x_i$. As schematically depicted, the quantum states $\rho_i$ in the dataset are pure states living on the surface of the Bloch sphere. Here, the symmetry group that preserves the labels of the quantum states is $G_{\rm{bflip}}=\{\id,X\}$ of Eq.~\eqref{eq:par_group} (see pink dashed arrow over the Bloch sphere). b) In this QML task the goal is to classify single-qubit pure states from single-qubit mixed states. The data  $\rho_i$ with label $y_i=0$ (blue circles) correspond to pure states living on the surface of the Bloch sphere,  while the data $x_i$ with label $y_i=1$ (orange triangles) are mixed states living in a shell inside of the Bloch sphere. Since the purity is a spectral property, it gets preserved by the action of any unitary. As such, the symmetry group preserving the labels is $
    G_{\rm{uni}}=\{U\in SU(2)\}$ of Eq.~\eqref{eq:uni_group} (see pink dashed arrow over the Bloch sphere). Note that here the data is quantum mechanical in nature, as it does not correspond to classical data encoded in quantum states.     }
    \label{fig:fig2}
\end{figure}

Continuous groups, on the other hand, are also manifolds, which means that they are locally homeomorphic to an Euclidean space, and as such we can parameterize regions of the group by tuples of real numbers called ``coordinates''. 
For example, consider  the binary QML classification task of Fig.~\eqref{fig:fig2}(b), where we want to classify single-qubit pure states (blue circles on the surface of the Bloch sphere) with label $y_i=0$ from single-qubit mixed states (orange triangles in a shell inside of the Bloch sphere) with label $y_i=1$. It is easy to see that now the labels are invariant under the action of any unitary. This corresponds to the group
\begin{equation}\label{eq:uni_group}
    G_{\rm{uni}}=\{U\in SU(2)\}\,,
\end{equation}
where $SU(2)$ denotes the special unitary group of degree $2$ (all $2\times 2$ unitary matrices with determinant $1$). Here, $G_{\rm{uni}}$ is continuous as it is the set of all (infinitely many) unitaries acting on one qubit. More specifically,  we note that any element in $G_{\rm{uni}}$ can be expressed as $U=c_0\id+(c_1X+c_2Y+c_3Z)$, where $c_0\in \R$ and $\sum_{i=0}^3c_i^2=1$ are the real coordinates parametrizing the manifold of $2\times 2$ unitaries. \footnote{Geometry note: technically, we are abusing the term ``coordinates'', because we are thinking of $(c_0,c_1,c_2,c_3)$ as parametrizing $\R^4$ and identifying $U(2)$ as a submanifold.}

Continuous groups necessarily have uncountably many elements, in stark contrast to the often finite discrete groups. But since continuous groups are also manifolds, they contain additional structure lacked by discrete groups: We can construct smooth paths in the group, just as we can construct paths on surfaces. We can then take derivatives along paths in the group, just as we can take derivatives of paths on surfaces.
From here, analogously to solving ordinary differential equations, the exponential map allows us to ``integrate'' and lift from derivatives over time to paths in the group. 
Specifically, the structure that stores the information of ``directional derivatives of continuous group paths'' is called the \textit{Lie algebra} $\mathfrak{g}$ associated to the continuous group, which we henceforth refer to as a \textit{Lie group}. 
Exponentiation of every element $X$ of the Lie algebra leads to an element of the Lie group $e^{X}=g$. That is, $\mathfrak{g}=\{X\in\mathbb{C}^{d\times d} \,|\, e^X \in G \}$. The correspondences between Lie algebras and Lie groups will be fleshed out in Section~\ref{sec:there and back again: lie groups and algebras}, but we will first spend some time investigating the connection between symmetries  and groups.

\section{Abstractifying physical symmetries to groups}\label{sec:abstractifying physical symmetries to groups}
Once the symmetries of the data have been identified, it is extremely useful to ``\textit{abstractify}'' them: \textit{connect a physical symmetry group with some familiar abstract mathematical group}. To bridge this language with the rest of the article, recall the earlier motif: \textit{groups} encode abstract symmetries, and \textit{representations} describe concrete symmetries. The main utility of this abstractification procedure is that groups as mathematical objects have been thoroughly studied since the early 19th century, and a wealth of information is readily available for scores of them.  Moreover, in the eyes of physics, the list of abstract groups is surprisingly short, thanks in large part to classification programs for finite groups and semisimple Lie groups and nature's seeming preferential treatment of these groups---this means that identification is direct in many cases. It is worth highlighting, however, that this procedure is highly heuristic and fully general approaches do not exist: one physical symmetry group (representation) can be identified with several ambient abstract groups. But there is commonly a ``simple'' choice of group to make, and this often suffices for the goals of abstractification.

\begin{figure}
    \centering
    \includegraphics[width=0.9\columnwidth]{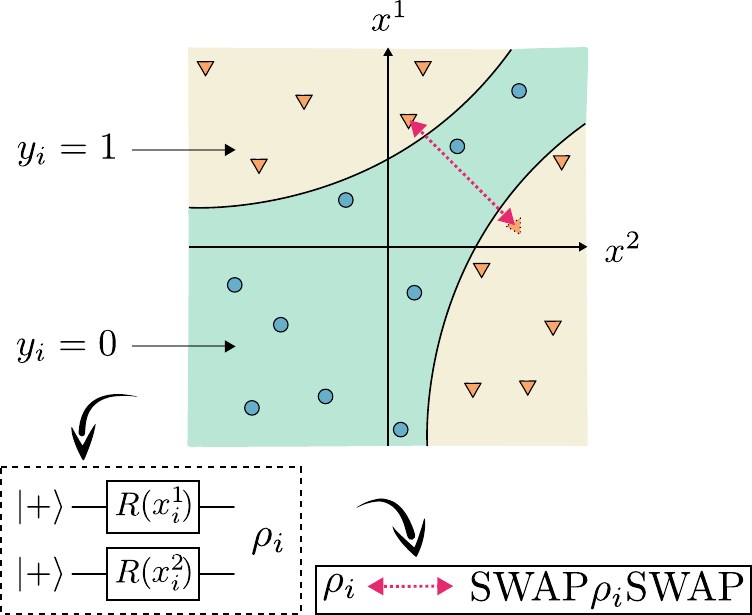}
    \caption{\textbf{Example of another discrete symmetry group.} In this QML task the goal is to classify real valued data $x=(x^1,x^2)$ living in a two-dimensional plane.  The data $x_i$ with label $y_i=0$ (blue circles) is sampled from the green region of the plane, while the data $x_i$ with label $y_i=1$ (orange triangles) is sampled from the yellow region of the plane. As indicated by the pink dashed arrow, the symmetry operation $(x^1,x^2)\rightarrow (x^2,x^1)$ preserves the label. We encode this classical data in  a two-qubit quantum state by initializing  the qubits to the state $\ket{+}\otimes \ket{+}$ and performing a rotation about the $y$-axis with a angle $x_i^1$ for the first qubit, and $x_i^2$ for the first qubit. Now, the symmetry group preserving the labels of the quantum states is $G_{\SWAP}=\{\id,\SWAP\}$ of Eq.~\eqref{eq:swap-group}.  }
    \label{fig:fig3}
\end{figure}

As an example, consider the aforementioned problem where the labels are invariant under a bit-flip transformation, i.e., under the group  $G_{\rm{bflip}}$ of Eq.~\eqref{eq:par_group}. To abstractify a small discrete group, it can be  useful to construct the group multiplication table, or Cayley table. Cayley tables show the multiplication of all group elements, which constitutes a type of fingerprint for a group. For instance, for the bit-flip symmetry group the Cayley table is as follows:
\begin{equation}
    \noindent\begin{tabular}{c | c c }
     & $\id$ & $X$   \\
    \cline{1-3}
    $\id$ & $\id$ & $X$  \\
    $X$ & $X$ & $\id$  
\end{tabular}\,.
\end{equation}
The only group consisting of 2 elements obeying this Cayley table is the cyclic group $\mathbb{Z}_2$, so we can identify this as the group underlying bit-flip symmetry. Interestingly, we can also consider the QML task of Fig.~\ref{fig:fig3} where we want to classify real-valued two-dimensional vectors $x=(x^1,x^2)$. Here the symmetry that preserves the labels is $(x^1,x^2)\rightarrow (x^2,x^1)$. We can verify that after encoding the data in a quantum state (as described in Fig.~\ref{fig:fig3}), the labels are invariant under the swapping of qubits.  Denoting as $\SWAP$ the operator that swaps two qubits (i.e., $\SWAP\ket{ij}=\ket{ji}$, for all $i,j$), the symmetry group is
\begin{equation}\label{eq:swap-group}
    G_{\SWAP}=\{\id,\SWAP\}\,,
\end{equation}
which is again discrete and also contains two elements. The Cayley table is found to be:
\begin{equation}
    \noindent\begin{tabular}{c | c c }
     & $\id$ & $\SWAP$   \\
    \cline{1-3}
    $\id$ & $\id$ & $\SWAP$  \\
    $\SWAP$ & $\SWAP$ & $\id$  
\end{tabular}\,.
\end{equation}
Notably, from this table we can also identify the qubit-reflection symmetry group with the $\mathbb{Z}_2$ group. 
Of course, as the size of the physical symmetry group grows, Cayley tables become intractable and more group-theoretic tools are required to probe candidate groups---in practice, one often has in mind some ``usual suspect'' groups (see Box~\ref{box:3}) and can proceed by process of elimination.

For continuous Lie groups there is a different simple trick one can do to abstractify the group. Here, one needs to compute and identify the Lie algebra associated with the group. Once the Lie algebra is found, it can be used to match the symmetry group with some familiar mathematical group. As an example, let us first consider the group $G_{\rm{uni}}$ of Eq.~\eqref{eq:uni_group}. This case is straightforward as we can use the fact that any single qubit unitary can be obtained as $U=e^{-i \phi_3 Y} e^{-i \phi_2 X}e^{-i \phi_1 Y}$ (for some set of parameters $\{\phi_1,\phi_2,\phi_3\}$) to recognize the generators of $U$ as the Pauli matrices $Y$ and $X$. Once we have identified these operators we need to find \textit{the algebra they form}. That means, we need to start computing their commutation relationships. It is straightforward to see that $[X,Y]=2i Z$, and more generally $[\sigma_i,\sigma_j]=2i \epsilon_{ijk}\sigma_k$, where $\epsilon_{ijk}$ is the Levi-Civita tensor and $\sigma_{1}=X$, $\sigma_{2}=Y$ and $\sigma_{3}=Z$. Thus, the Lie algebra $\mathfrak{g}$ associated to $G_{\rm{uni}}$ corresponds to the unitary Lie algebra $\mathfrak{su}(2)$ satisfied by the Pauli operators. Thus, we can identify $G_{\rm{uni}}$ to the unitary Lie group $SU(2)$. 

Next, let us analyze a slightly less straightforward example. Consider the QML task  from Fig.~\ref{fig:fig4} where we want to classify two-qubit ferromagnetic from antiferromagnetic states. The continuous symmetry group is now composed of a tensor product of local unitaries, i.e., 
\begin{equation}\label{eq:loc_group}
    G_{\rm{prod}}=\{U^{\otimes 2}\,|\, U\in SU(2)\}\,.
\end{equation}
Using again the fact that any local single-qubit unitary  can be expressed as $U=e^{-i \phi_3 Y} e^{-i \phi_2 X}e^{-i \phi_1 Y}$, one can verify that the unitaries $\bigotimes_{i=1}^2 V$ are obtained by exponentiation of the operators $\Sigma_1=\sum_{j=1}^2 X_j$, and $\Sigma_2=\sum_{j=1}^2 Y_j$.  Notably, here we find that their algebra is $[\Sigma_1,\Sigma_2]=2i\Sigma_3$, with $\Sigma_3=\sum_{j=1}^2 Z_j$, and more generally we see that $[\Sigma_i,\Sigma_j]=2i \epsilon_{ijk}\Sigma_k$ for any $i,j,k=1,2,3$. Hence, the generators of $U$ satisfy the same algebraic commutation relationship as the single-qubit Pauli operators. From the previous we can identify the Lie algebra as the unitary algebra $\mathfrak{su}(2)$, and the Lie group as  $SU(2)$. (Note that the previous result also allows us to identify the group $G=\{U^{\otimes k}\,|\, U\in SU(2)\}$ with $SU(2)$, for any value of $k$.)

\begin{figure}
    \centering
    \includegraphics[width=0.9\columnwidth]{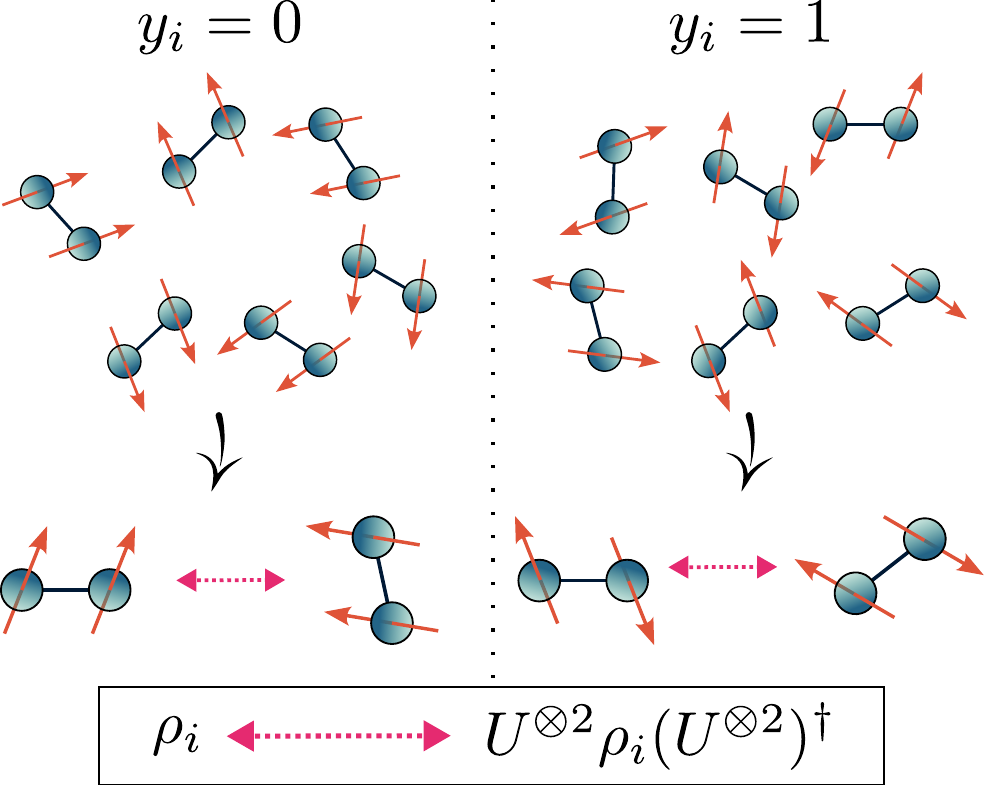}
    \caption{\textbf{Example of another continuous symmetry group.} In this QML task the goal is to distinguish ferromagnetic from antiferromagnetic two-qubit states.   The data $\rho_i$ with label $y_i=0$ is given by states where the single-qubit reduced states $\rho_A$ and $\rho_B$ are aligned (i.e., $\Tr[\rho_A\sigma]=\Tr[\rho_B\sigma]$ for any $\sigma=X,Y,Z$). The data $\rho_i$ with label $y_i=1$ is given by states where the single-qubit reduced states $\rho_A$ and $\rho_B$ are anti-aligned (i.e., $\Tr[\rho_A\sigma]=-\Tr[\rho_B\sigma]$). Here the symmetry group preserving the labels is $G_{\rm{prod}}=\{U^{\otimes 2}\,|\, U\in SU(2)\}$ of Eq.~\eqref{eq:loc_group}.  Note that here the data is quantum mechanical in nature, as it does not correspond to classical data encoded in quantum states.   }
    \label{fig:fig4}
\end{figure}

\section{Representation theory for discrete and continuous groups}\label{sec:representation theory for discrete and continuous groups}

In the previous sections we have taken a constructive approach where we started with a given dataset $\SC=\{\rho_i,y_i\}_{i=1}^N$, identified the symmetries that leave the labels invariant according to Definition~\ref{def:label-inv}, and found their abstract counterparts. This led us to prove that both the group $G_{\rm{bflip}}$ of the bit-flip symmetry transformation and the group $G_{\rm{SWAP}}$ of two qubit permutations have the abstract structure of $\mathbb{Z}_2$. Hence, at this point, we are ready to say that the  bit-flip and the qubit-swapping groups are both \textit{representations} of $\mathbb{Z}_2$. Similarly, we can say that $G_{\rm{prod}}$ from Eq.~\eqref{eq:loc_group} is a \textit{representation} of $SU(2)$.

Representation theory studies how groups can act on vector spaces through linear transformations.
In this section and those to come, we formally flesh out the mathematical basics of representation theory and Lie theory, providing a rigorous footing for the examples earlier in the text and in Box~\ref{box:1} and Box~\ref{box:2}. 
This is a standard topic with many excellent texts, so we opt instead here to instruct through examples rather than developing the theory completely.
Especially important examples will appear repeatedly throughout the text, so we have added these in color-coded boxes to aid the reader.
In writing this, we primarily reference Hall~\cite{hall2013lie}, periodically deferring to Serre~\cite{serre1977linear}, Fulton and Harris~\cite{fulton1991representation}, and Nachtergaele and Sims~\cite{nachtergaele2016quantum}. 
We warn that most quantum scientists who come in close contact with Lie theory find themselves inevitably enamored by the subject, and we encourage them to use these wonderful texts to explore further. 

In the following sections we will present some key mathematical definitions that will be used throughout this article. 

\subsection{Groups: discrete and continuous}

First, we define a group.
\begin{definition} \label{def:group definition}
A \emph{group} is a set $G$ with a binary operation $\cdot:G\times G\to G$ obeying the following axioms:
\begin{enumerate}
    \item \textit{Associativity:} For all $g,h,k\in G$, $(g\cdot h) \cdot k = g\cdot (h\cdot k)$.
    \item \textit{Identity element:} There exists an identity element $1\in G$ such that for every $g\in G$, $1\cdot g = g = g \cdot 1$.
    \item \textit{Inverse element:} For all $g\in G$, there exists a $g^{-1}$ such that $g\cdot g^{-1} = 1 = g^{-1}\cdot g$.
\end{enumerate} 
\end{definition}
We will commonly write $g\cdot h =: gh$. 
If for all $g,h\in G$ we have that $g\cdot h = h \cdot g$, we call $G$ \emph{abelian}, and if this does not hold, we call $G$ \emph{nonabelian}. 

At this point, we refer the reader to Box~\ref{box:3} places at the end of this work, where we have listed some commonly appearing discrete groups and their key properties. Here we also note that there is a stark difference in theory between finite and infinite discrete groups: most of the representation theory we will develop will only apply to finite discrete groups. That being said, Definition~\ref{def:group definition} also includes continuous groups as well: for instance  the \textit{general linear group} $GL(n, \mathbb{C})$ (resp. $GL(n, \mathbb{R})$) of invertible $n\times n$ complex (resp. real) matrices forms a group under the matrix multiplication operation. Here, the identity matrix $\idty$ is the identity element. 

With the previous, we now formally define what a \textit{continuous} matrix group is.
Recall that a \textit{subgroup} $H$ of a group $G$ is a subset of $G$ which is also a group.
\begin{definition} 
A \emph{matrix Lie group} $G$ is a closed subgroup of $GL(d,\mathbb{C})$ (or $GL(d,\mathbb{R}))$, where by closed we mean that if $A_m\in G$ is a sequence of matrices with $\lim_{m\to\infty}A_m = A\in GL(d, \mathbb{C})$, then $A\in G$.
\end{definition}

Matrix Lie groups have the key feature that they also form \textit{smooth manifolds}, or hypersurfaces.
In practice, this means that we can parameterize matrix Lie groups using coordinates, and we can take derivatives along paths in the group just as one can compute tangent vectors along paths embedded in surfaces.
The smooth manifold structure turns the potentially unwieldy problem of understanding groups with uncountably many elements into a tractable one: we get to use powerful tools from not just algebra, but analysis, geometry, and topology as well. It is worth mentioning that in greater generality, a Lie group is a group $G$ that is also a manifold, where the group multiplication and inversion are continuous operations. 
Indeed, there are Lie groups that are not matrix Lie groups, but most of the physically relevant examples like the unitary groups $U(n)$, orthogonal groups $O(n)$, or special linear groups $SL(n)$ form matrix Lie groups, so we restrict our attention to this more concrete class.\footnote{In fact, via a corollary of the Peter-Weyl theorem~\cite{folland2016course}, every compact Lie group is a matrix Lie group.}
We refer the reader to  Box~\ref{box:4} at the end of this work where we present some important examples of matrix Lie groups and their key properties.

Now, finite discrete groups are ``well behaved'' in comparison to their infinite discrete cousins, in the sense that we can say many things about their representation theory: for instance, we know their representations are completely reducible (Theorem~\ref{thm:complete reducibility of unitary representations}), can always be converted into unitary representations (Theorem~\ref{thm:Weyl's unitary trick}), and we know how many irreducible representations they have (Definition~\ref{def:reducible}). While we do not define these terms now, they will be presented below, so do not worry if these do not make sense now. The analogous condition to yield ``well behaved'' matrix Lie groups is a sort of topological proxy for finiteness: \textit{compactness}. 
Compact matrix Lie groups will have the nicest representation theory, and luckily for quantum researchers, unitary groups are compact.
Below we present a special version of the definition of compactness given by the Heine-Borel theorem.
\begin{definition}\label{def:compactness}
A matrix Lie group $G$ is called \emph{compact} if it is a closed and bounded subset of the vector space of the $d\times d$ matrices  $M_d(\mathbb{C})$ (or $M_d(\mathbb{R})$).
\end{definition}

It is useful to note that any putative Lie groups defined by equations and continuous operations (like multiplication or taking adjoints) can be readily shown to be closed: for instance, if we consider a sequence $U_m$ of unitary matrices converging to a matrix $U$, then since taking adjoints $U\mapsto U\ad$ and multiplication $U\mapsto U^\dagger \cdot U$ are continuous operations, we can take the limit of the equation $U_m^\dagger U_m = \idty$ as $m\to\infty$ to see that $U^\dagger U = \idty$, meaning the unitary group is closed. 
Boundedness can be readily checked from standard linear algebra knowledge: for instance, since the operator norm $\norm{U} = 1$, the unitary group is bounded.
Thus, the unitary group $U(d)$ (and consequently the special unitary group $SU(d)$) is compact. 
To contrast, the special linear group $SL(d; \mathbb{C})$, which consists of invertible $d\times d$ matrices of determinant 1, is not compact because it is not bounded. To see why, consider the sequence of diagonal matrices $A_n = \text{diag}(n, 1/n, 1, 1,\dots ,1)$. 
The sequence of norms $\norm{A_n}$ can be made arbitrarily large, and so $SL(d; \mathbb{C})$ is not compact.

Before moving onto representations, we need one more definition to clarify the maps which preserve the Lie structure. 

\begin{definition}
A \emph{Lie group homomorphism} between $G$ and $H$ is a smooth group homomorphism $R:G\to H$, meaning we can take directional derivatives of $R$ and we have $R(g_1 g_2) = R(g_1)R(g_2)$ for all $g_1,g_2\in G$.
\end{definition}
Note that a group homomorphism in general only consists of the condition $R(g_1 g_2) = R(g_1)R(g_2)$. 
But for finite groups, we can think of them as discrete subgroups of a larger matrix group, and from this viewpoint group homomorphisms become Lie group homomorphisms (where the smoothness condition is vacuously true).

\subsection{Representations of groups}

Now that is all well and good, but while these abstract groups govern physical symmetries, the actual incarnation of symmetry in quantum systems happens via \textit{representations} of these groups: these are the actions of a group on vector spaces through linear transformations. That is, while the abstract symmetry group  for the dataset in Fig.~\ref{fig:fig2}(a) and Fig.~\ref{fig:fig3} is $\mathbb{Z}_2$, the action of the elements of the group arises by the representations $G_{\rm{uni}}=\{U\in U(2)\}$ of Eq.~\eqref{eq:uni_group} and $G_{\SWAP}=\{\id,\SWAP\}$ of Eq.~\eqref{eq:swap-group}, acting on $\mathbb{C}^{2}$ and $(\mathbb{C}^{2})^{\otimes 2}$ respectively.

In Proposition~\ref{prop:symm-group}, we saw that the set of all unitary symmetry transformations form a group.
Indeed, in this language, given a vector space $V$, the group of unitary transformations is the subgroup of $GL(V)$ formed by the image $R(G)$ of the representation. 
Sometimes this representation yields obvious fingerprints of the abstract group $G$ (such as the previous $\mathbb{Z}_2$ example), and sometimes it is more subtle. 
In any case, there is no substitute for good examples: watch out for the colored boxes for some driving examples of representations, as they will serve as guides  through the theorems and structure results to follow. 
For the entirety of this paper, we will only consider representations on real $\mathbb{R}$ or complex $\mathbb{C}$ vector spaces $V$. That being said, we can finally define what a representation is:
\begin{definition}\label{def:representation}
A \emph{representation} of a group $G$ on a vector space $V$ is a (Lie) group homomorphism $R:G\to GL(V)$, the general linear group. The \textit{dimension} of a representation $R$ is defined to be $\dim(R) = \dim(V)$.
\end{definition} 
Commonly, we will write $R(g)=:R_g$ for notation and abbreviate ``representation'' to ``rep''.
As an unfortunate feature of the subject, the word ``representation'' can equivalently refer to the group homomorphism $R$, the vector space upon which it acts $V$, or the image subgroup $R(G)\subseteq GL(V)$.
Once one gets used to this, it is not as bad as it sounds: in practice, one often thinks of a representation as being the shared data of the vector space $V$ and the linear action of $G$ on that vector space.\footnote{Mathematicians often formalize this ``shared data'' perspective by saying a representation is a vector space $V$ with a $G$-module structure. This definition is equivalent to the one presented here, but provides alternative perspective.}
We will try to be clear as to which term we mean throughout this text. 

\begin{figure}
    \centering
    \includegraphics[width=0.8\columnwidth]{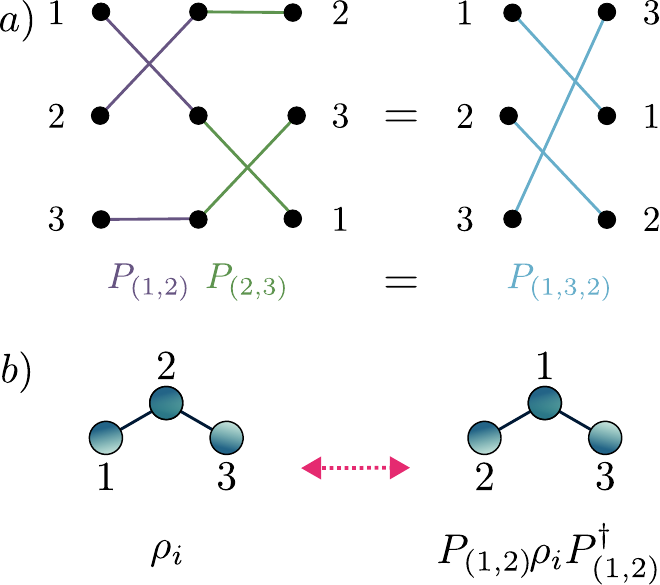}
    \caption{\textbf{The symmetric group $S_3$ and one of its representations.} a) The elements of the symmetric group $S_n$ correspond to the permutations of a set of size $n$. Here we depict the case for $S_3$ where its elements are denoted as $P_{\pi}$. For instance taking $\pi = (12)$ (using cycle notation), the element $P_{\pi}$ maps $123$ to $213$. Here we can see that a generating set for $S_3$ is $\{(1,2),(2,3),(1,3)\}$ as any permutation can be obtained via transpositions of two elements. b) Consider a QML task where the states $\rho_i$ can be thought of as representing an $n$-qubit quantum system whose interaction topology follows that of a graph~\cite{verdon2019quantumgraph,larocca2022group}. Since the way one labels the vertices and assigns them to qubits is completely arbitrary, the problem should be invariant under the action of $S_n$. By conjugating the quantum states $\rho_i$ with elements $P_\sigma\in S_n$ one obtains a new quantum state $P_\sigma \rho_i P_\sigma\ad$ whose underlying graph vertices are permuted according to $P_\sigma$. } 
    \label{fig:fig6}
\end{figure}

Once we fix an abstract group $G$, a natural question spawns from this definition: ``\textit{What do its representations look like?}''. Note that this is the inverse path that we have previously taken, where we start from a representation of a group and we go towards its abstractification. Definition~\ref{def:representation} tells us that we need a vector space $V$ and a collection of matrices $R(G)\subseteq GL(V)$ which obey the same ``group laws'' as in $G$ (this is the homomorphism piece). A given group may have many different representations, some obvious, and others less so. Let us dive into examples.

The first example of a representation is the trivial one, literally. The \textit{trivial representation} of a group $G$ on any vector space $V$, is given by $R_g = \idty\in GL(V)$ for all $g\in G$. Many times, however, representations are a bit more complicated, and they might have underlying structure to them. For instance, sometimes representations can have invariant subspaces, leading to the concepts of subrepresentations and irreducible representations. While these will be formalized below, we note that given a representation  $R:G\to GL(V)$, a subrepresentation is a subspace $W\subseteq V$ that is invariant under the action of the representation. That is,  given any $w\in W$, we have that $R_g \cdot w \in W$ for all $g\in G$. Getting ahead of ourselves, we will say that a  representation is said to be \textit{irreducible}  if it contains no smaller subrepresentations. 

Let us consider more some representations by restricting our attention to a particularly useful family of groups.

\subsubsection{Representations of the symmetric group}

\begin{purple_boxed_example}[label={box:Permutation_rep_of_S3_on_3_qubits}]{Permutation rep of $S_3$ on 3 qubits}
Consider the Hilbert space for 3 qubits $V = (\mathbb{C}^2)^{\otimes 3}$ and consider the symmetric group $S_3$, the group of all possible permutations of three elements.
Then one representation $R:S_3\to GL(V)$ is given on the computational basis by
\[
    P_\pi \cdot \ket{i_1 \; i_2 \;i_3} = \ket{i_{\pi^{-1}(1)} \; i_{\pi^{-1}(2)} \; i_{\pi^{-1}(2)}}
\] For instance, if we express $\pi = (12)$ using cycle notation, then $P_{(12)}\cdot \ket{i_1 \; i_2 \; i_3} = \ket{i_2 \; i_1\; i_3}$.
In other words, $P_{(12)} = \text{SWAP}_{1,2}$.
So, permutation representations of the symmetric group allow us to permute qubits (and more generally, tensor indices).
\end{purple_boxed_example}

The symmetric group $G = S_n$ is defined as the group whose elements are all bijections $\pi$ from a finite set to itself (see also Box~\ref{box:3}).  Moreover, as shown in Fig.~\ref{fig:fig6}(a), once we index the set with numbers $\{1,2,\dots, n\}$, $S_n$ corresponds to the permutations that can be performed on a set of size $n$. Take $S_3$. 
Now, in the search for representations, a natural vector space to probe is the Hilbert space of three qubits, since we can label a basis of this space with three indices. In Example~\ref{box:Permutation_rep_of_S3_on_3_qubits} we explicitly define a representation of $S_3$ acting on this basis by permuting indices as  $P_\pi \cdot \ket{i_1 \; i_2 \;i_3} = \ket{i_{\pi^{-1}(1)} \; i_{\pi^{-1}(2)} \; i_{\pi^{-1}(2)}}$\footnote{One may wonder why the representation acts by the inverse permutation $\pi^{-1}$ on indices instead of $\pi$. Observe that if we chose the naive definition and wrote $P_\pi \cdot \ket{i_1 i_2} = \ket{i_{\pi(1)} i_{\pi(2)}}$, then associativity no longer holds: $P_{\pi_1}\cdot (P_{\pi_2}\cdot\ket{v})\neq (P_{\pi_1}\cdot P_{\pi_2})\cdot \ket{v}$.
Something similar happens when defining representations on group functions $f:G\to \Cbb$: the ``natural'' representation $R$ of $G$ on the vector space of functions is given by precomposition: define $R_g\cdot f(x) = f(g^{-1}x)$ for all $x\in G$.}. 
The simplicity of this representation belies its depth: it will be a frequent offender, playing a central role in a variety of representations on tensor spaces. It is also useful for QML tasks dealing with quantum states defined over a graph (see Fig.~\ref{fig:fig6}(b)).

In Example~\ref{box:dihedral-rep-S3}, we present an alternative representation for $S_3$. This example serves two purposes: firstly, it illustrates that a fixed group $G$ can, and will, have representations which look nothing alike; secondly, it shows how to define a representation by working on generators of a group.
\begin{purple_boxed_example}[label={box:dihedral-rep-S3}]{Dihedral rep of $S_3$ on 1 qubit}
We will construct a representation of $S_3$ on a single qubit $V = \Cbb^2$ using some geometric insight: $S_3$ is also $D_3$, the symmetry group of the triangle. 
From this perspective, we can generate the group using a rotation $(123)$ and a reflection $(12)$, and we will define a representation which acts simultaneously on two triangles given in each copy of $\Cbb$ as the cubic roots of unity $\{1,\omega, \omega^{-1}\}$ where $\omega := e^{2\pi i /3}$. 
Define these two maps in the computational basis
\[
    R_{(123)} = \begin{pmatrix} \omega & 0 \\ 0 & \omega^{-1}
    \end{pmatrix}, \quad     R_{(12)} = \begin{pmatrix} 0 & 1 \\ 1 & 0
    \end{pmatrix}.
\] One way to think about these maps is the following: $R_{(123)}$ rotates two triangles in opposite directions by the angle $\omega$, by
\begin{align*}
    R_{(123)} R_{(123)} \ket{0} &= R_{(123)}( \omega \ket{0}) = \omega^2\ket{0} \\
     R_{(123)} R_{(123)} \ket{1} &= R_{(123)}( \omega^{-1} \ket{1}) = \omega^{-2}\ket{1} ,\\   
\end{align*} and $R_{(12)}$ is the reflection which swaps the two states (and thus the two triangles)
\[
    R_{(12)} \ket{0} = \ket{1}.
\]
\end{purple_boxed_example}
In particular, recall that \textit{a set of group elements generates} $G$, written $\langle g_1,g_2,\dots, g_n\rangle = G$, if every element $h\in G$ can be expressed as a string of generators $g_{i_1} \cdot g_{i_2} \cdot \dots \cdot g_{i_k} = h$. Of course, there are often many different generating sets for a given $G$, and particular choices are made with the intent of analysis: for example, the symmetric group $S_n$ is generated by all transpositions of two elements $(i,j)$, $1\leq i,j \leq n$. The attentive QML researcher's ears should perk up at this second statement: this means that for index permutation representations like Example~\ref{box:Permutation_rep_of_S3_on_3_qubits}, all of the representatives of $S_n$ can be expressed as products of $\text{SWAP}_{i,j}$ gates.

\subsubsection{Representations of continuous groups}

Let us now accrue a smattering of representations of continuous groups.
Continuous groups arise very naturally within QML: the unitary groups $U(d)$ are prototypical examples of Lie groups, and quantum computing is ultimately all about unitaries. In what follows we will focus and give special attention to the special unitary group $SU(2)$, and the QML task of classifying single-qubit states according to their purity (see Fig.~\ref{fig:fig2}(b)). Our choice of considering $SU(2)$ is further motivated  both for its privileged position in quantum physics as the symmetry group underlying spins and for its remarkably well behaved representation theory. 
\begin{fundamental_boxed_example}[label={box:fundamental rep of group SU(2)}]{$SU(2)$ and its fundamental rep (spin-1/2)}
Recall that the special unitary group is defined as
\[
SU(2) = \{U\in GL(\Cbb^2): \; \; U^\dagger U = \idty \text{ and } \det(U) = 1\}.
\] Take 1 qubit $V=\Cbb^2$.
The fundamental representation $U: G\to GL(V)$ is given by $U_g = g$, i.e., the matrix in $G$ is the same as its representative.
While we are here, it is good to recall that any matrix $U_g\in SU(2)$ can be expressed in terms of Pauli matrices $X,Y,Z$ by
\[
    U_g = c_0 \idty +  i(c_1 X + c_2 Y + c_3 Z) ,
\] where $(c_0,c_1,c_2,c_3)$ is a real vector with Euclidean norm $\abs{(c_0,c_1,c_2,c_3)} = 1$.
Note that since this defines an invertible and continuous map between the 3-sphere $S^3\subseteq \mathbb{R}^4$ and $SU(2)$, the two spaces are homeomorphic.
As special cases of this decomposition, we have the rotations around Cartesian axes of angle $\theta$ of the Bloch sphere:
\begin{align*}
    R_X(\theta) &= e^{-i\theta X/2} = \cos(\theta/2) \idty - i\sin(\theta/2) X \\ 
    R_Y(\theta) &= e^{-i\theta Y/2} = \cos(\theta/2) \idty - i\sin(\theta/2) Y \\
    R_Z(\theta) &= e^{-i\theta Z/2} = \cos(\theta/2) \idty - i\sin(\theta/2) Z \\
\end{align*}
\end{fundamental_boxed_example}
Again, just like with discrete groups, we always have the trivial representation of $G$ on any vector space $V$, given by $R_g = \idty\in GL(V)$ for all $g\in G$.

All matrix Lie groups also have a so-called \textit{fundamental representation} or \textit{defining representation}\footnote{There yet another unfortunate clash of terminology here--many physicists commonly use these terms interchangeably, but mathematicians would say that the definition here is strictly the defining representation, which for classical matrix Lie groups is one of the fundamental representations. Matrix Lie groups commonly have more fundamental representations than the defining rep (e.g. the defining rep $\mathbf{3}$ and its dual $\bar{\mathbf{3}}$ for $SU(3)$), but in the case of $SU(2)$, the defining representation is the only fundamental representation. } (Example~\ref{box:fundamental rep of group SU(2)}), which means that the matrices in the group and the matrices in the representation coincide (up to isomorphism, which we will learn morally means ``up to change of basis'' in Definition \ref{def:equivariance}).
Try to note this important distinction: even though the matrices between the group $G$ and its representatives $\{R_g : g\in G\}\subseteq GL(V)$ are identical, we think of the abstract group and its representatives as conceptually distinct. 
For instance, spin representations of angular momentum are, at their core, (irreducible) representations of the Lie group $SU(2)$.
In the case of spin-$1/2$, this is exactly the fundamental representation of $SU(2)$ on $V = \Cbb^2$ where both the matrices and their representatives are of the form $e^{\frac{-i}{\hbar} \vec{\theta} \cdot \vec{\sigma}}$ where $\vec{\theta} = (\theta_x,\theta_y,\theta_z)$ is a vector of real angles and $\vec{\sigma} = (X,Y,Z)$ is a vector whose entries are the Pauli matrices. But in higher spin, e.g., spin-$1$, the group $SU(2)$ is fixed but the representation $V = \Cbb^3$ is not, and so the $3\times 3$ representative ``spin matrices'' take on the new form $S_X,S_Y,S_Z$.
More light will be shed on the spin representations and angular momentum once we get to Lie algebras: a pervading theme of Lie theory is that it is often easier to work on the Lie algebra (e.g., the spin matrices $X,Y,Z$ with commutation relations) than on the Lie group (here, the rotation matrices $e^{-i \vec{\theta} \cdot \vec{\sigma}}$).

Here we note that the fact that the parameter $\theta$ is divided by a factor of two in Example~\ref{box:fundamental rep of group SU(2)} is closely related to an important topological fact: $SU(2)$ is the double cover of the group of rotations in three dimensions $SO(3)$. A key example of this ``covering'' idea is intimately familiar to the QML scientist: the map $\phi\mapsto e^{i 2\pi \phi}$ maps $\phi\in [0,1]$ and all $\phi+k$, $k\in \Zbb$ to the same rotation. In this sense, the real line is a cover of the unit circle. The ``double'' in double cover simply means that above each rotation $R\in SO(3)$, there are two corresponding rotations $\tilde{R}_1, \tilde{R}_2\in SU(2)$ (see Hall \cite{hall2013lie} for a proper definition and more detailed discussion).
Further, $SU(2)$ is simply connected, meaning every loop in $SU(2)$ can be continuously contracted to a single point without leaving $SU(2)$, while $SO(3)$ is not simply connected.
As a sneak peek down the road, this seemingly innocuous fact will play a huge role in the representation theory of these groups: while every Lie group representation gives rise to a Lie algebra representation (Theorem~\ref{thm:lie group reps yield lie alg reps}), the converse only holds locally for ``small angles'' unless the group is simply connected (Theorem~\ref{thm:lie alg reps lift to simple lie group reps}).
And since representation theory for Lie algebras is essentially linear algebraic in nature, we often wish to recover Lie group representations from Lie algebra representations, so paying attention to this simply connected condition will be crucial. 

The following pair of examples, the adjoint representation in Example \ref{box:adjoint rep of group SU(2)} and the tensor representation in Example \ref{box:tensor rep of group SU(2)}, may very well be the most important examples in this entire article. 
They are rife with representation theoretic structure, and will provide a deeper understanding of how representations and symmetries play a key role in  QML. As such, we  will be returning to these examples several times throughout our journey.
Furthermore, while they may look superficially distinct: these representations are in fact equivalent, and we will unravel the connection when we define equivalence of representations in Definition \ref{def:equivariance}.

\begin{blue_boxed_example}[label={box:adjoint rep of group SU(2)}]{$A\mapsto UAU^\dagger$ and the adjoint representation of $SU(2)$}
Let $V = M_2(\Cbb)$ denote the set of $2\times 2$ complex matrices. 
We note that the linear superoperator given by conjugation $A\mapsto U_g A U_g^\dagger$, with $U_g = g\in SU(2)$, is in $GL(V)$.
Indeed, the map $g\mapsto U_g(\cdot)U_g^\dagger$ defines a representation of $SU(2)$ known as the adjoint representation. 
Since the set $\{X,Y,Z, \idty\}\subseteq V$ forms an orthonormal basis of $V$ with respect to the Hilbert-Schmidt inner product $\langle A,B\rangle_{HS} = \Tr \frac{1}{2}[A^\dagger B]$, we can represent any $A\in V$ as 
\[
    A = c_0\idty + c_1X + c_2Y + c_3Z ,
\] where now $(c_0,c_1,c_2,c_3)\in \Cbb^4$.
Note that the trace is invariant with respect to this representation: i.e., for all $g\in SU(2)$, $\Tr[A] = \Tr[U_g A U_g^\dagger]$.
Further observe that every nonzero operator in $\text{span}\{\idty\}$ has nonzero trace, while every operator in $\text{span}\{X,Y,Z\}$ has trace zero. 
Combining these two facts, we realize that $U_g(\cdot)U_g^\dagger$ cannot map between these two linear subspaces. In other words, these subspaces are \textit{invariant subspaces} of the representation $U$ (see Definition~\ref{def:invariant subspace}).
In particular, this means that the set of representatives $\{U_g(\cdot)U_g^\dagger: g\in SU(2)\}$ can be simultaneously block-diagonalized in the basis $\{X, Y, Z, \idty\}$ of $M_2(\Cbb)$:
\begin{equation*}
    U_g(\cdot)U_g^\dagger=\begin{pmatrix} \hspace{.25mm}
    \fbox{ $\begin{array}{c c c}
    \,\,\,\,\,\,\,\, & & \\
    & & \\
    & & \\
    \end{array}$} \hspace{-.75mm}&  \begin{array}{c}
     0 \\
    0 \\
    0
    \end{array}    \\
    \begin{array}{c c c}
    0 & 0 & 0 
    \end{array} &\fbox{ $\begin{array}{c}
    \end{array}$}
    \end{pmatrix}\,,
\end{equation*}
Indeed, the invariant subspace $\text{span}\{\idty\}$ is the trivial one, which can be seen by observing that $U_g\idty U_g^\dagger = \idty$, and the invariant subspace $\text{span}\{X,Y,Z\} = \su(2)$ is the \textit{adjoint representation} of $SU(2)$ (see Definition~\ref{def:adjoint rep}).
In other words, $V = \su(2)\oplus \Cbb\idty$ as representations (see Definition~\ref{def:complete reducibility}).
\end{blue_boxed_example} 

First, let us consider the \textit{adjoint representation} of $SU(2)$ in Example~\ref{box:adjoint rep of group SU(2)}. As seen therein, the adjoint map admits a block diagonal representation in the basis $\{X, Y, Z, \idty\}$ of $M_2(\Cbb)$. This implies that the adjoint representation has two subrepresentations (e.g., invariant subspaces), one three-dimensional, and one one-dimensional. As such, we can say that the adjoint representation is not irreducible.

\begin{tensor_boxed_example}[label={box:tensor rep of group SU(2)}]{Tensor rep of $SU(2)$ on 2 qubits}
Take two qubits $V = (\Cbb^2)^{\otimes 2}$.
We can define the tensor representation $U^{\otimes 2}$ (Definition~\ref{def:tensor rep}) of $SU(2)$ on this space by taking two copies of the fundamental representation, i.e.,  $g\mapsto U_g\otimes U_g$. 
Here, one can readily see that all representatives commute with the SWAP operator $[U_g\otimes U_g, \text{SWAP}] = 0$. Moreover, it is also worth recalling that the SWAP operator acts as $\idty$ on the symmetric subspace spanned by $\{\ket{11}, \ket{01}+\ket{10}, \ket{00}\}$ and as $-\idty$ on the antisymmetric subspace spanned by $\{\ket{10}-\ket{01}\}$. 
Leveraging Proposition~\ref{prop:symmetries, Hamiltonians, and eigenstates}, this means that the representation $U^{\otimes 2}$ is block diagonalized by the symmetric-antisymmetric decomposition of $V$: i.e., in the basis $\{\ket{11}, \ket{01}+\ket{10}, \ket{00}, \ket{10}-\ket{01}\}$, every representative $U_g\otimes U_g$ can be expressed as
\begin{equation*}
    U_g\otimes U_g =\begin{pmatrix} \hspace{.25mm}
    \fbox{ $\begin{array}{c c c}
    \,\,\,\,\,\,\,\, & & \\
    & & \\
    & & \\
    \end{array}$} \hspace{-.75mm}&  \begin{array}{c}
     0 \\
    0 \\
    0
    \end{array}    \\
    \begin{array}{c c c}
    0 & 0 & 0 
    \end{array} &\fbox{ $\begin{array}{c}
    \end{array}$}
    \end{pmatrix}\,.
\end{equation*}
In other words, using the notation $\text{Sym}^2(\Cbb^2)$ for the symmetric subspace and $\text{Alt}^2(\Cbb^2)$ for the antisymmetric subspace, we have $V = \text{Sym}^2(\Cbb^2)\oplus \text{Alt}^2(\Cbb^2)$ as representations. 
While this is not yet obvious, it will be revealed in Example~\ref{box:equivalence of adjoint and tensor reps} that $\text{Sym}^2(\Cbb^2)$ is the adjoint representation $\su(2)$, and $\text{Alt}^2(\Cbb^2)$ is the trivial representation $\Cbb\idty$.
\end{tensor_boxed_example}

Next, let us consider a different representation for $SU(2)$  acting on two qubits that is obtained by taking two copies of the fundamental representation, i.e.,  $g\mapsto U_g\otimes U_g$. As shown in Example~\ref{box:tensor rep of group SU(2)}, the tensor representation of $SU(2)$ also takes a block diagonal form  $\text{Sym}^2(\Cbb^2)\oplus \text{Alt}^2(\Cbb^2)$ with $\text{Sym}^2(\Cbb^2)$ the symmetric subspace and $\text{Alt}^2(\Cbb^2)$ the antisymmetric subspace. Here it is fundamental to remark that while the block structure of the adjoint representation and the tensor representations \textit{look} the same, they are expressed in completely different bases. The adjoint representation is diagonalized in the Pauli basis $\{X,Y,Z,\idty\}$, while the tensor one is diagonalized in the symmetric and antisymmetric basis $\{\ket{11}, \ket{01}+\ket{10}, \ket{00}, \ket{10}-\ket{01}\}$. Notably, the similarity between these two representation predicts a fact that we will prove later: these two representations are equivalent.

At this point we find it instructive to revisit the task of classifying single-qubit states according to their purity. As we will see, the results we  obtained from representation theory can lead to some extremely powerful insights into how symmetries play a key role in QML (for a more formal treatment see~\cite{larocca2022group}).  First, we recall that, as shown in Fig.~\ref{fig:fig2}(b), the goal is to classify single-qubit pure states from single-qubit mixed states. Here we know that the labels are  invariant under the action of any unitary, meaning that $G=\{U\in SU(2)\}$ (we can then trivially generalize for $G=\{U\in U(2)\}$). 

As mentioned previously, we want to build QML models respecting the symmetry of the dataset. First, let us consider a QML model that is a special case of Eq.~\eqref{eq:QML-model} where we only act on a single copy of each state in the dataset ($k=1$), and where the parametrized channel is the identity (here we seek to find an optimal measurement operator). These type of experiments are commonly known as conventional experiments (ones with no quantum memory)~\cite{huang2021quantum,larocca2022group} (see Fig.~\ref{fig:fig7}(a)). Thus, we have a QML model of the form \begin{equation}\label{eq:conv}
    h(\rho_i)=\Tr[\rho_i M]\,,
\end{equation}
where $M$ is a single-qubit Hermitian measurement operator. Then, the principles of GQML indicate that if $h$ is to respect the symmetries of the dataset, one needs to  pick an invariant measurement operator (see Eq.~\eqref{eq:inv}), i.e., an operator $M$ such that  
\begin{equation}\label{eq:inv2}
    [M,U_g]= 0\,,  \quad \forall U_g\in SU(2)\,.
\end{equation}
Since we know that we are working with the fundamental representation of $SU(2)$, we can use Example~\ref{box:fundamental rep of group SU(2)}, and more specifically the fact that a unitary can be expressed as  $ U_g = c_0 \idty +  i(c_1 X + c_2 Y + c_3 Z)$ with $\abs{(c_0,c_1,c_2,c_3)} = 1$, to see that the \textit{only} operators that commutes with any unitary in $SU(2)$ are those proportional to the identity. Combining this result with Eqs.~\eqref{eq:conv} and ~\eqref{eq:inv2} shows that the only GQML model that is invariant under $SU(2)$ is $h(\rho_i)=\Tr[\rho_i  (c\idty)]=c\Tr[\rho_i]=c$ which is clearly incapable of classifying the data. The previous can be further understood from the fact that the fundamental representation of $SU(2)$ admits no (non-trivial) block diagonal representation, i.e., the fundamental representation of $SU(2)$ is \textit{irreducible}. As shown in Fig.~\ref{fig:fig7}(a), this absence of a block diagonal structure implies that the only possible measurement operator is the identity. This idea will return at a higher level when we learn Schur's lemma (Theorem~\ref{lemma:Schur's lemma}), which states that the only maps from an irrep to itself that commute with every unitary are scalar multiples of the identity $c\idty$.

\begin{figure}
    \centering
    \includegraphics[width=1\columnwidth]{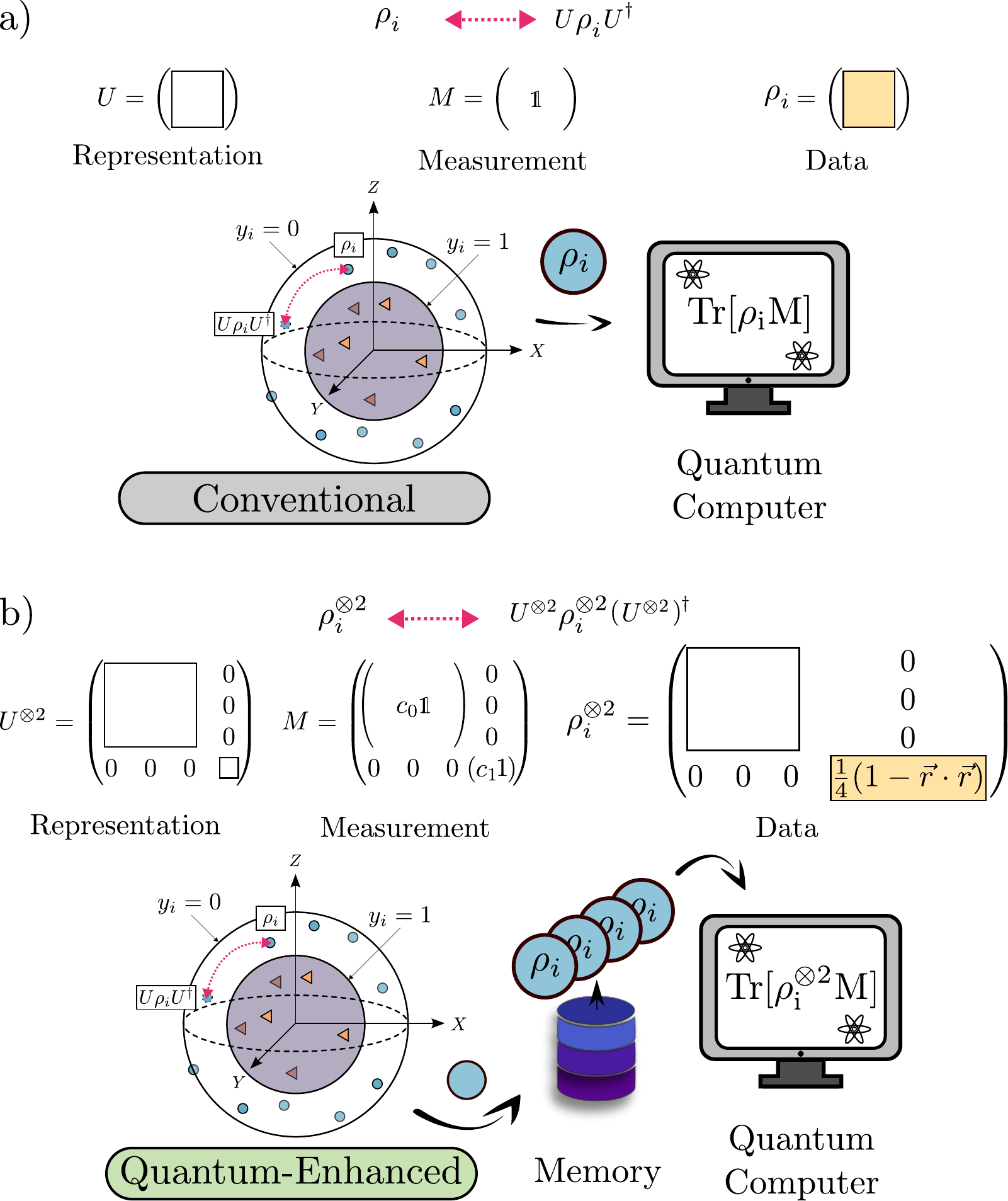}
    \caption{\textbf{Fundamental and tensor representation of $SU(2)$ and the task of classifying single-qubit states according to their purity.} a) In a conventional experiment we want to classify the data by computing $h(\rho_i)=\Tr[\rho_i M]$. Here, we are working with the fundamental representation of  $G=SU(2)$, which admits no non-trivial block diagonal structure (is irreducible). Using Eq.~\eqref{eq:inv2} we find that the only possible measurement operator is $M\propto\idty$, in which case the model  cannot classify the data.  b) In a quantum-enhanced experiment we want to classify the data by computing $h(\rho_i)=\Tr[\rho_i^{\otimes 2} M]$. Here, we are working with the tensor representation of $G=SU(2)$, which admits a block diagonal structure (is reducible) in the symmetric-antisymmetric basis (see Example~\ref{box:tensor rep of group SU(2)}). Now the operators $M$ satisfying  Eq.~\eqref{eq:inv22} are those that take  the form  $M=c_0\idty_3\bigoplus c_1 1$. This will be revisited in greater generality when we reach the commutant structure theorem \ref{thm:commutant structure} and Schur-Weyl Duality \ref{sec:Schur-Weyl Duality}.  Notably, by writing $\rho_i=\frac{1}{2}(\idty+\vec{r}\cdot\vec{\sigma})$, we can see that in the same  symmetric-antisymmetric basis  $\rho_i^{\otimes 2}$ has a component in the antisymmetric space given by $\frac{1}{4}(1-\vec{r}\cdot\vec{r})$.  }
    \label{fig:fig7}
\end{figure}

Next, let us consider a QML model as in Eq.~\eqref{eq:QML-model} where we are allowed to act on two copies of the data states. These are known as  quantum-enhanced experiments (one with quantum memories)~\cite{huang2021quantum,larocca2022group} (see Fig.~\ref{fig:fig7}(b)). That is,
\begin{equation}\label{eq:conv2}
    h(\rho_i)=\Tr[\rho_i^{\otimes 2} M]\,,
\end{equation}
where $M$ is a Hermitian operator on two qubits. Now, the representation of the symmetry group is given by $U_g\otimes U_g$, and thus  Eq.~\eqref{eq:inv} indicates that we need to pick a measurement operator such that 
\begin{equation}\label{eq:inv22}
    [M,U_g^{\otimes 2}]= 0\,,  \quad \forall U_g\in SU(2)\,.
\end{equation}
Take a look at the block diagonal structure in the symmetric-antisymmetric basis of $U_g^{\otimes 2}$ in Example~\ref{box:tensor rep of group SU(2)}. We will later learn in Example \ref{box:reducibility of tensor rep of SU(2)} that this representation cannot be further block diagonalized and the symmetric and antisymmetric spaces are irreducible---for now, take this for granted. Schur's lemma will then imply that $M$ must be of the form $M=c_0\idty_3\bigoplus c_1 1$, where $\idty_3$ denotes the $3\times 3$ identity. This can be better visualized from the block structure of $U^{\otimes 2}$ and of $M$ in Fig.~\ref{fig:fig7}(b). For instance, the choice $c_0=c_1=1$ leads to $M=\idty$, whereas $c_0=1$ and $c_1=-1$ leads to $M=\SWAP$, both of which an be trivially verified to commute with $U_g^{\otimes 2}$.
In particular, writing $\rho_i=\frac{1}{2}(\idty+\vec{r}\cdot\vec{\sigma})$, and expanding $\rho_i^{\otimes 2}$ in the symmetric-antisymmetric basis shows that the component in the antisymmetric space is $\frac{1}{4}(1-\vec{r}\cdot\vec{r})$. Hence, we can readily see that the special choice $c_0=0$ and $c_1=1$ leads to the outcome of the GQML model in Eq.~\eqref{eq:conv} to be $h(\rho_i)=\frac{1}{4}(1-\vec{r}\cdot\vec{r})$. Clearly, this GQML model can indeed classify the data in the purity dataset.

The previous example is extremely rich and has quite a few points worth highlighting. First, we note that a QML model acting on $k$-copies of $\rho_i$ induces a $k$-th order tensor representation of the symmetry group. For instance in the purity dataset where $G=SU(2)$, a QML model with a single copy of $\rho_i$ leads to the fundamental representation which has no non-trivial block diagonal structure (the  representation is irreducible), while a model working with $k=2$ copies leads to the tensor representation which admits a non-trivial block diagonal structure (the representation is \textit{reducible}). \textit{Understanding the block diagonal structure of the representation is crucial as its presence (or absence)  imposes restrictions on the  operators $M$ we can use in a GQML model} (see Fig.~\ref{fig:fig7}, were a block diagonal $U_g^{\otimes 2}$ implies a block diagonal $M$ with identities of different sizes). Finally, since we  compute the expectation value of $M$ over $\rho_i$, \textit{the block diagonal structure of $M$ dictates what is the information that we can access from the state}. For a conventional experiment, we saw that the only accessible information was the norm of $\rho_i$, which is not useful to solve the classification task. On the other hand, in a quantum-enhanced experiment we found that  $\rho_i^{\otimes 2}$ also admits a block diagonal structure, and that $M$ can precisely access the block that encodes the information about the purity of the quantum state. This will be a running theme, the \textit{ostinato} if you will, for the rest of this work and in GQML: \textit{symmetries are all about block diagonalization.}

\subsection{There and back again: Lie groups and algebras} \label{sec:there and back again: lie groups and algebras}
In this section, we unravel the mathematical correspondences between a Lie group $G$ and its Lie algebra $\liea$.
While many symmetries arise as Lie group symmetries, Lie groups can be somewhat unwieldy objects.
Manifolds are complicated, and their often highly nonlinear structure requires sophisticated analysis to probe.
But Lie algebras, which describe ``derivatives of paths in the Lie group'', are ultimately vector spaces and thus vulnerable to the mighty hammers of linear algebra.
A natural program arises from this observation:
\begin{center}
    \textit{When we have a problem with Lie group symmetry, pass to the Lie algebra, analyze it, and return to the Lie group.}
\end{center}
(This idea will be key in our examples below, so keep it in mind!)
The key to passing between the two is the \textit{exponential map}, which is a local diffeomorphism (a smooth map with smooth inverse) between $\liea$ and $G$ nearby $\idty\in G$ (Theorem \ref{thm:exp is a local diffeo}).
But again, symmetries are ultimately representations of the group $G$, not the group itself.
To pass between representations of $G$ and of $\liea$, we have two major theorems: Theorem \ref{thm:lie group reps yield lie alg reps} guarantees that every Lie group representation induces a Lie algebra representation, and Theorem \ref{thm:lie alg reps lift to simple lie group reps} provides a partial converse wherein Lie algebra representations locally lift to Lie group representations.
Our three crucial representations of $SU(2)$---the fundamental representation Example~\ref{box:fundamental rep of group SU(2)}, the adjoint representation Example~\ref{box:adjoint rep of group SU(2)}, and the tensor representation Example~\ref{box:tensor rep of group SU(2)}---will continue serving us as we work through the theory and should paint a vivid picture of this correspondence.

Formally, a \textit{Lie algebra} is a vector space $\liea$ over a field $\mathbb{F} \in \{\Cbb,\Rbb\}$ (for us usually over $\Cbb$) with a \textit{Lie bracket} $[\cdot,\cdot]: \liea\times \liea \to \liea$, which satisfies the following axioms holding for all $X_1,X_2,X_2\in \liea$ and $a,b\in \mathbb{F}$,
\begin{enumerate}
    \item \textit{Antisymmetry:} $[X_2,X_2] = -[X_2,X_1]$.
    \item \textit{Bilinearity:} $[aX_1+bX_2,X_3] = a[X_1,X_3] + b[X_2,X_3]$.
    \item \textit{Jacobi Identity:} $[[X_1,X_2],X_3] + [[X_2,X_3],X_1] + [[X_2,X_1],X_2] = 0$.
\end{enumerate} Note that the second condition is called bilinearity because linearity in one component implies linearity in the other by antisymmetry.
Crucially, Lie algebras are not in general associative, meaning we often have $[[X_1,X_2],X_3] \neq [X_1,[X_2,X_2]]$.
The Jacobi Identity acts as a sort of weaker form of associativity in its stead.
We say two elements $X_1,X_2\in \liea$ \textit{commute} if $[X_1,X_2]=0$. 
If every pair of elements in $\liea$ commutes, then we call $\liea$ a \textit{commutative} Lie algebra.
Note that any associative algebra $\mathcal{A}$, like matrices $M_n(\Cbb)$ or other $C^*$ algebras, give rise to a natural Lie algebra when we take $[\cdot,\cdot]$ to be the commutator.
Physicists often leverage this and describe Lie algebras by simply listing generators and commutation relations (via structure constants): for instance, the Lie algebra $\mathfrak{so}(3)$ can be expressed by writing angular momentum operators $L_X,L_Y,L_Z$ and their commutation relations $[L_\ell,L_m] = i\hbar \sum_{n} \varepsilon_{\ell m n} L_n$.

If $G$ is a matrix Lie group\footnote{More generally, when $G$ is a Lie group, the Lie algebra is $\liea = T_\idty G$ with Lie bracket given by the vector field commutator of the vectors $X,Y$ pushed forward by left-multiplication: $[X,Y] = [LX,LY]_\idty$ for all $X,Y\in \liea$.}, then the Lie algebra corresponding to $G$ is $\liea = T_\idty(G)$, the tangent space to the identity $\idty\in G$, together with the matrix commutator (see Fig.~\ref{fig:fig5}). 
Equivalently, $\liea$ is the set of all matrices $X$ such that $e^{tX}$ is in $G$ for all real numbers $t$.
Paths $e^{tX}$ are called \textit{one-parameter subgroups}.
This means that we can recover any $X\in \liea$ by starting with a one-parameter subgroup $e^{tX}\subseteq G$ and taking derivatives at the identity $\idty = e^0$: 
\begin{equation}
    X = \frac{d}{dt} (e^{tX})\Big\vert_{t=0} .
\end{equation} This is what we mean when we say the Lie algebra is ``the vector space of directional derivatives at $\idty\in G$'', which is just a rephrasing of $\liea = T_\idty(G)$ (see Examples \ref{box:fundamental rep of group SU(2)} and \ref{box:fundamental rep of alg su(2)} to see this in action).

\begin{figure} 
    \centering
    \includegraphics[width=0.7\columnwidth]{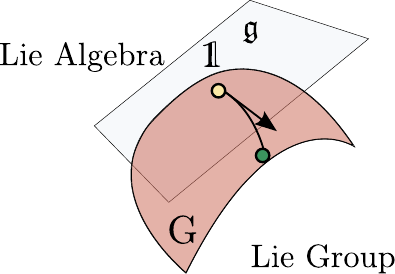}
    \caption{\textbf{Lie group and Lie algebra.}
    The Lie algebra $\mathfrak{g}$ is defined as the tangent plane at identity of the manifold (Lie group) $G$. 
    So, paths in $G$ can be differentiated at $\idty$ to give directional derivatives in $\mathfrak{g}$, and paths in $\liea$ can be exponentiated to give paths in $G$, just as in solutions to linear ordinary differential equations.
    For instance if for an element of the algebra $X\in G$ we define a path $g=e^{\theta X}\in G$, then $\frac{d}{d\theta} g(\theta) \vert_{\theta = 0} = X e^{\theta X} \vert_{\theta = 0} =X$.
    }
    \label{fig:fig5}
\end{figure}

Let us present a different way of working with tangent space using a computational trick: any Lie group element $g\in G$ nearby $\idty$ can be Taylor expanded $g = \idty + \varepsilon X + O(\varepsilon^2)$ where $X\in \liea$. 
Then matrix equations defining the Lie group $G$ induce equations defining the Lie algebra $\liea$: for instance, if $G$ is a unitary group, $g^\dagger g = \idty$ and so
\begin{equation}\label{eqn:computational trick for Lie groups to Lie algebras}
    (\idty + \varepsilon X + \OC(\varepsilon^2))^\dagger(\idty + \varepsilon X + \OC(\varepsilon^2)) = \idty.
\end{equation} Equating $\varepsilon$ terms, we see that $X^\dagger = -X$, meaning the Lie algebra $\mathfrak{u}(d)$ of a unitary group $U(d)$ consists of skew-Hermitian matrices with the standard matrix commutator $[\cdot,\cdot]$.
The same method can be applied to find that when $G = O(d; \R)$, then $\liea = \{X \in M_d(\R) : X^T + X = 0\}$, or when $G = GL(d; \Cbb)$, then $\liea = \mathfrak{gl}(d;\Cbb) := M_d(\Cbb)$.
Including the condition ``special'' means that $\det(g) = 1$, and since $\det(e^{tX}) = e^{\Tr[tX]}$, we have the condition $\Tr[X] = 0$ on the Lie algebra.
For instance, one finds $\su(d) = \{X\in \mathfrak{u}(d): \Tr[X] = 0\}$.

The tangent space definition, depicted pictorially in Fig.~\ref{fig:fig5} tells us quite a bit geometrically.
For one, it immediately tells us that $G$ and $\liea$ have the same dimension as manifolds (note that any finite dimensional vector space is automatically a manifold, where its dimension as a manifold is its dimension as a real vector space).
It also tells us that we can think of the Lie algebra $\liea$ as the vector space of directional derivatives arising from paths in the Lie group $G$.
Crucially, this means that when we have symmetries arising as representations of $G$, we can take derivatives and work on representations of the Lie algebra. 
In practice it is often easier to work on the Lie algebra: \textit{while $G$ is generically not a vector space, $\liea$ is, and so the tools of linear algebra are at our disposal}.
We need then to understand how to properly move between $G$ and $\liea$ and their representations.
The series of following definitions, propositions, and theorems provides a bare-bones toolkit for this program which we will later use to construct equivariant maps as in Eq~\eqref{eq:equiv}. 
Again, the beating heart of our treatment here is through the boxed examples and the figures: follow along with those to get a feel for how the toolkit works.

The exponential map is central to moving between $G$ and $\liea$. 
Recall a \textit{diffeomorphism} is a smooth map with smooth inverse.
The ``local'' condition here is crucial: either injectivity or surjectivity may fail globally.
\begin{theorem}[Exponential map is a local diffeomorphism]\label{thm:exp is a local diffeo}
The exponential map $e^{(\cdot)}: \liea \to G$ is locally a diffeomorphism nearby $0\in \liea$, meaning there exists open regions $U\subseteq \liea$ with $0\in U$ and $V\subseteq G$ with $\idty \in V$ such that $e^{(\cdot)}:U\to V$ is a diffeomorphism.
\end{theorem}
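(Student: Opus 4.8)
The plan is to invoke the Inverse Function Theorem. Since $\liea$ is a finite-dimensional real vector space (hence canonically a smooth manifold) and $G$ is an embedded submanifold of $GL(d,\Cbb)$ of the same dimension as $\liea$ (as noted from the tangent space identification $\liea = T_\idty(G)$), it suffices to show that $e^{(\cdot)}$ is smooth and that its differential at the origin $d(e^{(\cdot)})_0 : T_0\liea \to T_\idty G$ is a linear isomorphism. Once these two facts are in hand, the conclusion is immediate.

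First I would establish smoothness. The map $X \mapsto e^X = \sum_{k=0}^\infty X^k/k!$ is given by a power series converging absolutely and uniformly on every bounded subset of $M_d(\Cbb)$, so it is smooth (in fact real-analytic) as a map $\liea \to GL(d,\Cbb)$. By the very definition of the Lie algebra as $\liea = \{X : e^{tX}\in G \text{ for all } t\in\Rbb\}$, this map takes values in $G$, and since $G$ is an embedded submanifold, $e^{(\cdot)}$ is smooth as a map into $G$ as well.

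Next I would compute the differential at $0$. Identifying $T_0\liea \cong \liea$ and recalling $T_\idty G = \liea$, I would evaluate $d(e^{(\cdot)})_0$ on an arbitrary $X\in\liea$ using the straight-line path $t\mapsto tX$ passing through the origin at $t=0$:
\begin{equation}
    d(e^{(\cdot)})_0(X) = \frac{d}{dt}\Big\vert_{t=0} e^{tX} = X.
\end{equation}
Thus $d(e^{(\cdot)})_0 = \mathrm{id}_\liea$, which is manifestly invertible. Note this reuses exactly the defining relation $X = \frac{d}{dt}(e^{tX})\vert_{t=0}$ for elements of the Lie algebra given earlier in the text.

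With smoothness and invertibility of the differential established, the Inverse Function Theorem yields open neighborhoods $U\subseteq\liea$ of $0$ and $V\subseteq G$ of $\idty$ such that $e^{(\cdot)}:U\to V$ is a diffeomorphism, completing the argument. The main obstacle is not the differential computation, which is a one-line curve calculation, but rather the care needed to justify that the Inverse Function Theorem applies in the manifold setting: one must confirm that $\liea$ and $G$ carry compatible smooth structures of equal dimension and that $e^{(\cdot)}$ is genuinely smooth as a map \emph{between these manifolds}, so that the classical Euclidean statement transfers through local charts. This is precisely the place where the embedded submanifold structure of the matrix Lie group $G$ is doing the heavy lifting.
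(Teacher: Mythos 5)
Your argument is correct as it stands, but only relative to a background fact that you (following the paper) take as given, and it is worth seeing how this differs from the proof the paper is implicitly citing. The paper states this theorem without proof, deferring to Hall~\cite{hall2013lie}; there, the result is established for a closed subgroup $G\subseteq GL(d,\Cbb)$ \emph{before} one knows that $G$ is a smooth manifold: one splits $M_d(\Cbb) = \liea \oplus D$ into the Lie algebra and a complementary subspace, shows that $(X,Y)\mapsto e^X e^Y$ is a local diffeomorphism of $M_d(\Cbb)$ near $0$ by the same Inverse Function Theorem computation you use, and then runs a sequence-compactness argument --- crucially exploiting the \emph{closedness} of $G$ from the definition of a matrix Lie group --- to show that every element of $G$ sufficiently near $\idty$ is of the form $e^X$ with $X\in\liea$. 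The embedded-submanifold structure of $G$, with $T_\idty G = \liea$, is then a \emph{corollary} of this theorem (the exponential map furnishes the charts), not an input to it. Your proof inverts that logical order: you presuppose that $G$ is an embedded submanifold of dimension $\dim\liea$ with tangent space $\liea$ at the identity, which is exactly Cartan's closed subgroup theorem, and in the standard development that theorem is deduced from the statement you are proving. Within the paper's exposition --- which does assert the manifold structure of matrix Lie groups as known background before this theorem appears --- your argument is complete, and your differential computation $d(e^{(\cdot)})_0 = \mathrm{id}_{\liea}$ via the curve $t\mapsto tX$ is exactly right; but as a from-scratch proof for closed matrix groups it would be circular. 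A telltale symptom is that your proof never invokes the closedness hypothesis at all: the heavy lifting has been silently absorbed into the assumed smooth structure, which is precisely what the closedness is needed to produce.
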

Notice that by continuity, since the vector space $\liea$ is connected, the exponential map can only map into the connected component of $G$ containing $\idty$. 
This immediately knocks out the possibility of $\exp(\cdot)$ being onto $G$ when $G$ is disconnected.
A key example of this type are the orthogonal groups $O(d)$, which have 2 disconnected pieces (which can be seen by realizing that $\det:O(d)\to \{-1,1\}$ is a continuous map and so the preimages of $-1$ and $1$ must be disconnected).
But even when $G$ is connected, we are not guaranteed that the exponential map is surjective.
So---when is it surjective?
In general, that is a nuanced question, but the proposition below gives one useful instance where this is so.
\begin{proposition}\label{prop:exp is surjective when G compact}
If $G$ is a compact connected Lie group, then the exponential map is surjective.
\end{proposition}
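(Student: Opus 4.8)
The plan is to route the argument through Riemannian geometry, exploiting that compactness equips $G$ with a rigid metric structure in which one-parameter subgroups become geodesics. The strategy has three movements: build a bi-invariant metric, identify its geodesics through $\idty$ with the curves $t\mapsto e^{tX}$, and then use completeness to reach every point of $G$.

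First, I would construct a \emph{bi-invariant} Riemannian metric. Starting from any inner product on the Lie algebra $\liea = T_\idty(G)$, I average it over the adjoint action $\Adj$ to obtain an $\Adj$-invariant inner product on $\liea$; compactness is exactly what makes this possible, since it guarantees a finite invariant (Haar) measure over which to average (this integration is discussed later in the treatment of Haar integration). Spreading this inner product over $G$ by left translation then yields a metric that is invariant under both left and right multiplication.

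Second---and this is the computational heart of the argument---I would show that for a bi-invariant metric the geodesics emanating from $\idty$ are precisely the one-parameter subgroups $\gamma_X(t) = e^{tX}$. Concretely, one computes that the Levi-Civita connection of such a metric obeys $\nabla_X Y = \tfrac{1}{2}[X,Y]$ on left-invariant vector fields, so that $\nabla_{\dot\gamma_X}\dot\gamma_X = 0$; hence each $\gamma_X$ solves the geodesic equation. Finally, I would invoke the Hopf--Rinow theorem: a compact Riemannian manifold is complete, hence geodesically complete, and any two of its points are joined by a minimizing geodesic. Since $G$ is connected, for an arbitrary target $g\in G$ there is a geodesic from $\idty$ to $g$, which by the previous step must be of the form $t\mapsto e^{tX}$; evaluating at its endpoint (and rescaling the parameter) gives $g = e^{X}$ for some $X\in\liea$, establishing surjectivity.

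The main obstacle is the second step: verifying that the bi-invariant geodesics are one-parameter subgroups requires the connection computation and genuinely uses bi-invariance, as left-invariance alone is insufficient. An alternative route uses the maximal torus theorem---every element of a compact connected group lies in a maximal torus $e^{\mathfrak{t}}$, on which the exponential map is manifestly onto---but I would avoid it here, since that theorem is typically established via this very surjectivity (or via fixed-point topology) and thus risks circularity in a first exposition.
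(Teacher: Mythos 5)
Your proof is correct, but there is nothing in the paper to compare it against: the paper states this proposition as a known fact of Lie theory without giving any proof, deferring (as it does throughout) to standard references such as Hall~\cite{hall2013lie}. Your argument is the classical Riemannian-geometric proof found in those references, and each of its steps is sound. Compactness furnishes a finite Haar measure, so averaging an arbitrary inner product over the adjoint action yields an $\mathrm{Ad}$-invariant inner product on $\mathfrak{g}$, which left-translates to a bi-invariant metric on $G$; the Levi-Civita connection of a bi-invariant metric satisfies $\nabla_X Y = \tfrac{1}{2}[X,Y]$ on left-invariant fields, so $\nabla_{\dot\gamma_X}\dot\gamma_X = \tfrac{1}{2}[X,X] = 0$ and the one-parameter subgroups $t\mapsto e^{tX}$ are precisely the geodesics through $\idty$ --- equivalently, the Lie-theoretic exponential coincides with the Riemannian exponential at the identity, which is the hinge of the whole argument; and Hopf--Rinow (compact implies complete, plus connectedness) then joins $\idty$ to any $g\in G$ by a geodesic, giving $g = e^{X}$ for some $X\in\mathfrak{g}$. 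Your caution about the maximal-torus alternative is also well placed: the maximal torus theorem is commonly proved either from this very surjectivity statement or from nontrivial fixed-point arguments, so invoking it here would risk circularity. In short, the proposal supplies a complete and standard proof of a statement the paper leaves as a black box.
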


In practice however, this is not too terrible a constraint: when $G$ is connected, we can still write Lie group elements as products of exponentials of Lie algebra elements, thanks to the next proposition.
And if $G$ consists of finitely many disconnected components, one can easily work on the connected component and then move to other components by acting via a finite group. 
The prototype here is $O(d)$, which consists of $SO(d)$ and $O(d)-SO(d)$. 
We can use Lie theory to work on $SO(d)$, then extend our analysis to $O(d)-SO(d)$ by multiplying by any map of determinant $-1$.

Let us now consider the following proposition:
\begin{proposition}\label{prop:lie group elements as products of exps of lie algebra elements}
Let $G$ be a connected Lie group and let $g\in G$.
Then we can find $X_1,X_2,\dots,X_m\in \liea$ such that
\[
    g = e^{X_1}e^{X_2}\dots e^{X_m}.
\]
\end{proposition}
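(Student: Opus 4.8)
The plan is to exploit the standard fact that in a connected topological group every open subgroup coincides with the whole group, combined with the local surjectivity of the exponential map guaranteed by Theorem~\ref{thm:exp is a local diffeo}. Crucially, this argument uses only connectedness and not compactness (contrast Proposition~\ref{prop:exp is surjective when G compact}); the price we pay for dropping compactness is that we land on a \emph{product} of exponentials rather than a single one.

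First I would define the set of all finite products of exponentials, $H = \{e^{X_1}\cdots e^{X_m} : m\geq 1,\ X_i\in\liea\}$, and check directly that it is a subgroup of $G$. It contains the identity $\idty = e^0$; it is closed under multiplication since concatenating two such products yields another; and it is closed under inversion because $(e^{X_1}\cdots e^{X_m})^{-1} = e^{-X_m}\cdots e^{-X_1}$, again a product of exponentials. The proposition is then exactly the claim that $H = G$.

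Next I would show $H$ is open. By Theorem~\ref{thm:exp is a local diffeo} there is an open neighborhood $V$ of $\idty$ in $G$ on which the exponential is a diffeomorphism, so every element of $V$ has the form $e^X$ and hence $V\subseteq H$. For any $h\in H$, left translation $x\mapsto hx$ is a homeomorphism of $G$, so $hV$ is open, contains $h$, and lies in $H$ because $H$ is a group containing $V$. Writing $H = \bigcup_{h\in H} hV$ exhibits $H$ as a union of open sets, so $H$ is open.

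The crux of the argument is to upgrade openness to clopenness and then invoke connectedness, and this is where essentially all of the content lives; everything above is bookkeeping. I would observe that the complement $G\setminus H$ is a union of left cosets $gH$ with $g\notin H$, and each such coset is open, again as a left translate of the open set $H$. Hence $G\setminus H$ is open and $H$ is closed. Since $G$ is connected and $H$ is a nonempty subset that is simultaneously open and closed, it must be all of $G$, which yields the desired factorization of any $g\in G$.
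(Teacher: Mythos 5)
Your proof is correct, but it is a genuinely different route from the paper's, because the paper does not actually prove this proposition: it defers entirely to the literature, noting that Hall~\cite{hall2013lie} establishes the statement for matrix Lie groups and that a result in Phillips~\cite{phillips1994how} (attributed there to Djokovi\'c) extends it to arbitrary connected Lie groups with the sharper conclusion that one may always take $m\leq 3$. Your argument---let $H$ be the set of all finite products of exponentials, check it is a subgroup, show it is open because it contains the neighborhood $V$ of $\idty$ furnished by Theorem~\ref{thm:exp is a local diffeo} and equals $\bigcup_{h\in H}hV$, then show it is closed because its complement is a union of open cosets, and finally invoke connectedness to conclude $H=G$---is the standard ``a nonempty open subgroup of a connected topological group is the whole group'' argument, and it is complete and valid in the stated generality (any connected Lie group, not just matrix groups, since the exponential map is a local diffeomorphism there as well). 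What your approach buys is a short, self-contained, purely topological proof of the qualitative statement, using nothing beyond Theorem~\ref{thm:exp is a local diffeo} and connectedness; what it does not recover is the quantitative refinement $m\leq 3$ that the paper gets from its citation, which requires substantially more work and is what makes, e.g., the Euler-angle decomposition $U=e^{-i\phi_1 X}e^{-i\phi_2 Y}e^{-i\phi_3 Z}$ an instance of the sharper theorem rather than an output of your argument (your proof gives no control on $m$ whatsoever).
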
 Note that while in Hall \cite{hall2013lie} the result is only stated for matrix Lie groups, a large result in Phillips \cite{phillips1994how} (which Phillips attributes to Djokovi\'c) in fact ensures this holds for any connected Lie group, and $m$ can be chosen to be $m\leq 3$. An explicit example of this appears when one uses Euler angles to decompose any single qubit unitary into three rotations $U = e^{-i\phi_1 X}e^{-i\phi_2 Y} e^{-i\phi_3 Z}$.

Now, for the quantum practioner, we need to know how the correspondence furnished by the exponential map passes through to representations---remember, physical symmetries come from representations of groups, not the groups themselves.
First, a definition.
Notice the similarity to Lie group representation: Lie group representations are homomorphisms into $GL(V)$, the invertible matrices from $V\to V$, while Lie algebra representations are homomorphisms into $\mathfrak{gl}(V)$, the matrices from $V\to V$ with the commutator $[\cdot,\cdot]$ as the Lie bracket. 
\begin{definition}\label{def:lie representation}
Let $\liea$ be a Lie algebra and $V$ be a finite dimensional vector space. 
A \emph{representation} $r$ of $\liea$ acting on $V$ is a map $r:\liea \to \mathfrak{gl}(V)$ that is a Lie algebra homomorphism, a linear map satisfying 
\[
    r\paran{[X,Y]} = [r(X),r(Y)], \qquad \text{for all }X,Y\in \liea .
\] The dimension of the representation $r$ is defined by $\dim(r)=\dim(V)$.
\end{definition} Every example in this section consists of Lie algebra representations. 
All three of these arise from their corresponding Lie group representations: the key, perhaps unsurprisingly, is the exponential map combined with the tangent space definition of a Lie algebra.

We now present a series of theorems which capture the relationship between Lie group and Lie algebra representations.
First, every matrix Lie group representation gives rise to a Lie algebra representation.
\begin{theorem}[Lie group reps induce Lie algebra reps]\label{thm:lie group reps yield lie alg reps}
Let $G$ be a matrix Lie group with Lie algebra $\liea$.
If $R$ is a representation of $G$ on $V$, then there exists a unique representation $r$ of $\liea$ on $V$ given by
\[
    r(X) = \frac{d}{dt}\paran{R(e^{tX})}\Big\vert_{t=0}, \qquad \text{for all } X\in \liea.
\] We call $r$ the representation of $\liea$ induced by $R$.
\end{theorem}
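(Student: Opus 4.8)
The plan is to define $r$ by the formula in the statement, then verify three things in sequence: that $r$ is well-defined (the derivative exists and lands in $\mathfrak{gl}(V)$), that $r$ is a Lie algebra homomorphism, and that $r$ is the unique such map. Throughout, the central tool is Theorem~\ref{thm:exp is a local diffeo}, which lets me translate between one-parameter subgroups $e^{tX}$ in $G$ and the tangent data at the identity.

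\textbf{Well-definedness.} First I would fix $X\in\liea$ and observe that $t\mapsto e^{tX}$ is a smooth path in $G$ through $\idty$ at $t=0$. Since $R$ is a (Lie) group representation, $t\mapsto R(e^{tX})$ is a smooth path in $GL(V)$ through $R(\idty)=\idty_V$, and moreover it is a one-parameter subgroup of $GL(V)$ because $R(e^{sX})R(e^{tX})=R(e^{(s+t)X})$ by the homomorphism property. Thus $R(e^{tX})=e^{t\,r(X)}$ for some matrix $r(X)\in\mathfrak{gl}(V)$, and differentiating at $t=0$ recovers exactly $r(X)=\frac{d}{dt}R(e^{tX})\big\vert_{t=0}$; the derivative exists and lies in $\mathfrak{gl}(V)$, so $r:\liea\to\mathfrak{gl}(V)$ is well-defined. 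Linearity of $r$ I would get from the chain rule applied to the scaling $e^{t(aX)}=e^{(ta)X}$ together with additivity, using that $\liea$ is closed under linear combinations and that $r(X)$ is the generator of the one-parameter subgroup.

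\textbf{Homomorphism property.} This is the step I expect to be the main obstacle, since it requires relating the Lie bracket (a second-order/commutator object) to the first-order derivative $r$. The cleanest route is to use the \emph{adjoint action}: for fixed $g\in G$, conjugation $e^{tX}\mapsto g\,e^{tX}g^{-1}=e^{t\,gXg^{-1}}$ shows $r(gXg^{-1})=R(g)\,r(X)\,R(g)^{-1}$ after applying $R$ and differentiating in $t$. Then I would specialize $g=e^{sY}$ and differentiate the resulting identity in $s$ at $s=0$; the left side produces $r([Y,X])$ via $\frac{d}{ds}e^{sY}Xe^{-sY}\big\vert_{s=0}=[Y,X]$ (using linearity of $r$ to pass the $s$-derivative through), while the right side produces $[r(Y),r(X)]$ by the product rule applied to $R(e^{sY})\,r(X)\,R(e^{sY})^{-1}$. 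Matching the two yields $r([X,Y])=[r(X),r(Y)]$ after an antisymmetry sign bookkeeping. The delicate point is justifying the interchange of the two derivatives and the passage of $\frac{d}{ds}$ through the linear map $r$; both are legitimate because all maps in sight are smooth and $r$ is linear, but this is where care is needed.

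\textbf{Uniqueness.} Finally, suppose $r'$ is another Lie algebra representation satisfying the induced-representation relationship. Because $e^{(\cdot)}$ is a local diffeomorphism near $\idty$ (Theorem~\ref{thm:exp is a local diffeo}), and because any representation of $G$ is determined on a neighborhood of $\idty$ by its derivative at $\idty$, the value $r(X)=\frac{d}{dt}R(e^{tX})\big\vert_{t=0}$ is forced for every $X\in\liea$; hence $r'=r$. Equivalently, the defining formula leaves no freedom, so the induced representation is unique by construction.
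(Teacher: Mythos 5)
The paper itself states this theorem without proof (it defers to Hall~\cite{hall2013lie}), so your proposal is being measured against the standard textbook argument. Your overall architecture matches it: the observation that $t\mapsto R(e^{tX})$ is a one-parameter subgroup of $GL(V)$ and hence equals $e^{t\,r(X)}$ for a unique generator, the conjugation identity $r(gXg^{-1})=R(g)\,r(X)\,R(g)^{-1}$ specialized to $g=e^{sY}$ and differentiated to extract the bracket, and the (correct) remark that uniqueness is automatic because the formula leaves no freedom. Those steps are sound.

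There is, however, one genuine gap: linearity of $r$, specifically additivity $r(X+Y)=r(X)+r(Y)$. You derive homogeneity from the chain rule and then write ``together with additivity,'' which treats the hard half of linearity as given --- the statement is circular as written. Additivity is not trivial because $X$ and $Y$ need not commute, so $e^{t(X+Y)}$ bears no direct relation to $e^{tX}e^{tY}$, and the one-parameter subgroup generated by $X+Y$ cannot be assembled from those generated by $X$ and $Y$ using the homomorphism property alone. Two standard repairs: (i) invoke the Lie product formula together with continuity of $R$,
\begin{equation*}
R\paran{e^{t(X+Y)}}=\lim_{m\to\infty}\paran{R(e^{tX/m})R(e^{tY/m})}^m=\lim_{m\to\infty}\paran{e^{(t/m)r(X)}e^{(t/m)r(Y)}}^m=e^{t(r(X)+r(Y))}\,,
\end{equation*}
and then differentiate at $t=0$; or (ii) use smoothness of $R$ (which is part of the paper's definition of a Lie group homomorphism): by the chain rule $r(X)=\frac{d}{dt}R(e^{tX})\big\vert_{t=0}=dR_{\idty}(X)$, and the differential of a smooth map at a point is automatically linear. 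This gap is not cosmetic, because it propagates: your bracket computation passes $\frac{d}{ds}$ through $r$, which is only legitimate once $r$ is known to be linear, so the homomorphism step rests on exactly the piece that is missing. With either repair inserted, the proof is complete and coincides with the standard one.
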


This next pair of theorems form a partial converse to the previous theorem.\footnote{Lie's third theorem assures us that every finite dimensional Lie algebra is the Lie algebra of a Lie group (not necessarily a matrix Lie group, but often this is the case), so this handles most cases.} 
\begin{theorem}[Lie algebra reps lift to simple Lie group representations]\label{thm:lie alg reps lift to simple lie group reps}
Let $G$ be a simply connected matrix Lie group, and let $r$ be a representation of the corresponding Lie algebra on $V$. 
Then there is a unique representation $R$ of $G$ with the property
\[
    R(e^X) = e^{r(X)} \quad \text{for all } X\in \liea.
\]
\end{theorem}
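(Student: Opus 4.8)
The plan is to build $R$ first locally near $\idty$ using the exponential map, upgrade it to a local homomorphism via the Baker--Campbell--Hausdorff (BCH) formula, and then extend it globally by a path-lifting (monodromy) argument in which simple connectedness does all the work. First I would invoke Theorem~\ref{thm:exp is a local diffeo}: there are open neighborhoods $U\ni 0$ in $\liea$ and $V\ni\idty$ in $G$ with $\exp:U\to V$ a diffeomorphism. On $V$ this lets me \emph{define} $R(e^X):=e^{r(X)}$ for $X\in U$, which is unambiguous precisely because $\exp$ is injective on $U$ and $r$ is a fixed linear map.

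Next I would check that this local $R$ respects multiplication. For $X,Y$ small enough that $X$, $Y$, and $C(X,Y)$ all lie in $U$, the BCH formula gives $e^X e^Y = e^{C(X,Y)}$, where $C(X,Y)=X+Y+\tfrac12[X,Y]+\cdots$ is a universal series built entirely from iterated Lie brackets of $X$ and $Y$. Because $r$ is a Lie algebra homomorphism it commutes with every bracket, so $r(C(X,Y))=C(r(X),r(Y))$; applying BCH on the representation side then yields $R(e^X)R(e^Y)=e^{r(X)}e^{r(Y)}=e^{C(r(X),r(Y))}=e^{r(C(X,Y))}=R(e^Xe^Y)$. Thus $R$ is a genuine (local) homomorphism on a neighborhood of $\idty$.

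The hard part is the global extension, and this is exactly where I expect the main obstacle to lie. Since $G$ is connected, Proposition~\ref{prop:lie group elements as products of exps of lie algebra elements} lets me write any $g\in G$ as $g=e^{X_1}\cdots e^{X_m}$ and tentatively set $R(g)=e^{r(X_1)}\cdots e^{r(X_m)}$; equivalently, I would connect $\idty$ to $g$ by a path $\gamma$, subdivide it so each consecutive pair of points lies in a common translate of $V$, and define $R(\gamma(1))$ as the ordered product of the local values along the subdivision. The difficulty is \emph{well-definedness}: two decompositions of $g$, or two paths to $g$, must give the same matrix. I would handle this by a monodromy argument --- show first that the path construction is invariant under refinement of the subdivision (immediate from the local homomorphism property), and then that it is invariant under homotopy of the path rel endpoints, by pulling back a homotopy to the square $[0,1]^2$, subdividing it finely, and using local consistency cell by cell. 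Simple connectedness of $G$ guarantees that any two paths with the same endpoints are homotopic, which forces all the candidate values to agree. This is precisely the step that fails for a non-simply-connected group such as $SO(3)$, where the attempted lift becomes the double-valued spinor map, motivating the passage to $SU(2)$.

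Finally, uniqueness is comparatively painless: any $R$ satisfying $R(e^X)=e^{r(X)}$ is forced on the neighborhood $V$, and since an open neighborhood of the identity generates a connected group, two such representations must coincide on all of $G$. I would also remark that the global homomorphism property and smoothness of $R$ follow by patching the local homomorphism property across the subdivisions together with the smoothness of $\exp$ and $r$, so the resulting $R$ is a bona fide Lie group representation.
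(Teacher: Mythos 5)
Your proposal is correct, and since the paper states this theorem without proof (deferring to Hall~\cite{hall2013lie}), the right comparison is against that standard reference: your argument---defining $R$ locally through the exponential chart, using the Baker--Campbell--Hausdorff series together with the fact that a Lie algebra homomorphism preserves iterated brackets to get a local homomorphism, then extending globally by subdividing paths and proving homotopy invariance so that simple connectedness forces well-definedness, and finally deducing uniqueness because a neighborhood of $\idty$ generates the connected group---is precisely the monodromy-principle proof given there. Your remark that this is exactly the step that fails for $SO(3)$, producing the double-valued spinor lift, correctly pinpoints why the simply connected hypothesis cannot be dropped and matches the paper's own discussion of the $SU(2)$ versus $SO(3)$ distinction.
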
 
\begin{corollary}[Lie algebra reps locally lift to Lie group reps]\label{cor:lie alg reps locally induce Lie group reps}
Let $G$ be a matrix Lie group, and let $r$ be a representation of the corresponding Lie algebra on $V$. 
Then using Theorem \ref{thm:exp is a local diffeo}, we can always locally define a representation $R$ on $G$ by the mapping 
\[
    R(g) = e^{r(X)} \text{ defined for all } g=e^X \text{ nearby } \idty.
\] Here, by ``nearby'' we mean ``wherever the exponential map is a diffeomorphism''. 
Indeed, in this region, all $g$ can be written as $g=e^X$.
\end{corollary}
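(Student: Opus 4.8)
The plan is to build $R$ by inverting the exponential map locally and composing with $r$, and then to verify the representation axioms, of which only the homomorphism property will demand real effort. First I would invoke Theorem~\ref{thm:exp is a local diffeo} to fix open sets $U\subseteq\liea$ around $0$ and $N\subseteq G$ around $\idty$ on which $e^{(\cdot)}:U\to N$ is a diffeomorphism, with smooth inverse $\log:N\to U$. I would then \emph{define} $R(g):=e^{r(\log g)}$ for $g\in N$, so that by construction $R(e^X)=e^{r(X)}$ for all $X\in U$ (this is the region the statement calls ``nearby $\idty$''). The map $R$ is well defined because the diffeomorphism makes the preimage $X=\log g$ unique, and it is smooth as a composition of the smooth maps $\log$, the linear map $r$, and the matrix exponential; moreover $R(\idty)=e^{r(0)}=\idty$ since $r$ is linear.

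The crux, and the step I expect to be the main obstacle, is the multiplicativity $R(g_1 g_2)=R(g_1)R(g_2)$ for $g_1,g_2$ and $g_1g_2$ all in $N$. The obstruction is that $e^{X_1}e^{X_2}\neq e^{X_1+X_2}$ in general, so I would route through the Baker-Campbell-Hausdorff (BCH) series. Writing $g_1=e^{X_1}$, $g_2=e^{X_2}$, for sufficiently small $X_1,X_2$ one has $e^{X_1}e^{X_2}=e^{Z}$ with
\[
    Z=\mathrm{BCH}(X_1,X_2)=X_1+X_2+\tfrac12[X_1,X_2]+\cdots,
\]
a convergent series each of whose terms is a fixed iterated Lie bracket of $X_1$ and $X_2$. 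The key leverage is that $r$ is a \emph{Lie algebra} homomorphism: since $r$ is linear and $r([A,B])=[r(A),r(B)]$, applying $r$ term by term carries the BCH series of $X_1,X_2$ to that of $r(X_1),r(X_2)$, giving $r(Z)=\mathrm{BCH}(r(X_1),r(X_2))=\log\!\big(e^{r(X_1)}e^{r(X_2)}\big)$. Exponentiating yields $R(g_1g_2)=e^{r(Z)}=e^{r(X_1)}e^{r(X_2)}=R(g_1)R(g_2)$. The delicate points to pin down are the convergence of the BCH series and the validity of applying $r$ termwise; I would handle both by shrinking $N$ so that the series for $X_1,X_2$ and for $r(X_1),r(X_2)$ converge absolutely.

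Finally, I would remark that the same local $R$ can be obtained without BCH, at least when the simply connected group $\widetilde{G}$ with Lie algebra $\liea$ is itself a matrix Lie group: apply Theorem~\ref{thm:lie alg reps lift to simple lie group reps} to $\widetilde{G}$ and transport the resulting global representation down through the covering map, which is a diffeomorphism near $\idty$. This route also clarifies why no global hypothesis appears in the corollary: simple connectivity is used in Theorem~\ref{thm:lie alg reps lift to simple lie group reps} only to glue the local homomorphism into a globally consistent one (i.e.\ to kill monodromy around noncontractible loops), and that gluing is precisely what we relinquish by asking only for a local representation.
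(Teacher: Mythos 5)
Your proposal is correct, and it is worth noting how it relates to what the paper actually provides: the paper states this corollary without proof (as with most results in this section, it defers to Hall), and its statement only sketches the definitional step, namely that Theorem~\ref{thm:exp is a local diffeo} lets one set $R(g)=e^{r(X)}$ for $g=e^X$ near $\idty$. You carry out that same step and then supply the part the paper leaves entirely implicit --- the verification that this locally defined $R$ is multiplicative --- via the Baker--Campbell--Hausdorff series, using that a Lie algebra homomorphism applied termwise to $\mathrm{BCH}(X_1,X_2)$ yields $\mathrm{BCH}(r(X_1),r(X_2))$. This is the standard textbook route (it is essentially how Hall handles local homomorphisms), and your attention to the two delicate points (convergence after applying $r$, which follows since $r$ is bounded on a finite-dimensional space, and shrinking the neighborhood) is appropriate. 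One small technicality you should make explicit: besides requiring $g_1,g_2,g_1g_2\in N$, you need $Z=\mathrm{BCH}(X_1,X_2)$ to land in the domain $U$ so that $\log(g_1g_2)=Z$; this is again handled by shrinking $N$, since $\mathrm{BCH}$ is continuous and $\mathrm{BCH}(0,0)=0$. Your closing remark about the universal cover is also correctly hedged: Theorem~\ref{thm:lie alg reps lift to simple lie group reps} as stated applies to matrix Lie groups, and the universal cover of a matrix Lie group need not be one (e.g.\ the cover of $SL(2,\mathbb{R})$), so the BCH argument is the more self-contained of your two routes.
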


Put together, Theorems \ref{thm:lie group reps yield lie alg reps} and \ref{thm:lie alg reps lift to simple lie group reps} mean that if $G$ is simply connected, there is a one-to-one correspondence between their representations. 
If we relax the simply connected assumption, the power of this theorem weakens, but not too terribly much: the corollary guarantees the existence of a locally defined representation.

An important example of this distinction between the cases for simply connected and non simply connected groups arises naturally in physics: the group of 3D rotations $SO(3)$ is not simply connected, but the spin group $SU(2)$ is simply connected.
They have the same Lie algebra $\mathfrak{so}(3)\cong \su(2)$, and we can take as a basis the orbital angular momentum operators $L_x, L_y, L_z$ (of course, the Pauli operators $X,Y,Z$ also form a basis for this Lie algebra).
It can be shown that this Lie algebra has, up to isomorphism, exactly one irreducible representation each (see Definition \ref{def:reducible}) on $\Cbb^2$ (spin-1/2) and on $\Cbb^3$ (spin-1).\footnote{More generally, it has exactly one irreducible representation on each $\Cbb^d$ where $d$ is any natural number, and they are called spin $s$ $d=2s+1$ representations.} 
Theorem \ref{thm:lie alg reps lift to simple lie group reps} guarantees that both of these representations of $\su(2)$ lift to representations on $SU(2)$, called spin representations. 
But only the spin-1 representation lifts to a representation of $SO(3)$---the spin-1/2 representation does not yield a representation of $SO(3)$. 
This justifies the previously mysterious (at least to us) undergraduate physics statement: ``orbital angular momentum can only have integer quantum numbers, but spin angular momentum can have half-integer quantum numbers'', where quantum numbers really just mean labels of irreducible representations.\footnote{For $SU(2)$, dimension is a sufficient label, in the sense that each dimension has a unique irreducible representation. 
This is not the case for other Lie groups, or even $SU(d)$ for $d>2$.}

Now, returning to our focus: many interesting symmetry groups in quantum machine learning are not simply connected! 
How do we deal with these?
There are two general approaches:
\begin{enumerate}
    \item Use Corollary \ref{cor:lie alg reps locally induce Lie group reps} to work nearby $\idty$.
    For instance, if the only physical symmetries that are realized in a model are perturbations of $\idty$, as may occur in some noisy classification tasks, we do not need the full representation and a local one will suffice.
    Indeed, this is a common situation, especially in physically motivated tasks.
    \item There is a topological construction called the ``universal cover'' $\tilde{G}$ of a Lie group $G$. 
    The universal cover is simply connected by definition, and the quotient group $\tilde{G}/G$ is always a discrete group.
    In practice, this group is often finite and small: in the spin example, we have $\tilde{G}/G = SU(2)/SO(3)\cong \Zbb_2$. 
    The good news about the universal cover is that every representation of $G$ comes from a representation of $\tilde{G}$.
    This means that the only thing that could go wrong is that a Lie algebra representation of $\liea$ may not lift to $G$ due to some obstruction in the discrete quotient group. 
    This just means checking some condition on the quotient group, which for groups like $SO(d)$, only means checking a condition on $\mathbb{Z}_2$---this is exactly what is done to differentiate half-integer spin representations like spin-1/2 from integer spin representations like spin-1.
\end{enumerate}

\begin{fundamental_boxed_example}[label={box:fundamental rep of alg su(2)}]{$\su(2)$ and its fundamental rep (spin-1/2)}
To move from a representation of a Lie group $G$ to a representation of the Lie algebra $\liea$, we take derivatives of paths and evaluate at the identity. 
Let us demonstrate this with the fundamental representation on $V=\Cbb^2$ from earlier in Example \ref{box:fundamental rep of group SU(2)}.
We know that since $SU(2)$ is homeomorphic to the 3-sphere, it is a 3 dimensional real manifold, and so its Lie algebra $\su(2)$ is a 3 dimensional real vector space. 
It thus suffices to find 3 linearly independent tangent vectors, which we can do by taking derivatives of parameterized paths and evaluating at the identity $\idty$.
Since we have explicit rotation paths which are $\idty$ when $\theta = 0$, let us use those:
\begin{align*}
    \frac{d}{d\theta}R_X(\theta)\Big|_{\theta = 0} &= \frac{d}{d\theta} e^{-i\theta X/2} \Big|_{\theta = 0} = \frac{-i}{2} X \\ 
    \frac{d}{d\theta}R_Y(\theta)\Big|_{\theta = 0} &= \frac{d}{d\theta} e^{-i\theta Y/2}\Big|_{\theta = 0} = \frac{-i}{2} Y\\
    \frac{d}{d\theta}R_Z(\theta)\Big|_{\theta = 0} &= \frac{d}{d\theta} e^{-i\theta Z/2}\Big|_{\theta = 0} = \frac{-i}{2} Z\,.
\end{align*} But we of course know Paulis are linearly independent! 
Thus, up to a pesky factor of $-i$, the Pauli spin matrices form a (real) basis for the Lie algebra $\su(2)$, and we now know that the fundamental representation $r: \su(2)\to \mathfrak{gl}(V)$ is given simply by $r_X = \frac{-i}{2}X$, $r_Y = \frac{-i}{2}Y$, and $r_Z = \frac{-i}{2}Z$ (we can drop the factors of $-i/2$ because the representation $V$ is complex and so we can rescale).
So just like the case for $U:G\to GL(V)$, the matrix in $\liea$ is the same as its representative.

Thanks to Theorem \ref{thm:lie alg reps lift to simple lie group reps}, we now realize that this representation on the Lie algebra basis $X,Y,Z$ lifts to the fundamental representation of $SU(2)$ on $\Cbb^2$. This justifies calling both the $SU(2)$ and $\su(2)$ representations here the ``spin-1/2'' representation.
\end{fundamental_boxed_example}

\begin{blue_boxed_example}[label={box:adjoint rep of alg su(2)}]{$A\mapsto {[X, A ]}$ and the adjoint rep of $\su(2)$}
Return to the $G = SU(2)$ representation $V = M_2(\Cbb)$ given by $A\mapsto U_g A U_g^\dagger$, which decomposes as $V = \su(2) \oplus \Cbb\idty$.
Let us again use differentiation to see how this passes to the Lie algebra $\liea = \su(2)$: any element in $SU(2)$ nearby $\idty$ can be reached by a smooth parameterized path $U_g(\theta) = e^{i \theta W}$ with $W\in \su(2)$, noticing that $U_g(0) = \idty$.
Then 
\begin{align*}
\frac{d}{d\theta}\paran{U_g(\theta) A U_g^\dagger(\theta)} \Big|_{\theta = 0} &= \frac{d}{d\theta}\paran{e^{i\theta H} A e^{-i\theta H}} \Big|_{\theta = 0} \\
&= i\Big(He^{i\theta H}A e^{-i\theta H} \\
&\qquad - e^{i\theta H}AH e^{-i\theta H} \Big)\Big|_{\theta = 0} \\
&= i[H,A].
\end{align*} So, the Lie group representation $g\mapsto U_g(\cdot)U_g^\dagger \in GL(V)$ induces the Lie algebra representation $W\mapsto [H,\cdot]\in \mathfrak{gl}(V)$.
Indeed, in agreement with Theorem \ref{thm:lie group reps yield lie alg reps}, this representation similarly respects the decomposition of $V$ into invariant subspaces: observe that for all $H\in \su(2)$, $[H,\idty] = 0$ trivially, and $[H,\su(2)]\subseteq \su(2)$ because $\su(2)$ is a Lie algebra and thus closed under commutators.
So $V = \su(2)\oplus \Cbb\idty$ as Lie algebra representations. 
\end{blue_boxed_example}

Now, let us see this collection of theorems in action to see how to concretely pass from representations of the Lie group $G$ to the Lie algebra $\liea$ in Examples~\ref{box:fundamental rep of alg su(2)}, ~\ref{box:adjoint rep of alg su(2)}, and ~\ref{box:tensor rep of alg su(2)}. Note that there is an unfortunate distinction between physicists' and mathematicians' conventions: physicists generally write the exponential map as $e^{i \theta W}$, while mathematicians write $e^{\theta W}$.
This means that we need to be slightly careful with our statements about Lie algebra generators: physicists will say the Paulis, which are Hermitian, generate $\su(2)$, whereas the equivalent statement for mathematicians is that $\su(2) = \text{span}_\R(iX,iY,iZ)$.
There are two comments to be made here:
\begin{itemize}
    \item When we work on the Lie groups and Lie algebras themselves, the distinction between these conventions is crucial.
    We recommend using the ``Taylor expansion trick'' (Equation \ref{eqn:computational trick for Lie groups to Lie algebras}) from the beginning of this section to avoid getting confused.
    \item In practice for quantum machine learning, we are almost always working on \textit{complex representations}: since we are allowing complex scalars on the representation side, the representations of a Lie algebra $\liea$ are the same as for the complexification of that Lie algebra $\su(2)\otimes \Cbb$, which just means we allow complex linear combinations of our basis. 
    Here, the distinctions between the physicist's convention and the mathematician's is no longer problematic.
    Do note that this comes at a price if we need other information beyond complex representations of the Lie algebra. For instance, the Lie groups $SU(2)$ and $SL(2)$ are very different (e.g. the former is compact while the latter is not); but their complexified Lie algebras are identical $\su(2)\otimes \Cbb = \mathfrak{sl}_2\otimes \Cbb$. 
\end{itemize}

All the previous algebra theory to connect Lie groups to Lie algebras is nice and good. But how can we use this for QML? We have already stated that dealing with Lie groups can be cumbersome, whereas dealing with Lie algebras is more manageable. In particular we recall the mantra: \textit{When we have a problem with Lie group symmetry, pass to the Lie algebra, analyze it, and return to the Lie group.} Lets put this to practice for the task of constructing an equivariant quantum neural network. Consider a QML model as in Eq.~\eqref{eq:QML-model} of the form 
\begin{align}\label{eq:QML-model-qnn}
    h_{\vec{\theta}}(\rho_i)=\Tr[\WC_{\vec{\theta}}(\rho_i) M_i]=\Tr[W(\vec{\theta})\rho_i W\ad(\vec{\theta}) M_i]\,,
\end{align}
where $\WC_{\vec{\theta}}:B(\HC)\rightarrow B(\HC)$ with $\WC_{\vec{\theta}}(\rho_i)=W(\vec{\theta})\rho_i W\ad(\vec{\theta})$   a parameterized unitary quantum neural network. Our goal here is to construct a $\WC_{\vec{\theta}}$ respecting the symmetries in the dataset. Recall from Eq.~\eqref{eq:equiv} that $\WC_{\vec{\theta}}$ will be equivariant if
\begin{equation}
    W(\vec{\theta})U_g\rho_iU_g\ad W\ad(\vec{\theta})= U_gW(\vec{\theta})\rho_i W\ad(\vec{\theta})U_g\ad\,,\label{eq:equiv-qnn}
\end{equation}
for all $g$ in $G$, and for all $\vec{\theta}$. That is, we need
\begin{align}\label{eq:equiv2}
    [W(\vec{\theta}),U_g]=0\,, \quad \forall \vec{\theta}\,,\,\,\forall g\in G.
\end{align}

Solving Eq.~\eqref{eq:equiv2} means determining \textit{all} the unitaries $W(\vec{\theta})$ that commute with \textit{all} the representations $U_g$ of the elements of the symmetry group $G$. Quite the daunting task! However, we can use  the trick of passing to the Lie algebra, solving there, and going back to the group. Explicitly, let us consider the case where $W(\vec{\theta})$ is composed of a single ``layer'', which is a fancy way of saying that $W(\vec{\theta})=e^{-i\theta H }$, for some Hermitian operator $H$, and for some trainable parameter $\theta\in\mathbb{R}$.  In particular, since~\eqref{eq:equiv2} must hold for all $\theta$, it must hold for infinitesimal parameters. We can again use the Taylor expansion trick as in Eq. ~\eqref{eqn:computational trick for Lie groups to Lie algebras} to expand around $\theta=0$. So $e^{-i\theta H }=\id+\theta H+\OC(\theta^2)$ in Eq.~\eqref{eq:equiv2} yields
\begin{align}\label{eq:equiv3}
    [W(\vec{\theta}),U_g]&=\theta [H,U_g]+\OC(\theta^2)\,,
\end{align}
which is zero (to first order) if $[H,U_g]=0$. That is,  the quantum neural network $W(\vec{\theta})$ will be equivariant if its generator $H$ commutes with all the representations of the group elements. In fact, one can check that
\begin{equation}\label{eq:equiv-generator}
    [H,U_g]=0\,, \quad \forall g\in G\,,
\end{equation}
is enough to guarantee that all remaining higher orders terms will also commute with $U_g$~\cite{meyer2022exploiting}. Not surprisingly, it is easier to solve  $[H,U_g]=0$ at the algebra level than to solve  $[W(\vec{\theta}),U_g]$ at the group level. Here we further remark that if G is a Lie group, then we can map to the Lie algebra and find all operators such that  $[H,X]=0$ for all $X\in \mathfrak{g}$. This exemplifies  how working at the algebra level of the layer is
easier. Note that these results can be generalized to a quantum neural network with multiple layers. Here, $W(\vec{\theta})=\prod_l e^{-i\theta_l H_l }$, where $W(\vec{\theta})$ will be equivariant if each layer is equivariant. Thus, we require $[H_l,U_g]=0$, $\forall H_l$ and $\forall g\in G$. 

As an example, consider the QML task of Fig.~\eqref{fig:fig3} of classifying real valued data $x=(x^1,x^2)$ living in a two-dimensional plane. Here, the data is encoded into two qubit states, and the symmetry group is $\mathbb{Z}_2$ whose representation acting on two qubits is $G_{\SWAP}=\{\id,\SWAP\}$ of Eq.~\eqref{eq:swap-group}. From the previous result we have that a quantum neural  network will be equivariant if its generators commutes with the identity operator (always trivially true), and with the SWAP operator. The reader can verify that any generator $H$ in $\Span(\{X\otimes \id + \id\otimes X,Y\otimes \id + \id\otimes Y ,Z\otimes \id + \id\otimes Z ,X\otimes X ,Y\otimes Y ,Z\otimes Z \})$, satisfies $[H,\SWAP]=0$ (below we will explicitly show how this set can be found, we also refer the readers to Refs.~\cite{larocca2022group,meyer2022exploiting,sauvage2022building,nguyen2022atheory}). In Fig.~\ref{fig:fig8} we show an example of a quantum neural network that is constructed by exponentiation of these generators, and thus, that is equivariant to $G_{\SWAP}$.

\begin{figure}
    \centering
    \includegraphics[width=1\columnwidth]{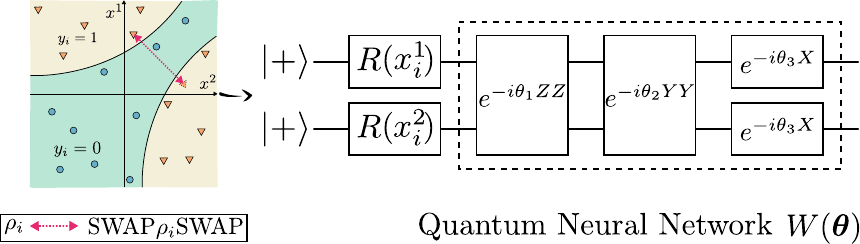}
    \caption{\textbf{Example of equivariant quantum neural network.} Here we consider the task in Fig.~\ref{fig:fig3} of classifying real valued data $x=(x^1,x^2)$ living in a two-dimensional plane. We use tools from the Lie group / Lie algebra correspondence to build equivariant quantum neural networks to the symmetry group of the task, $G_{\SWAP}=\{\id,\SWAP\}$.}
    \label{fig:fig8}
\end{figure}

\begin{tensor_boxed_example}[label={box:tensor rep of alg su(2)}]{Tensor rep of $\su(2)$ on 2 qubits}
Return to the tensor representation of $SU(2)$ on two qubits $V = (\Cbb^2)^{\otimes 2}$, which we recall decomposes as $V = \text{Sym}^2(\Cbb^2)\oplus \text{Alt}^2(\Cbb^2)$.
Just as in the previous example, we study the induced representation on the Lie algebra by writing $U_g(\theta) = e^{i\theta H}$ with $H\in \su(2)$.
Then using Leibniz rule for tensor products,
\begin{align*}
    \frac{d}{d\theta}\paran{U_g(\theta)\otimes U_g(\theta)}\Big|_{\theta = 0} &= \frac{d}{d\theta}\paran{e^{i\theta H}\otimes e^{i\theta H}}\Big|_{\theta = 0} \\
    &= i\Big(H e^{i\theta H}\otimes e^{i\theta H} \\
    &\qquad + e^{i\theta H}\otimes H e^{i\theta H} \Big)\Big|_{\theta = 0} \\
    &= i(H\otimes \idty + \idty \otimes H ).
\end{align*} So, the Lie group representation $g\mapsto U_g\otimes U_g^\dagger \in GL(V)$ induces the Lie algebra representation $H\mapsto H \otimes \idty + \idty \otimes H \in \mathfrak{gl}(V)$.
Again, as demanded by Theorem \ref{thm:lie group reps yield lie alg reps}, this representation respects the decomposition of $V$ into invariant subspaces: observe that for all $H\in \su(2)$, $H\otimes \idty + \idty\otimes H$ maps antisymmetric vectors to antisymmetric vectors and maps symmetric vectors to symmetric vectors, so $V = \text{Sym}^2(\Cbb^2)\oplus \text{Alt}^2(\Cbb^2)$ as Lie algebra representations.
\end{tensor_boxed_example}

\subsection{Representation Theory: the fundamentals}
Our working examples the entire time have been representations of a few key groups.
It is high time we listed key definitions to provide formal structure to the pictures they paint.
First, let us tackle a conceptually different aspect of representation theory: \textit{Given a representation of a symmetry group. How can I build a new representation?} Answering this question can be both conceptually rich, but also practically useful. From a theoretical stand-point there is an ``upward direction'' where we can take  simple group representations and build more complex ones. More interestingly, however, is the ``downward direction'' where we can try to understand very complex representations of a group in terms of simpler ones. From a practical perspective, we will see that change of representations are extremely useful in certain types of quantum neural networks~\cite{cong2019quantum,pesah2020absence}.

\subsubsection{Constructing new representations, changing representation}

Note that in what follows we will construct new representations from two given representations $R_1,R_2$. The generalization to $R_1,\dots, R_m$ is straightforward.

We start by noting that up to this point we have corresponding definitions at the level of the Lie group $G$ and the Lie algebra $\liea$. Really, we should have Theorem \ref{thm:lie group reps yield lie alg reps} and Corollary \ref{cor:lie alg reps locally induce Lie group reps} at the front of our mind for each and every one of these: the definition at the group level induces the algebra definition, and the definition at the algebra level induces the group definition. 
It is straightforward (albeit not pedagogically wise) to show this equivalence using these theorems, often using some version of the Taylor series expansion computational trick (Eq.~\eqref{eqn:computational trick for Lie groups to Lie algebras}).

First, let us recall that every Lie group $G$ has a particularly important representation associated to it.\footnote{Lie group structure is highly determined by this representation, and it plays a central role in the classification of so-called semisimple Lie groups.}
\begin{definition}\label{def:adjoint rep}
Let $G$ be a Lie group with associated Lie algebra $\liea$. 
The \emph{adjoint representation} $\text{Ad}: G\to GL(\liea)$ is the map $g\mapsto \text{Ad}_g$ defined by
\[
    Ad_g(X) = gXg^{-1}, \qquad \text{for all } X\in \liea.
\] Using Theorem \ref{thm:lie group reps yield lie alg reps}, this induces the \emph{adjoint representation} of $\liea$, $\text{ad}: \liea \to L(\liea)$ is the map $X\mapsto \text{ad}_X = [X,\cdot]$ given by
\[
    \text{ad}_X(Y) = [X,Y], \qquad \text{for all } Y\in \liea.
\]  
\end{definition} 
Since quantum science primarily works with complex representations, it is common to write, as we do in this article, the adjoint representation as $V=\liea$ while meaning the complexified Lie algebra $V=\liea\otimes \Cbb$.

Next, we let us highlight the remarkable fact that  that given two representations of a group, their direct sum is also a representation. 
\begin{definition}\label{def:direct sum}
Let $R_1,R_2$ be representations of a Lie group $G$ acting on vector space $V_1,V_2$.
The \emph{direct sum} $R_1\oplus R_2$ is a representation of $G$ acting on $V_1\oplus V_2$ defined by
\[
    \paran{R_1\oplus R_2(g)}(v_1,v_2) = \paran{R_1(g) v_1, R_2(g) v_2}, \quad \text{for all }g\in G.
\] 
Likewise, if $r_1,r_2$ are representations of a Lie algebra $\liea$ on $V_1,V_2$, the \emph{direct sum} $r_1\oplus r_2$ is a representation of $\liea$ acting on $V_1\oplus V_2$ defined by
\[
    \paran{r_1\oplus r_2(X)}(v_1,v_2) = \paran{r_1(X) v_1, r_2(X) v_2}, \quad \text{for all }X\in \liea.
\]
\end{definition}
It can be useful to think of the direct sum in \textit{block diagonal form}: the representation $R_1\oplus R_2$ acting on $V_1\oplus V_2$ can be written as
\[
    (R_1\oplus R_2)(g) = \begin{pmatrix} R_1(g) & 0 \\ 0 & R_2(g)
    \end{pmatrix}\,, \quad \text{for all }g\in G. 
\] 
Of course, a similar form holds for $r_1\oplus r_2$. Notice that a direct sum decomposition appeared in both of our key examples in Examples~\ref{box:adjoint rep of group SU(2)},~\ref{box:tensor rep of group SU(2)},~\ref{box:adjoint rep of alg su(2)},~\ref{box:tensor rep of alg su(2)}.
Block diagonalization will be of the utmost importance throughout this paper. We finally note that
one can also take the direct sum of the same representation, i.e., $R_1\oplus R_1$, in which case we say that $R_1$ has multiplicity of two, and we write 
\[
    (R_1\oplus R_1)(g) = \id_2\otimes R_1(g)\,, \quad \text{for all }g\in G. 
\]

For instance, let us consider $G$ to be $SU(2)$. Then, take $R_1(g)$ be the representation $\text{Sym}(\Cbb^2)$ from Example~\ref{box:tensor rep of group SU(2)} (often called the spin-1 representation), and $R_2(g)$ be the representation $\text{Alt}^2(\Cbb^2)$ from the same example (often called the spin-0 representation, or singlet). Their direct sum is of the form
\begin{equation}\label{eq:su2-block-sum}
    (R_1 \oplus R_1)(g) =\begin{pmatrix} \hspace{.25mm}
    \fbox{ $\begin{array}{c c c}
    \,\,\,\,\,\,\,\, & & \\
    & & \\
    & & \\
    \end{array}$} \hspace{-.75mm}&  \begin{array}{c}
     0 \\
    0 \\
    0
    \end{array}    \\
    \begin{array}{c c c}
    0 & 0 & 0 
    \end{array} &\fbox{ $\begin{array}{c}
    \end{array}$}
    \end{pmatrix}\,.
\end{equation}

We have also already seen through  Example~\ref{box:tensor rep of group SU(2)} that the tensor product of two representations is also a representation. Moreover, from Example~\ref{box:tensor rep of alg su(2)} we also saw how Leibniz's rule for differentiation induces the following form of tensor representations of a Lie algebra $\liea$.
\begin{definition}\label{def:tensor rep}
Let $R_1,R_2$ be representations of a Lie group $G$ acting on vector spaces $V_1,V_2$.
The \emph{tensor representation} $R_1\otimes R_2$ of $G$ on $V_1\otimes V_2$ is
\[
    (R_1\otimes R_2)(g) = R_1(g) \otimes R_2(g) .
\] If $r_1,r_2$ are representations of a Lie algebra $\liea$ acting on $V_1,V_2$, then the \emph{tensor representation} $r_1\otimes r_2$ of $\liea$ on $V_1\otimes V_2$ is
\[
    (r_1\otimes r_2)(X) = r_1(X) \otimes \idty + \idty \otimes r_2(X) .
\]
\end{definition}

Here, let us consider again the case of $G=SU(2)$. Recalling from Example~\ref{box:tensor rep of group SU(2)} that in the symmetric-antisymmetric basis  we have
\begin{equation}\label{eq:su2-block-prod}
    U_g\otimes U_g =\begin{pmatrix} \hspace{.25mm}
    \fbox{ $\begin{array}{c c c}
    \,\,\,\,\,\,\,\, & & \\
    & & \\
    & & \\
    \end{array}$} \hspace{-.75mm}&  \begin{array}{c}
     0 \\
    0 \\
    0
    \end{array}    \\
    \begin{array}{c c c}
    0 & 0 & 0 
    \end{array} &\fbox{ $\begin{array}{c}
    \end{array}$}
    \end{pmatrix}\,.
\end{equation}
The fact that there exists a basis where $ U_g\otimes U_g$ in Eq.~\eqref{eq:su2-block-prod} has the exact block diagonal form as that in Eq.~\eqref{eq:su2-block-sum} indicates that these two representations are isomorphic (Definition~\ref{def:equivariance}). More importantly, it unravels a fundamental aspect of representation theory: \textit{given a representation (such as the tensor one), it is extremely useful to express it as a direct sum of smaller representations.} For our previous example, we now know that the tensor representation of $SU(2)$ can be decomposed as a direct sum of two ``smaller'' representations, the spin-$1$ and the spin-$1/2$. We will return to this in Example~\ref{box:equivalence of adjoint and tensor reps}.

The previous examples show how to build larger representations from smaller ones (though direct sums or products). There is another key construction which allows us to build new representations from old, but this one keeps the dimension the same, a commonly useful feature for change of representation tasks.

Recall that the dual space of a finite dimensional vector space $V$, where we think of $V$ as spanned by column vectors, is the space of row vectors $V^*$, i.e. linear functionals $\{V\to \mathbb{F}\}$ with $\mathbb{F} \in \{\R,\Cbb\}$.
\begin{definition}\label{def:dual representation}
Let $R$ be a representation of $G$ acting on a finite-dimensional vector space $V$.
Then the \emph{dual representation} $R^*$ is the representation of $G$ acting on $V^*$ given by
\[
    R^*(g) := [R(g^{-1})]^T\,, \qquad \text{for all } g\in G,
\] where $T$ denotes the transpose. 
If $r$ is a representation of $\liea$ on $V$, then the \emph{dual representation} $r^*$ of $\liea$ on $V^*$ is given by
\[
    r^*(X) := -r(X)^T\,, \qquad \text{for all } g\in G\,.
\]
\end{definition} Note that this is also called the \textit{contragredient} representation.

\begin{figure*}[t]
    \centering
    \includegraphics[width=.8\linewidth]{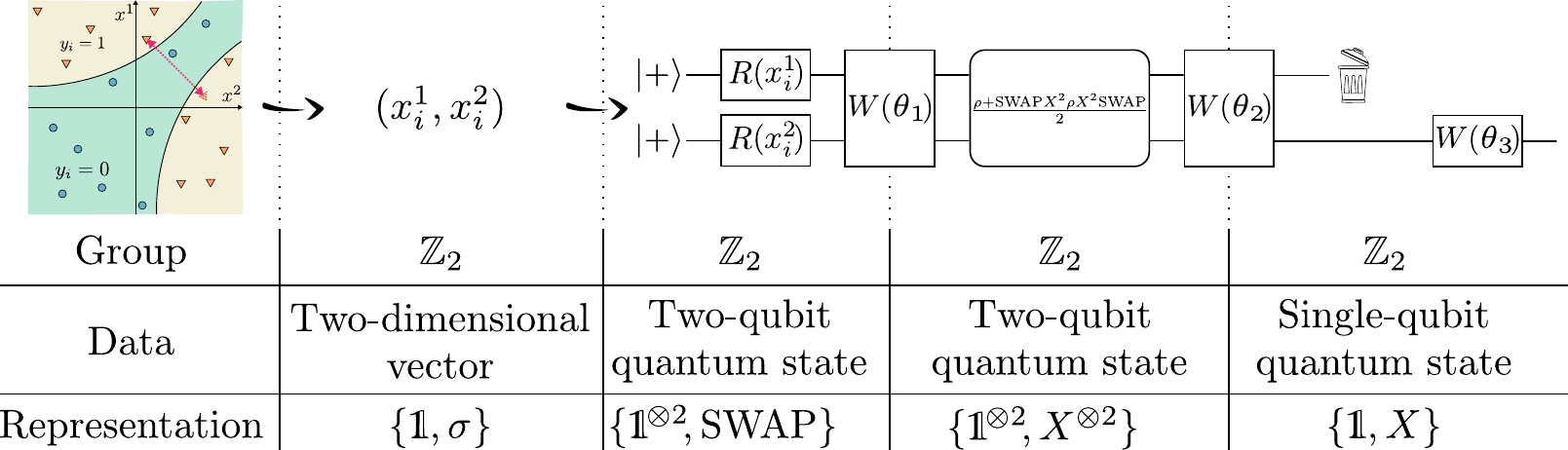}
    \caption{\textbf{Change of representations in QML.} We consider the task in Fig.~\ref{fig:fig3} of classifying real valued data $(x^1,x^2)\in\mathbb{R}^2$. Here, the data labels are symmetric under the action of $\mathbb{Z}_2$, whose representation is $G_1=\{\id,\sigma\}$, where $\sigma\cdot(x^1,x^2)=(x^2,x^1)$. The first step is to embed the data into the state of two qubits via some data-encoding scheme that effectively changes the representation of $\mathbb{Z}_2$ to  $G_2=\{\id,\SWAP\}$. Next, one can process the information via some  $G_2$-equivariant quantum neural network $W(\vec{\theta}_1)$ such as that in Fig.~\ref{fig:fig8}. Further, as we show below we can find a map $\Phi_1:(\Cbb^2)^{\otimes 2}\rightarrow (\Cbb^2)^{\otimes 2}$, given by $\Phi_1(\rho)=\frac{1}{2}(\rho + \SWAP X^2\rho X^2 \SWAP)$ that changes the representation of the group to $G_3=\{\id,X^{\otimes 2}\}$ such that we can process the information with a  $G_3$-equivariant quantum neural network $W(\vec{\theta}_2)$. Next, we show a \textit{pooling} map  $\Phi_2:(\Cbb^2)^{\otimes 2}\rightarrow \Cbb^2$ (given by the partial trace) which maps two-qubit states to single-qubit states (note that the partial trace can also output mixed quantum states), and which also changes the representation of the symmetry group to $G_4=\{\id,X\}$. Finally, we can process the information encoded in the single qubit via a $G_4$-equivariant quantum neural network $W(\vec{\theta}_3)$.
}
    \label{fig:fig9}
\end{figure*}

The dual representation and tensor representation constructions immediately yield natural representations of $G$ on linear maps $\Phi: V\to W$ (which includes operators and quantum channels) and more generally tensors with mixed covariance and contravariance.
This occurs because we can think of the space of linear maps $\{\Phi:V\to W\}$ as the tensor product $V^*\otimes W$, where a basis for this space is given by, e.g., $\ket{w}\bra{\overline{v}}$ where we have $\ket{v}\in V, \ket{w}\in W$ and $\overline{v}$ denotes the complex conjugate of $v$.
In this case, given representations $R$, $S$ on $V$ ,$W$, respectively, the natural action on $\Phi$ is given by $R^*\otimes S$ for any $g\in G$ is
\[
    \paran{(R^*\otimes S)(g) \cdot \Phi }(v) = S_g \Phi (R_g^{-1}v)\,, \quad \text{where } v\in V,
\] which, when $R$ and $S$ are unitary representations (Definition \ref{def:unitary representations}) as is usually the case in QML, can be rewritten as
\begin{equation}
    \paran{(R^*\otimes S)(g) \cdot \Phi}(v) = S_g \Phi (R_g^\dagger v).
\end{equation} The kicker of this form is that studying how representations act on linear maps (and thus quantum channels!) boils down to a special case of tensor representations.\footnote{It should be noted that for some special cases, the dual representation $R^*$ is isomorphic to the original representation $R$ (this is the case for e.g. $SU(2)$), but in general this is not true, for instance any $SU(d)$ with $d\geq 3$. 
It is however true that $R$ is irreducible iff $R^*$ is, and that $(R^*)^* \cong R$.}
For instance, the vector space of equivariant maps $\Phi$ (for unitary reps $S,R$), is given by 
\begin{equation}
    \Phi = S_g \circ \Phi \circ R_g^\dagger\,,
\end{equation}
which is the collection of trivial 1D invariant subspaces of $V^*\otimes W$, since saying $\Phi$ is equivariant is the same as saying $(R^*\otimes S)(g)\cdot \Phi = \Phi$ (see Definition \ref{def:invariant subspace}).\footnote{The trivial representation always breaks into 1D trivial representations---this is almost immediate from the definition.}
In many cases, since tensor representations of common groups are well understood (re: Clebsch-Gordan decomposition), this shift in perspective can be both theoretically and computationally useful.

We refer the reader to Fig.~\ref{fig:fig9} where we showcase how different representations can play a role in a QML classification task. Notably, this figure depicts several key ingredients in QML tasks such as \textit{embedding layers} (mapping classical data to quantum data), \textit{quantum neural network layers} (to process the information in the quantum states), and \textit{data-pooling layers} (where qubits in the system are discarded with the hope of reducing the feature space dimension while preserving the relevant data features). We note that the field of change of representation in GQML is still  in its infancy (see~\cite{nguyen2022atheory} for some theory).

\subsubsection{Structure definitions and basic theory}
We now have a rich supply of contextualized examples: time to study their structure.

To begin, we need an appropriate definition of a subrepresentation of a representation.
\begin{definition}\label{def:invariant subspace}
Let $R:G\to GL(V)$ be a group representation. 
An subspace $W\subseteq V$ is a $G$-\textit{invariant} linear subspace of $V$, meaning for all $w\in W$, 
\[
    R_g \cdot w \in W \qquad \text{for all }g\in G.
\]

Similarly, if $r:\liea \to \mathfrak{gl}(V)$ a Lie algebra representation, then an \emph{invariant subspace} $W\subseteq V$ is a $\liea$-invariant linear subspace of $V$, meaning for all $w\in W$,
\[
    r_X \cdot w \in W \qquad \text{for all }X\in \liea.
\]
\end{definition} 

Of course, the trivial subspaces $\{1\}$ and $V$ are always invariant subspaces of $V$. 
But what about nontrivial ones?

\begin{definition}\label{def:reducible}
A representation $R$ (resp. $r$) of a Lie group $G$ (resp. Lie algebra $\liea$) over $V$ is called \emph{reducible} if there exists a nontrivial invariant subspace $W\subseteq V$. If the only invariant subspaces are the trivial subspaces $\{1\}$ and $V$, then $V$ is called \emph{irreducible}.
\end{definition}
Invariant subspaces are closely connected to simultaneous block diagonalization and direct sums (Definition \ref{def:direct sum}): this connection will be fleshed out by Definition \ref{def:complete reducibility}.
Notice that we have already been exposed to this notion via example: for the $SU(2)$ representation $g\mapsto U_g(\cdot)U_g^\dagger$ on $M_2(\Cbb)$ in Example~\ref{box:adjoint rep of group SU(2)}, we found that the spaces $\text{span}\{\idty\}$ and $\text{span}\{X,Y,Z\}$ are invariant subspaces; and for the $SU(2)$ representation $g\mapsto U_g\otimes U_g$ on $(\Cbb^2)^{\otimes 2}$ in Example~\ref{box:tensor rep of group SU(2)}, the symmetric and antisymmetric spaces are invariant subspaces.
To precisely pin down the link to simultaneous block diagonalization, we need the notion of total reducibility.
\begin{definition}\label{def:complete reducibility}
Let $R$ (resp. $r$) be a representation of $G$ (resp. $\liea$) on a vector space $V$. 
Then $V$ is called \emph{completely reducible} if there exists a direct sum decomposition of $V$ into subspaces $W_1,\dots, W_k$ 
\[
    V = W_1 \oplus W_2 \oplus \dots \oplus W_k 
\] where each $W_j$ is an $R$ (resp. $r$) invariant subspace such that the restriction $R_j := R\vert_{W_j}$ is irreducible.
\end{definition} So, when $R$ (or $r$) is a completely reducible representation on $V$, there exists a basis of $V$ such that we have simultaneous block diagonalization:
\[
    R(g) = \begin{pmatrix}
    R_1(g) & 0 & 0 & 0 \\
    0 & R_2(g) & 0 & 0 \\
    0 & 0 & \ddots & 0 \\
    0 & 0 & \dots & R_k(g)
    \end{pmatrix}\,, \qquad \text{for all } g\in G.
\] 
This definition is key, because it establishes that any completely reducible representation can be constructed out of the ``building blocks'' of irreducible representations, much like how we use prime numbers to construct all other natural numbers.
But the situation is a bit more subtle here, because not all representations are completely reducible.\footnote{Representations which are not completely reducible are remarkably common: one can show for instance that the representation of the group $\R$ with addition given by $R:\R\to GL(2;\Cbb)$ $x\mapsto \begin{pmatrix} 1 & x \\ 0 & 1\end{pmatrix}$ is not totally reducible.}
This begs a question: can we find a class of representations which are completely reducible?
The good news is that yes, and in fact, these are exactly the types of representations we care about for quantum applications: unitary representations (Theorem \ref{thm:complete reducibility of unitary representations}).
For this next definition, once again recall the Taylor series expansion computational trick (Equation \ref{eqn:computational trick for Lie groups to Lie algebras}).

\begin{definition}\label{def:unitary representations}
Let $G$ be a Lie group.
A \emph{unitary representation} of $G$ is a representation $R: G\to GL(V)$ such that $R_g^\dagger R_g = \idty$ for all $g\in G$.

Let $\liea$ be a Lie algebra.
A \emph{skew-Hermitian} representation of $\liea$ is a representation $r: \liea \to \mathfrak{gl}(V)$ with $r_X^\dagger= - r_X$ for all $X\in \liea$.
\end{definition}

\begin{tensor_boxed_example}[label={box:reducibility of tensor rep of SU(2)}]{Tensor rep of SU(2): reducibility}
We have established that the tensor representation of $SU(2)$ on two qubits decomposes as $V = \text{Sym}^2(\Cbb^2) \oplus \text{Alt}^2(\Cbb^2)$.
We now know that this is a unitary representation and by Theorem \ref{thm:complete reducibility of unitary representations}, it is completely reducible.
$\text{Alt}^2(\Cbb^2)$ is obviously irreducible since it is 1 dimensional, but what about $\text{Sym}^2(\Cbb^2)$?
Here, the power of passing to the Lie algebra representation of $\su(2)$ becomes evident.
Consider the representative $Z\otimes \idty + \idty \otimes Z$ acting on the basis $\ket{11}, \ket{01} + \ket{10}, \ket{00}$. 
Notice that it is diagonal in this basis.
Now, let us think about the ``creation'' and ``annihilation'' operators $\tilde{\sigma}^\pm = \sigma^\pm\otimes \idty + \idty \otimes \sigma^\pm$ where
\[
    \sigma^\pm = \frac{1}{2}\paran{X \mp iY} .
\] Clearly, $\tilde{\sigma}^\pm$ are in the image of the complex representation $\su(2)$, since they are complex linear combination of $X$ and $Y$ representatives. 
Observe their action on the basis $\text{Sym}^2(\Cbb^2)$:
\begin{align*}
    \ket{11} \xmapsto{\tilde{\sigma}^- \cdot } \paran{\ket{10}+\ket{01}} \xmapsto{\tilde{\sigma}^- \cdot } 2\ket{00}    \\
    \ket{00} \xmapsto{\tilde{\sigma}^+ \cdot } \paran{\ket{10}+\ket{01}} \xmapsto{\tilde{\sigma}^+ \cdot } 2\ket{11}\,.
\end{align*} In other words, there are no invariant subspaces of $\text{Sym}^2(\Cbb^2)$ since we have representatives from $\su_2$ permuting all subspaces.
\vspace{0.25cm}

This powerful strategy pervades representation theory: simultaneously diagonalize a set of commuting operators in $\liea$ (here, just a single operator $Z$, but in general, a so-called \textit{Cartan subalgebra}) to get an eigenbasis, then use other operators and their commutation relations to ``move around'' in this eigenbasis.
\end{tensor_boxed_example}

Again, using Theorem \ref{thm:lie group reps yield lie alg reps} and Corollary \ref{cor:lie alg reps locally induce Lie group reps}, a unitary representation $R$ of $G$ induces a skew-Hermitian representation of $\liea$, and a skew-Hermitian representation of $\liea$ induces (at least locally) a unitary representation $R$ of $G$.

\begin{theorem}[Complete reducibility of unitary representations]\label{thm:complete reducibility of unitary representations}
Any finite dimensional unitary representation is completely reducible. 
\end{theorem}

\begin{purple_boxed_example1}[label={box:equivalence of adjoint and tensor reps}]{Equivalence of the $SU(2)$ reps $U_g(\cdot)U_g^\dagger$ and $U_g\otimes U_g$}
We have learned quite a bit about the reps $U_g(\cdot)U_g^\dagger$ and $U_g\otimes U_g$: the first decomposes as $M_2(\Cbb) \cong \su(2)\oplus \Cbb\idty$, and the second as $(\Cbb^2)^{\otimes 2} \cong \text{Sym}^2(\Cbb^2)\oplus \text{Alt}^2(\Cbb^2)$. 
As it turns out, these are equivalent representations in the sense of Definition~\ref{def:equivariance}.
We begin by noticing that it is not too hard to see that as $SU(2)$ representations, $\Cbb\idty \cong \text{Alt}^2(\Cbb^2)$, i.e., both are trivial representations. 
That leaves showing that the adjoint representation $\su(2)$ is equivalent to $\text{Sym}^2(\Cbb^2)$, so we need to find an equivariant map between the two.

Let us look at the computational (diagonal) basis of $\text{Sym}^2(\Cbb^2)$ from Example~\ref{box:reducibility of tensor rep of SU(2)}
and compute eigenvalues for the $Z$ representative:
\begin{align*}
    (Z\otimes \idty + \idty \otimes Z)\ket{11} &= 2\ket{11} \\
    (Z\otimes \idty + \idty \otimes Z)(\ket{10}+\ket{01}) &= 0 \\
    (Z\otimes \idty + \idty \otimes Z)\ket{00} &= -2\ket{00}.
\end{align*} Now, let us diagonalize the $Z$ representative for the adjoint representation (Definition \ref{def:adjoint rep}):
\begin{align*}
    ad_Z(\sigma^-) &= [Z,\sigma^-] = 2\sigma^- \\
    ad_Z(Z) &= [Z,Z] = 0 \\
    ad_Z(\sigma^+) &= [Z,\sigma^+] = -2\sigma^+.
\end{align*} One can then check that the linear map $\phi:\text{Sym}^2(\Cbb^2) \to \su(2)$ which sends 
\begin{align*}
    \ket{11}&\xmapsto{\phi} \sigma^{-}\\ \ket{10}+\ket{01}&\xmapsto{\phi} Z \\ \ket{00}&\xmapsto{\phi} \sigma^{+} ,
\end{align*} is indeed an equivariant map by checking on the representatives of the Lie algebra basis $\{\sigma^-,Z,\sigma^+\}$ of $\su(2)$.
By exponentiating, we see that $\phi$ intertwines the Lie Group representations, i.e. $\phi\circ U^{\otimes 2} = Ad_{U}\circ\phi$, or more explicitly, 
\[
    \phi \circ U_g\otimes U_g (\cdot) = U_g(\phi(\cdot))U_g^\dagger \qquad \text{for all } g\in SU(2).
\]

\end{purple_boxed_example1}

This result is enormously powerful: as we will see in Theorem \ref{thm:Weyl's unitary trick}, every compact Lie group (which includes every finite discrete group) representation is equivalent to a unitary representation. To state this theorem however, we need an appropriate definition of equivalence of representations.

\begin{definition}\label{def:equivariance}
Let $R,S$ (resp. $r,s$) be representations of a Lie group $G$ (resp. $\liea$) on vector spaces $V_R,V_S$.
Then we call a linear map $\phi:V_R\to V_S$ an \emph{equivariant map} if 
\[
    \phi \circ R_g = S_g \circ \phi \qquad \text{for all } g\in G
\]
If further $\phi$ is invertible, the representations $R,S$ are called \emph{equivalent} or \emph{isomorphic}.
We also say $V_R$ and $V_S$ are \emph{equivalent} as representations.
\end{definition} Note that equivariant maps are also commonly called \textit{intertwiners}.

In Example~\ref{box:equivalence of adjoint and tensor reps} we return to the case of $SU(2)$ and show that its adjoint and tensor representations are isomorphic. Here, there are two important conceptual comments to be made. Firstly, representations are in general not uniquely defined by their dimension (although this happens to be true for $SU(2)$): there are counterexamples for all $SU(d)$ with $d\geq 3$, wherein the same dimensional vector space may have several inequivalent representations.
Secondly, we see the powerful strategy for detecting equivalent representations (partially aforementioned after  Example~\ref{box:reducibility of tensor rep of SU(2)}), at play:\textit{ diagonalize the representatives of $Z$ in both representations} (here, the maps $Z\otimes \idty + \idty\otimes Z$ on $\text{Sym}^2(\Cbb^2)$ and $ad_Z(\cdot)$ on $\su(2)$), \textit{then use other operators and commutation relations to move between these eigenbases}. 

Let us see how this strategy played out. We began the tensor example by diagonalizing the $Z$ representative in the tensor rep by choosing the basis $\ket{11}, \ket{10}+\ket{01}, \ket{00}$. In this basis, the $Z$ representative has respective eigenvalues 2, 0, and -2. Then, the raising and lowering operators $\sigma^\pm$ map between these eigenspaces, which looks like ``raising'' or ``lowering'' the eigenvalue. It becomes clear that we can map any of these eigenspaces to each other by raising and/or lowering operators: thus, we have an irreducible subspace. When we got to the adjoint representation and saw that $ad_Z(\cdot)$ had the same eigenvalues, this suggested a similar strategy should work. 
There is nothing inherently special about the $Z$ operator here--it just so happens to be fairly easy to diagonalize in both cases. This generalizes to more complicated Lie algebras by simultaneously diagonalizing a commuting collection of operators (called a \textit{Cartan subalgebra}) and cleverly tracking their eigen-information using so-called ``weights''. In practice, diagonalization in a given representation may be highly nontrivial. Checking the reducibility of any acquired invariant subspace may also be a difficult task, and often requires either problem specific insight or any variety of representation theoretic tools.

It cannot however be understated how crucial this strategy is. It undergirds the classification of semisimple Lie algebras. It also elegantly frames the analysis of the quantum harmonic oscillator: there, we diagonalize the self-adjoint number operator $a^\dagger a$ to obtain eigenstates, and then use the canonical commutation relations $[x,p]=i\hbar$ combined with positivity of $a^\dagger a$ to move between eigenstates via ``raising and lowering'' ladder operators. This is well trodden and exposited ground, and every reference we list contains this computation. In Hall~\cite{hall2013lie}, this computation appears in the section ``Representations of $\mathfrak{sl}_2(\Cbb)$'' (recall that as complexified Lie algebras, $\su(2) \cong \mathfrak{sl}_2(\Cbb)$, and so they have the same representation theory).

\begin{theorem}[Weyl's unitary trick for complete reducibility] \label{thm:Weyl's unitary trick}
Let $G$ be a compact Lie group. 
Then every finite dimensional representation of $G$ is equivalent to a unitary representation, and by Theorem \ref{thm:complete reducibility of unitary representations}, completely reducible.
\end{theorem}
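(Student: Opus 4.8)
The plan is to manufacture a $G$-invariant inner product on the representation space by averaging an arbitrary one over the group, where the averaging is carried out with respect to the normalized Haar measure $\mu$ that exists precisely because $G$ is compact. For a finite group this ``integral'' is simply the normalized sum $\tfrac{1}{|G|}\sum_{g\in G}$, and the content of Weyl's trick is that the continuous case goes through identically once Haar measure is in hand. Since a unitary representation is automatically completely reducible by Theorem~\ref{thm:complete reducibility of unitary representations}, producing such an inner product finishes the job.

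Concretely, let $R:G\to GL(V)$ be any finite dimensional representation and fix any inner product $\langle\cdot,\cdot\rangle_0$ on $V$ (say the standard one). I would then define
\[
    \langle u,v\rangle := \int_G \langle R_g u, R_g v\rangle_0 \, d\mu(g)\,.
\]
First I would check this is a genuine inner product: sesquilinearity and Hermitian symmetry are inherited from $\langle\cdot,\cdot\rangle_0$ together with linearity of the integral, while positive-definiteness follows because the integrand is pointwise nonnegative, so $\langle u,u\rangle=0$ forces $\langle R_g u, R_g u\rangle_0=0$ for all $g$; evaluating at $g$ equal to the identity yields $\langle u,u\rangle_0=0$, hence $u=0$.

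The crux is $G$-invariance. For any $h\in G$, using the homomorphism property $R_g R_h = R_{gh}$ and then the invariance of the Haar measure,
\[
    \langle R_h u, R_h v\rangle = \int_G \langle R_{gh} u, R_{gh} v\rangle_0 \, d\mu(g) = \int_G \langle R_{g} u, R_{g} v\rangle_0 \, d\mu(g) = \langle u,v\rangle\,.
\]
Thus every $R_h$ preserves $\langle\cdot,\cdot\rangle$, i.e. $R$ is unitary in this new inner product. To reach the equivalence statement of Definition~\ref{def:equivariance}, I would relate the two inner products by a positive-definite operator $P$ via $\langle u,v\rangle = \langle u, Pv\rangle_0$ and factor $P = S\ad S$ with $S = P^{1/2}$ invertible, so that $\langle Su,Sv\rangle_0 = \langle u,v\rangle$. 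A short computation then shows $\widetilde{R}_g := S R_g S^{-1}$ is unitary with respect to the original $\langle\cdot,\cdot\rangle_0$, so $S$ is an invertible intertwiner ($S R_g = \widetilde{R}_g S$) exhibiting $R$ as equivalent to the unitary representation $\widetilde{R}$. Theorem~\ref{thm:complete reducibility of unitary representations} then delivers complete reducibility.

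The main obstacle is entirely the existence of a finite, invariant Haar measure on $G$; everything else is bookkeeping. Compactness is exactly what guarantees the Haar measure can be normalized to a probability measure so that the averaging integral converges, which is why the theorem fails for noncompact groups such as $SL(d;\Cbb)$. In a fuller treatment I would either cite the construction of Haar measure deferred to the later section on Haar integration, or—if restricting to finite $G$—replace the integral by the group sum, where convergence is trivial.
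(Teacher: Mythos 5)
Your proof is correct and follows exactly the approach the paper sketches (and otherwise omits): Weyl's Haar-averaging of an arbitrary inner product, with compactness guaranteeing a finite, normalizable Haar measure so the average converges. The additional bookkeeping you supply---positive-definiteness, the invariance computation, and the factorization $P=S^\dagger S$ producing the explicit intertwiner to a unitary representation---is precisely the "brief proof" the paper alludes to but does not write out.
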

While we omit the (brief) proof here, there are two important messages from it to be mentioned:
\begin{itemize}
    \item The equivalence is established by performing a Haar averaging procedure\footnote{This averaging procedure is called ``Weyl's unitary trick'', hence the name for this theorem.} to the dot product $\langle \cdot, \cdot \rangle$ on $V$, much akin to that of the twirl operator (we will visit these in Sections~\ref{sec:haar integration},~\ref{sec:twirling}).
    These sorts of Haar averaging methods are of great importance across quantum computation and representation theory alike, since the Haar measure is the most ``natural'' probability distribution on Lie groups.
    \item Compactness is critical here: without it, the Haar measure of our Lie group $\abs{G}$ may not be finite and the averaging procedure is nonsensical.
\end{itemize}

The final result we mention is perhaps the most critical weapon any researcher needs in their arsenal before using representation theory to fully tackle QML tasks. 
Recall that we mentioned earlier that it is common abuse of language to call the vector space $V$ a representation of $G$ or $\liea$ without explicitly describing the homomorphism $R$ or $r$.
\begin{theorem}[Schur's lemma] \label{lemma:Schur's lemma}
Let $V$ and $W$ be irreducible complex representations of a group $G$ or Lie algebra $\liea$, and let $\phi:V\to W$ be an equivariant map. 
Then either $\phi = 0$, or $\phi$ is an isomorphism and $V\cong W$.

If $\phi:V\to W$ is an isomorphism, it must be that $\phi = \lambda \idty$ for some $\lambda \in \Cbb$.
\end{theorem}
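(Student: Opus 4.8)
The plan is to exploit the single most useful structural fact about equivariant maps: both their kernels and their images are automatically invariant subspaces, so irreducibility forces these subspaces to be trivial. This collapses the first claim into a short argument once invariance is verified, and the second claim then follows by applying the first to a cleverly shifted map.

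First I would show that $\ker(\phi)$ is a $G$-invariant (resp. $\liea$-invariant) subspace of $V$. Indeed, if $v \in \ker(\phi)$ then for any $g \in G$ equivariance gives $\phi(R_g v) = S_g \phi(v) = 0$, so $R_g v \in \ker(\phi)$; the Lie algebra case is identical with $r_X, s_X$ in place of $R_g, S_g$. Since $V$ is irreducible, Definition~\ref{def:reducible} forces $\ker(\phi) = \{0\}$ or $\ker(\phi) = V$, the latter saying $\phi = 0$. In the former case $\phi$ is injective, and I would run the dual argument on the image: if $w = \phi(v)$, then $S_g w = S_g \phi(v) = \phi(R_g v) \in \mathrm{im}(\phi)$, so $\mathrm{im}(\phi)$ is an invariant subspace of the irreducible $W$ and hence equals $\{0\}$ or $W$. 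Ruling out the zero case (which would again give $\phi = 0$), we obtain $\mathrm{im}(\phi) = W$, so $\phi$ is surjective. Thus a nonzero equivariant $\phi$ is simultaneously injective and surjective, i.e. an isomorphism, establishing $V \cong W$.

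For the second claim I would specialize to the endomorphism setting $V = W$, where the expression $\phi = \lambda \idty$ is type-correct; the general isomorphism case reduces to this by post-composing with a fixed equivariant isomorphism. Because $V$ is a finite-dimensional complex vector space, the operator $\phi$ has at least one eigenvalue $\lambda \in \Cbb$. The shifted map $\psi := \phi - \lambda \idty$ is still equivariant, since $\idty$ commutes with every $R_g$ and equivariant maps are closed under linear combination. By construction $\psi$ has nontrivial kernel (the $\lambda$-eigenspace), so by the invariance argument above $\ker(\psi)$ is a nonzero invariant subspace of the irreducible $V$, forcing $\ker(\psi) = V$, i.e. $\psi = 0$ and $\phi = \lambda \idty$.

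The step I expect to be the genuine crux is the appeal to the existence of an eigenvalue, which is exactly where the hypothesis of a \emph{complex} representation is indispensable: over an algebraically closed field every endomorphism has an eigenvalue, whereas over $\Rbb$ a planar rotation provides a standard counterexample with no real eigenvalue and the conclusion fails. Everything else is bookkeeping with invariant subspaces; the only subtlety worth flagging in the write-up is making precise the reduction from the stated isomorphism $\phi:V\to W$ to the endomorphism case, so that ``$\phi = \lambda \idty$'' is well-defined.
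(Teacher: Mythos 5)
Your proof is correct, and it is the standard argument: kernel and image of an equivariant map are invariant subspaces, irreducibility forces them to be trivial, and the scalar statement follows from the existence of an eigenvalue over $\Cbb$ applied to the shifted map $\phi - \lambda\idty$. Note that the paper itself states Schur's lemma without proof, deferring to its references (Hall, Serre, Fulton--Harris), so there is no in-paper argument to compare against; your write-up matches the proof given in those texts. Two points you handled well and should keep in the final version: the explicit flag that complexness (algebraic closure) is what guarantees an eigenvalue, with the real planar rotation as the counterexample, and the observation that the paper's phrasing ``$\phi = \lambda\idty$'' for a map $\phi:V\to W$ is only type-correct after identifying $W$ with $V$ via a fixed equivariant isomorphism --- that is a genuine imprecision in the statement as printed, and your reduction to the endomorphism case repairs it cleanly.
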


Schur's lemma plays a central role throughout representation theory and its applications. \textit{A priori}, there are many linear maps between vector spaces $V$ and $W$, and the list of suspects for equivariant maps could be intimidatingly large. But Schur's lemma tells us that the equivariance condition, which can be interpreted as ``the group action commutes with $\phi$'', i.e.
$\phi(g\cdot v ) = g\cdot \phi(v)$ for all $g\in G$, tightly constrains the space of equivariant maps: for inequivalent irreducibles, only the 0 map is equivariant, and for equivalent irreducibles, the space of equivariant maps is the 1 dimensional space spanned by the identity map $\idty$. Further, since Theorem \ref{thm:complete reducibility of unitary representations} guarantees every unitary representation can be decomposed into irreducibles, Schur's lemma in turn tightly constrains the space of equivariant maps between \textit{any} unitary representations. This is leveraged heavily in e.g. the commutant structure Theorem~\ref{thm:commutant structure}.

\begin{section}{Some representation theory-rich constructions in QML}\label{sec:some representation theory-rich constructions in QML}
Representation theory is all about block-diagonalization into irreps. 
When faced with symmetric tasks in QML, a powerful strategy is to understand how the task ``acts'' upon irreps.
Keep this core idea in mind as we review some commonly encountered objects and operations within quantum computing.
As a bird's-eye-view of where we are headed, we begin with the key tool of Haar integration, which affords us a version of integration that plays well with group structure. 
Just as Riemann integration grants access to a variety of useful averaging procedures, Haar integration yields ``group averaging'' procedures, which will lead us naturally to the twirling operator.
We then pivot to describing the commutant, the set of all operators commuting with a representation. 
This is intimately tied to block-diagonalization and can be thought of as the natural extension of a fundamental statement in quantum mechanics: ``two operators are simultaneously diagonalizable if and only if they commute''.
This leads to the Schur-Weyl duality, presented via example, which captures the interplay of a representation and its commutant.
We wrap up this section with a brief discussion of some highlight applications of this theory within GQML.

\begin{subsection}{Haar integration} \label{sec:haar integration}
Haar integration is ubiquitous in quantum science, because it is in some sense the most ``natural'' probability measure to place on unitary groups.
Depending on the level of generality, there are several well-exposited ways of defining the Haar measure on a compact group $G$.\footnote{In Hall~\cite{hall2013lie}, this is accomplished for matrix Lie groups via right-invariant differential forms furnished by the associated Lie algebra.}
But we will take a more pragmatic approach and just explain its key features.

Firstly, the Haar measure $d\mu$ of a compact Lie group defines, after proper rescaling, a probability distribution on a compact Lie group. 
This means we can describe sizes of ``nice'' (i.e., Borel) subsets $H\subseteq G$ by integrating $\frac{1}{\abs{G}}\int_{H\subseteq G} d\mu$, where $\abs{G} = \int_{G} d\mu$ (this is effectively a partition function in probability theory), and it means we can now integrate functions $f:G\to \R$, like the trace norm $\Tr \abs{\cdot}$, over these regions.
For instance, if we had a nice open region $H$ surrounding the identity $\idty$, perhaps describing a ball of $\varepsilon$ perturbations of $\idty$, we could now meaningfully integrate over that.

But what makes the Haar measure special is the so-called \textit{left-invariance} (or \textit{right-invariance}, depending on convention).\footnote{For a wide class of groups, called ``unimodular'', left invariance and right invariance are the same. This includes compact and semisimple Lie groups.} This implies the following relationship for any integrable function $f:G\to \R$:
\[
    \int_G f(g\cdot x) \, d\mu(x) = \int_G f(x) \, d\mu(x) , \qquad \text{for any } g\in G.
\] One fruitful interpretation of this definition is that this gives a \textit{change of variables} formula for integration over arbitrary Lie groups. 
To illustrate this, let us just think about calculus.
The real numbers with the addition $G = (\R,+)$ form a matrix Lie group, and using the standard Lebesgue measure on $\R$\footnote{...which is not a probability measure since $\mu(\R)=\infty$, but that is not the point here.} we can integrate real functions.
Then, writing $g\cdot x = g+x$ and $d\mu(x) = dx$, we have 
\[
    \int_{G= \R} f(g+x) \, dx = \int_{G=\R} f(x) \, dx .
\] The Haar measure is essentially the unique probability measure with this change of variable property, and given how useful change of variables is in any application of calculus, it is a short stretch of the imagination to see why this should be critical to any analysis on Lie groups. 
Finding closed form expressions for Haar measures of matrix Lie groups in terms of the more familiar Lebesgue measure from calculus boils down to combining standard methods for integration on manifolds using Jacobians with the ``left-invariant vector field'' interpretation of the Lie algebra~\cite{hall2013lie}.

We mention briefly that for any finite group $G$, the Haar measure is just the counting measure:
\[
    \int_G f(x) d\mu(x) = \sum_{x\in G} f(x) .
\]
The good news going forward is that we generally will be integrating over the whole group, and we only really need these formal properties, not the particular form of the measure.
In the following two sections, we describe two fundamental ideas which have close ties to Haar integrals: twirling and commutants.

\begin{subsection}{Twirling} \label{sec:twirling}
Twirling is a Haar-averaging technique that appears in a variety of disparate contexts, including entanglement-theory and quantum error correction~\cite{bennett1996mixed}, randomized benchmarking~\cite{knill2008randomized,magesan2011scalable}, quantum process tomography~\cite{emerson2007symmetrized,lu2015experimental} classical shadows \cite{huang2020predicting,elben2022randomized},  and barren plateau analysis~\cite{mcclean2018barren,cerezo2020cost,sharma2020trainability,pesah2020absence,holmes2021connecting,larocca2021diagnosing}, to say nothing of the numerous applications within representation theory. The following discussion is a rephrasing of the standard material presented in \cite{fulton1991representation}.

Every form of twirling has the following structure in common. Take a representation $R$ of a group $G$ over a vector space $V$. 
Suppose that we want to find the subspace of all trivial invariant subspaces $V^G$:
\[
    V^G := \{ v \vert R_g\cdot v = v\}.
\] 
Then the twirl operator $\TC_G: V\to V$ is the linear map
\[
    \TC_G (v) = \frac{1}{\abs{G}} \int_G R_g\cdot v \; d\mu(g) .
\] The twirl operator is in fact a projection onto $V^G$, so $\TC_G^2 = \TC_G$ and $\rm{Im}(\TC_G) = V^G$.
At a more abstract level, the twirl operator projects onto the 1D trivial invariant subspaces (trivial irreps) of the representation $V$.

Let us think about some special representations to shed light on the utility of this operator. 
\begin{enumerate}
    \item \textit{Equivariant unitary quantum neural networks:}
 \end{enumerate}   
 Let $U$ be a unitary representation of a group $G$ on a $d$-dimensional Hilbert space $\HC$, and let $W(\theta)=e^{-i \theta H}$ be a unitary quantum neural network with generator $H$. As discussed in Eqs.~\eqref{eq:equiv-qnn}--\eqref{eq:equiv-generator}, the quantum neural network will be equivariant if $[H,U_g]=0$ for all $g\in G$. Using the twirling formula, we can take any $H$ in $\su(d)$ and project it into $\su(d)^G$ (the set of all twirled, or projected, Pauli operators), so that the ensuing $e^{-i\theta\TC_G(H)}$ will generate a $G$-equivariant unitary. 
 
 For instance, consider again the QML task of Fig.~\ref{fig:fig3} where the symmetry acts through the adjoint representation of $G_{\SWAP}=\{\id,\SWAP\}$. Then, the twirl of  $H=(X\otimes \id)$ is
 \begin{align}
     \TC_G(H)&=\frac{1}{|G|}\sum_{g\in G} U_g (X\otimes \id) U_g\ad\nonumber\\
     &=\frac{1}{2}\left(\id (X\otimes \id)\id+\SWAP (X\otimes \id)\SWAP\right)\nonumber\\
     &=\frac{1}{2}(X\otimes \id+\id\otimes X)\,.
 \end{align}
 As shown in Fig.~\ref{fig:fig8}, $\TC_G(H)$ is precisely one of the generators one can use in a $G_{\SWAP}$-equivariant quantum neural network.

\begin{enumerate}
\setcounter{enumi}{2}
    \item \textit{Equivariant channels:}
 \end{enumerate} 
    The previous result can be generalized to obtain equivariant channels.   Consider two unitary representations $U, W$ of a group $G$ on Hilbert spaces $\mathcal{H}_1, \mathcal{H}_2$, and let $V$ be the space $V = \{\textit{linear maps }\mathcal{B}(\mathcal{H}_1) \to \mathcal{B}(\mathcal{H}_2) \}$.
    Consider a channel $\Phi\in V$.
    Per the discussion underneath Definition \ref{def:dual representation}, the natural representation on $\mathcal{B}(\mathcal{H}_1)$ is given by $O\mapsto U_g O U_g^\dagger$, and likewise for $\mathcal{B}(\mathcal{H}_2)$ with $V_g(\cdot)V_g^\dagger$.
    Then $\Phi$ is equivariant (Definition \ref{def:equivariance}) if for all $O\in \mathcal{B}(\mathcal{H}_1)$, 
    \[
        V_g^{\dagger} \Phi(U_g O U_g^\dagger)V_g  =\Phi(O),
    \] Then, the twirl operator $\TC_G: V\to V^G$ projects onto equivariant maps, so any equivariant map is the image of 
    \[
        \TC_G(\Phi) = \frac{1}{\abs{G}}\int_G V_g^{\dagger} \Phi(U_g (\cdot) U_g^\dagger)V_g \, d\mu(g).
    \] 
Notably, the twirl of a complete positive and trace preserving channel, it also completely positive and trace preserving, meaning that the twirl of a physical map, leads to a physical map.

For instance, consider a channel that changes between the representations $\{\id,X^{\otimes 2}\}$ and  $\{\id,\SWAP\}$ of $\mathbb{Z}_2$. We start with the trivial channel $\Phi(\rho)=\rho$, whose twirl is
\begin{equation}
    \TC_G(\Phi)=\frac{1}{2}\left(\rho + \SWAP X^{\otimes 2} \rho X^{\otimes 2}\SWAP \right)\,.
\end{equation}
There are a variety of methods to compute channel twirls in practice. See for instance a generalized teleportation procedure in Proposition 26 of~\cite{kaur2017amortized}, and a comparison of three methods in Table II of ~\cite{nguyen2022atheory}.

\begin{enumerate}    
\setcounter{enumi}{2}
    \item \textit{Unitary k-designs and tensor representations of $U(d)$}:
\end{enumerate}
    Let $G=U(d)$ and let $V = \mathcal{B}(\mathcal{H}^{\otimes k})$ with $\mathcal{H} = \Cbb^d$.
    Consider the fundamental representation $U: G\to GL(\mathcal{H})$ given by $U_g = g$.
    Combining the tensor representation (Definition~\ref{def:tensor rep}) and dual representation (Definition~\ref{def:dual representation}) constructions, the natural representation on an operator $O\in V$ is given by
    \[
        g\cdot O = (U_g)^{\otimes k} O (U_g^\dagger)^{\otimes k}.
    \] Then, using $dU := d\mu$ to denote the Haar measure on $U(d)$, the twirling operator $\TC_G^{(k)}$ here is given by
    \begin{equation} \label{eqn:unitary k design twirling}
        \TC_G^{(k)}(O) = \frac{1}{\abs{U(d)}} \int_{U(d)} (U_g)^{\otimes k} O (U_g^\dagger)^{\otimes k} \; dU(g) .
    \end{equation} Unitary $k$-designs are ensembles, i.e., ``nice'' (Borel) subsets $\mathcal{E}\subseteq U(d)$ equipped with a probability measure $d\mathcal{E}$ (often just the restriction of the Haar measure $dU$) which have that $\TC_\mathcal{E}^{(k)} = \tau^{(k)}$, where the modified twirl $\tau_\mathcal{E}^{(k)}$ operator is given by integrating the same representation over the ensemble $\mathcal{E}$:
    \[
        \TC^{(k)}_\mathcal{E} (O) = \frac{1}{\abs{\mathcal{E}}} \int_{\mathcal{E}} (U_g)^{\otimes k} O (U_g^\dagger)^{\otimes k} \; d\mathcal{E}(g) .
    \] Unitary $k$-designs are important because they characterize when an ensemble $\mathcal{E}$ captures the first $k$ moments of the Haar distribution on $U(d)$, which, raising $k$, allows for more precise approximation.
    The key is that these ensembles $\mathcal{E}$ can have much lower complexity  than the full $U(d)$. For instance, $\mathcal{E}$ can be discrete, such as the Clifford group which forms a $3$-design~\cite{webb2016clifford,kueng2015qubit}, and
    $k$-designs can be approximated with polynomially (in the number of qubits and in $k$) deep quantum circuits~\cite{brandao2016local,dankert2009exact,harrow2018approximate}.

\end{subsection}

\begin{subsection}{Commutants}
Commutants are closely related to the pervasive theme of block diagonalization. 
For this discussion, we heavily reference Simon \cite{simon1996representations}. 
Recall that via Theorem \ref{thm:Weyl's unitary trick}, every finite dimensional representation of a finite/compact group $G$ is unitary, and here we will only discuss these representations.
To motivate the definition of the commutant, it is good to recall one of the most fundamental facts of quantum science: two operators commute $[A,B] = 0$ if and only if they are simultaneously diagonalizable. 
We saw earlier that complete reducibility (Definition \ref{def:complete reducibility}), which holds for any unitary representation $U$ of $G$ (Theorem \ref{thm:complete reducibility of unitary representations}), amounts to a form of simultaneous block diagonalization of the set of representatives $\{U_g \vert g\in G\}$, where blocks correspond to irreducible invariant subspaces.
\textit{As such, the commutant of a representation will necessarily respect this invariant subspace decomposition.}\footnote{The more precise version of this is that the commutant of a unitary representation $U$ decomposes into minimal central projections, and these are in one-to-one correspondence with irreducible representations of $U$. 
This is reflected in Theorem \ref{thm:commutant structure}.}

Note for the following definition that unitary representations $U$ of $G$ generate algebras $\mathcal{A}$ by allowing linear combinations of the representatives $\{U_g : g\in G\}$ (this is often called a \textit{group algebra}).
\begin{definition}[Commutant]
Let $\mathcal{A}$ be a matrix algebra. Its commutant is $\mathcal{C}(\mathcal{A}):=\{B : [A,B]=0, \forall A \in \mathcal{A}\}$.
\end{definition}

We first establish a connection to a special case of the twirling map, $\TC_G^{(k)}:V\to V^G$ of Eq.~\eqref{eqn:unitary k design twirling} which acts on operators $O\in V = \mathcal{B}(\mathcal{H}^{\otimes k})$.
This connection is commonly implicitly leveraged using Weingarten calculus to analytically perform Haar integration over unitary groups in a variety of applications~\cite{nguyen2022atheory}.
Unraveling the definition of $V^G$, we have that $O\in V^G$ if $(U_g)^{\otimes k} O (U_g^\dagger)^{\otimes k} = O$, or equivalently, when
\[
    [U_g^{\otimes k}, O] = 0.
\] So here, the image $V^G$ of the twirling operator $\TC_G^{(k)}$ coincides with the commutant of the representation $U_g^{\otimes k}$.  In other words, \textit{twirled operators belong to the commutant of the group}. 
It can be shown via Schur-Weyl duality (Section~\ref{sec:Schur-Weyl Duality}) that the commutant of the $G=U(d)$ tensor representation $U_g^{\otimes k}$ is given by the index permutation representation of the symmetric group $S_k$ (see Example~\ref{box:Permutation_rep_of_S3_on_3_qubits}): this is exactly why when performing Weingarten calculus for the unitary group, $\TC_G^{(k)}$ projects onto operators that permute qubits, like $\idty$ and $SWAP_{i,j}$.
This is also why Weingarten calculus can be performed (albeit in more complicated ways) for other twirl operators with representations of other classical groups like $O(d)$.

The commutant of a representation is intimately connected to the representation's decomposition into irreducibles: in some sense, \textit{the commutant can be thought of as the operators that respect the block-diagonalization into irreducibles. } A generalized version of the ``two operators commute if and only if they are simultaneously diagonalizable'', if you will.
We will first provide a theorem formalizing this insight, and then exemplify it with a classic example of a representation and its commutant in Schur-Weyl duality.\footnote{It would be irresponsible to state this theorem without saying Schur's lemma is crucial to the proof: it guarantees that inequivalent representations do not ``mix''.}
Note that the expression $\idty_{m_k}\otimes U_k$ compactly encodes ``the irrep $U_k$ with multiplicity $m_k$'', for instance  $\begin{pmatrix}1& 0\\ 0 & 1\end{pmatrix}\otimes U=\begin{pmatrix}U& 0\\ 0 & U\end{pmatrix}$, means that $U$ appears two times, a statement which  which is reflected in the matrix forms of the following families of operators. 
\begin{theorem}[Commutant structure \cite{simon1996representations}]
Let $U$ be a unitary representation of a compact group $G$ on the Hilbert space $\mathcal{H}$ and its decomposition into irreps be 
\begin{equation}
    U = \bigoplus_{k=1}^{K} \idty_{m_k} \otimes U_k , \nonumber
\end{equation} 
where $m_k$is known as the multiplicity of the irrep $U_k$. 
Then the elements of its commutant are of the following form
\begin{equation}
    \mathcal{C}(U)= \bigoplus_{k=1}^{K} \mathcal{B}_{m_k} \otimes \idty_{\dim(U_k)},
\label{eq:comm-structure}
\end{equation}
where $\mathcal{B}_{m_k}$ denotes bounded operators in a $m_k$-dimensional Hilbert space.
\label{thm:commutant structure}
\end{theorem}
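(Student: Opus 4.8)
The plan is to reduce the computation of the commutant to a block-by-block application of Schur's lemma. Since $U$ is unitary by hypothesis, Theorem~\ref{thm:complete reducibility of unitary representations} guarantees complete reducibility, which justifies writing $U = \bigoplus_{k=1}^K \idty_{m_k}\otimes U_k$ after grouping \emph{equivalent} irreps into isotypic components. Concretely, I would phrase this as a decomposition of the underlying space, $\mathcal{H} = \bigoplus_{k=1}^K (\Cbb^{m_k}\otimes V_k)$, where $V_k$ is the $\dim(U_k)$-dimensional carrier space of the irrep $U_k$ and $\Cbb^{m_k}$ is a multiplicity space on which $G$ acts trivially. The whole point of this bookkeeping is that the ``$\idty_{m_k}$'' factor isolates the multiplicity data from the genuine irrep action.

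Next I would take an arbitrary $B\in\mathcal{C}(U)$ and expand it in block form with respect to this decomposition, $B = (B_{jk})$ with $B_{jk}: \Cbb^{m_k}\otimes V_k \to \Cbb^{m_j}\otimes V_j$, and then further resolve each block into an $m_j\times m_k$ array of maps $\beta^{(jk)}_{ab}: V_k\to V_j$ between irrep carrier spaces. The commutation relation $[B,U_g]=0$ decouples across these blocks and forces, for every $a,b$, the relation $\beta^{(jk)}_{ab}\,U_k(g) = U_j(g)\,\beta^{(jk)}_{ab}$ for all $g\in G$; that is, each $\beta^{(jk)}_{ab}$ is an equivariant map between the irreps $U_k$ and $U_j$ in the sense of Definition~\ref{def:equivariance}.

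The crux of the argument, and the step I expect to carry all the weight, is then a direct appeal to Schur's lemma (Theorem~\ref{lemma:Schur's lemma}). For $j\neq k$ the irreps $U_j$ and $U_k$ are inequivalent, so every intertwiner vanishes and $B_{jk}=0$; the commutant thus respects the isotypic block structure. For $j=k$, Schur's lemma forces $\beta^{(kk)}_{ab} = \lambda_{ab}\,\idty_{\dim(U_k)}$ for scalars $\lambda_{ab}\in\Cbb$, so collecting these scalars into a matrix $\Lambda_k = (\lambda_{ab})\in\mathcal{B}_{m_k}$ gives $B_{kk} = \Lambda_k\otimes\idty_{\dim(U_k)}$. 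Assembling the surviving diagonal blocks yields $B = \bigoplus_{k=1}^K \Lambda_k\otimes\idty_{\dim(U_k)}$, which is precisely Eq.~\eqref{eq:comm-structure}.

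Finally I would close the loop by checking the reverse inclusion: any operator of the form $\bigoplus_k \Lambda_k\otimes\idty_{\dim(U_k)}$ manifestly commutes with $\bigoplus_k \idty_{m_k}\otimes U_k(g)$ block-by-block, since $\Lambda_k\otimes\idty$ and $\idty_{m_k}\otimes U_k$ act on complementary tensor factors. The main obstacle here is conceptual bookkeeping rather than hard analysis: one must be scrupulous that equivalent irreps have been genuinely grouped into a single isotypic summand before invoking Schur, because if two distinct summands secretly carried equivalent irreps the ``off-diagonal blocks vanish'' step would fail and produce nonzero cross terms; and one must consistently track which tensor factor the multiplicity matrices $\Lambda_k$ versus the identities $\idty_{\dim(U_k)}$ act upon.
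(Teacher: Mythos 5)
Your proof is correct and follows exactly the route the paper indicates: the paper does not spell out a full argument but states that the proof ``rests fundamentally upon Schur's lemma,'' with off-diagonal blocks between inequivalent irreps forced to vanish and diagonal blocks forced to be scalar multiples of $\idty$, which is precisely your block-decomposition argument. Your added care about grouping equivalent irreps into a single isotypic summand before invoking Schur's lemma, and the easy reverse inclusion, are the right details to make that sketch rigorous.
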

To connect this with our earlier block-diagonalization statement while discussing complete reducibility (Definition \ref{def:complete reducibility}), we observe that we are saying the matrix representatives for $U$ and the elements of the commutant can be simultaneously block diagonalized as
\small
\begin{widetext}
\begin{equation}\label{eq:block-uni}
U = \begin{pmatrix}
\, \umatxone &  &  &  & {\makebox(0,0){\text{\huge0}}} &   \\ 
 &  \ddots &  &   &   \\ 
 &  &  \underbrace{\umatxk}_{m_k \cdot \dim(U_k)} &  &   \\ 
 &  &  &  \ddots&  \\ 
  {\makebox(0,0){\text{\huge0}}}&  &  &  & \umatxK \\ 
\end{pmatrix}\,.
\end{equation}
\begin{equation}\label{eq:block-comm}
C(U) = \begin{pmatrix}
\, \bmatxone &  &  &  & {\makebox(0,0){\text{\huge0}}} &   \\ 
 &  \ddots &  &   &   \\ 
 &  &  \underbrace{\bmatxk}_{m_k\cdot\dim(U_k)} &  &   \\ 
 &  &  &  \ddots&  \\ 
  {\makebox(0,0){\text{\huge0}}}&  &  &  & \bmatxK \\ 
\end{pmatrix}\,.
\end{equation}
\end{widetext}
\normalsize
Morally, \textit{this means that the elements of the commutant can really only act upon irreps among multiplicities.}
The proof of this theorem rests fundamentally upon Schur's lemma (Theorem~\ref{lemma:Schur's lemma}), which ensures that equivariant maps between inequivalent irreps are zero maps, and equivariant maps between equivalent reps are scalar multiples of~$\idty$.

\end{subsection}

Let us now show how the block diagonal structure of the problem can significantly help us understand the way in which GQML models with equivariant quantum neural networks and  measurements process information. In particular, let us consider the schematic diagram of Fig.~\ref{fig:fig10}. We analyze a QML task where the dataset is $\SC=\{\rho_i,y_i\}_{i=1}^N$, the GQML model is of the form $h_{\vec{\theta}}(\rho_i)=\Tr[W(\vec{\theta})\rho_i W\ad(\vec{\theta}) M_i]$ as in Eq.~\eqref{eq:QML-model-qnn}, and the symmetry group that leaves the labels invariant is $G$. We know that any representation $U_g$ will admit some block diagonal structure as in Eqs.~\eqref{eq:block-uni}, and that both the equivariant quantum neural network $W(\vec{\theta})$ and the measurement operator $M$ will also have a block-diagonal structure as in Eq.~\eqref{eq:block-comm} (as they both belong to the commutant, see Eqs.~\eqref{eq:equiv} and~\eqref{eq:inv}). These block structures are schematically pictured in Fig.~\ref{fig:fig10}. 

\begin{figure*}[t]
    \centering
    \includegraphics[width=1\linewidth]{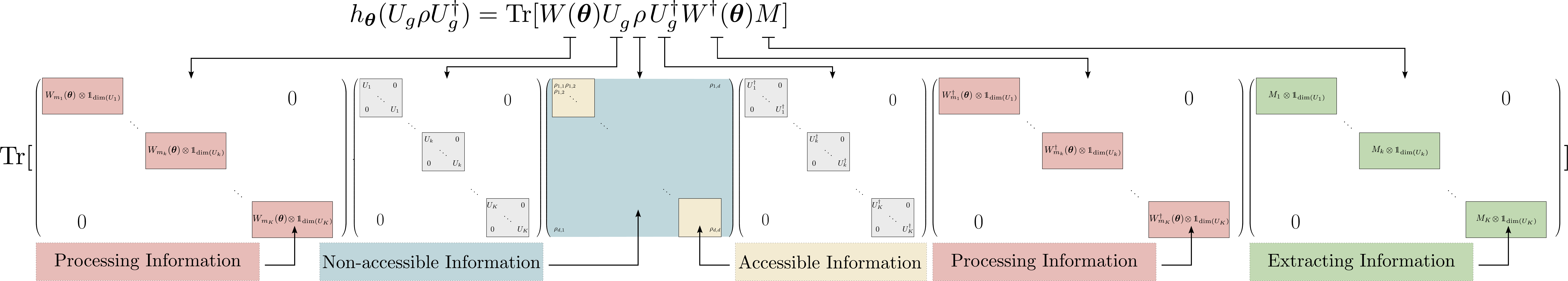}
    \caption{\textbf{QML and block-diagonal structure.} Here we consider a GQML model of the form $h_{\vec{\theta}}(U_g \rho_iU_g\ad)=\Tr[W(\vec{\theta})U_g\rho_iU_g\ad W\ad(\vec{\theta}) M_i]$, where the representation of the elements $g\in G$ have a block-diagonal structure as in Eqs.~\eqref{eq:block-uni}, while the equivariant neural network $W(\vec{\theta})$ and the equivariant measurement operator $M$ have a block-diagonal structure as in Eq.~\eqref{eq:block-comm}. When the states $\rho_i$ are not invariant under the action of $G$, the model can only access the information of $\rho_i$ encoded in the sub-blocks of the state associated to the irreps. On the other hand, there is information in the off-diagonal elements of $\rho_i$ that the QML model cannot access.
}
    \label{fig:fig10}
\end{figure*}

Then, we note that, as previously mentioned, the states need not be preserved under the action of a representation $G$. In this case, $U_g\rho_i U_g\ad\neq\rho_i$, and hence, $\rho_i$ cannot be simultaneously diagonalized as in Eqs.~\eqref{eq:block-uni} and~\eqref{eq:block-comm}. That is, $\rho_i$ does not necessarily admit a block diagonal structure in the same bases where $U_g$, $W(\vec{\theta})$, and $M$ are block diagonal. As shown in as Fig.~\ref{fig:fig10}, the previous implies that there is information in the quantum state $\rho_i$ that the model $h_{\vec{\theta}}(\rho_i)$ cannot access (i.e., the off-diagonal information outside of the sub-blocks of $\rho_i$ associated to the different irreps). Hence, the QML model can only process, extract, and learn from the information associated to the irrep sub-blocks. Crucially, if one wants to retrieve information outside of those sub-blocks (like we did in Fig.~\ref{fig:fig7}) one needs to \textit{change the representation}. This can be achieved, for instance, by allowing the QML model to act on multiple copies of each $\rho_i$ (i.e., going to a larger tensor representation of $G$) or in some cases by tracing out qubits (i.e., going to a smaller representation of $G$). We refer the reader to Ref.~\cite{nguyen2022atheory} for additional details on how GQML models can process information in the irreps. We mention in closing that this theme of block diagonalization imposing constraints on information processing appears in the study of quantum noise channels in the form of the ``noise commutant''~\cite{junge2005universal}, and more generally that commutants are a central object of study across the theory of operator algebras.

Taken together, the results in Fig.~\ref{fig:fig10} perfectly showcase how a fundamental understanding on the representation of the symmetry group allows us to better understand how information is being processed in GQML models, and as such, how we can construct better models to solve a given task.

\begin{subsection}{Schur-Weyl duality}\label{sec:Schur-Weyl Duality}
Here, we give a concrete (and very useful) example of the interplay of a representation and its commutant described in the commutant structure of Theorem~\ref{thm:commutant structure}: Schur-Weyl duality. 
There are several forms of Schur-Weyl duality, but we will restrict our attention to the classic case between tensor representations of the unitary group $U(d)$\footnote{The representation theory over $\Cbb$ of $U(d)$ and $SU(d)$ differ only by a phase, which is irrelevant for complex representations.} and index permuting representations of the symmetric group $S_n$.

Consider the space $\mathcal{H}=(\Cbb^d)^{\otimes n}$. 
A $g\in U(d)$ acts on $\mathcal{H}$ via the tensor representation $U$ (see Example~\ref{box:tensor rep of group SU(2)})
\[
    U_g \cdot (v_1\otimes \dots \otimes v_n) := U_g v_1 \otimes \dots \otimes U_g v_n ,
\] and a $\pi\in S_n$ acts on $\mathcal{H}$ via the index permutation representation $P$ (see Example~\ref{box:Permutation_rep_of_S3_on_3_qubits})
\[
    P_\pi \cdot (v_1\otimes \dots \otimes v_n) := v_{\pi^{-1}(1)} \otimes \dots \otimes v_{\pi^{-1}(n)} .
\] 
The key to notice is that these representations are mutual commutants of each other.
When this happens, Schur-Weyl duality states that the decomposition of $\mathcal{H}$ into $U(d)$ irreducibles $U_k$ immediately determines the decomposition of $\mathcal{H}$ into $S_n$ irreducibles, and vice versa.
In other words, if there are $K$ distinct irreducible invariant subspaces $U_k$ of $\mathcal{H}$, then there is a one-to-one correspondence to $S_n$ irreducible invariant subspaces $Q:\{U_k\}\to \{S_n \textit{ irreps}\}$ and we can write
\begin{equation}\begin{split} \label{eqn:schur-weyl duality}
    \mathcal{H} = \Cbb^d\otimes \dots \otimes  \Cbb^d &\cong \sum_{k=1}^K Q(U_k)\otimes U_k.
\end{split}\end{equation}
To illustrate this, if we set $n=2$, then (as we have seen similarly in Example~\ref{box:tensor rep of group SU(2)}), as representations of $U(2)$,
\[
    \mathcal{H} = \Cbb^2 \otimes \Cbb^2 \cong \text{Sym}^2(\Cbb^2) \oplus \text{Alt}^2(\Cbb^2).
\] We have seen several times by now that in the basis $\{\ket{11},\ket{01}+\ket{10},\ket{00}, \ket{10}-\ket{01}\}$, this is a block diagonalization statement. 
But let us think about these as representations of $S_2$. 
Observe that on $\text{Sym}^2(\Cbb^2)$, the index permutation representation $P$ of $S_2 = \{\idty, \text{SWAP}\}$ acts as the trivial representation $\texttt{1}$, since $P_{\text{SWAP}} = \idty$:
\begin{align*}
    P_{\text{SWAP}} \ket{11} &= \ket{11} \\
    P_{\text{SWAP}} \paran{\ket{10}+\ket{01}} &= \ket{10}+\ket{01} \\
    P_{\text{SWAP}} \ket{00} &= \ket{00}.
\end{align*} But on $\text{Alt}^2(\Cbb^2)$, the rep $P$ of $S_2$ acts as the sign representation $\texttt{sgn}$, since $P_\idty = \idty$ and $P_{\text{SWAP}} = -\idty$:
\[
    P_{\text{SWAP}}\paran{\ket{10}-\ket{01}} = -\paran{\ket{10}-\ket{01}}.
\]  In other words, the representation on $\mathcal{H}$ of the group $S_2\times U(2)$ given by $(\pi,g)\mapsto P_\pi (U_g\otimes U_g) $ decomposes into
\[
    \mathcal{H} = \Cbb^2 \otimes \Cbb^2 \cong \paran{\texttt{1}\otimes \text{Sym}^2(\Cbb^2)} \oplus \paran{\texttt{sgn}\otimes \text{Alt}^2(\Cbb^2)}.
\] Here, the connection to the expression in the commutant structure, Theorem \ref{thm:commutant structure}, becomes very explicit: we have that representatives $U_g\otimes U_g\in U(2)$ can be written as
\[
    U_g\otimes U_g = \begin{pmatrix} 
    & & & 0 \\
    & (U_g\otimes U_g)_1 & & 0 \\
    & & & 0 \\
    0 & 0 & 0 & (U_g\otimes U_g)_2
    \end{pmatrix}, 
\] while elements in the commutant $C(U)$, the permutation representatives $P_\idty,P_{\text{SWAP}}$, can be written as $P_\idty = \idty_4$ and 
\[
    P_{\text{SWAP}} = \begin{pmatrix} 
    & & & 0 \\
    & \idty_{3} & & 0 \\
    & & & 0 \\
    0 & 0 & 0 & -\idty_{1}
    \end{pmatrix} .
\] This exactly matches the block diagonal form we gave earlier!
\end{subsection}

\end{subsection}
\begin{subsection}{Transforms, convolutions, and equivariant nets}
Having spent the last few sections awash in block-diagonalization rhetoric, you may have a nagging thought in the back of your head: how do you actually perform this block-diagonalization in practice?
Like for any diagonalization procedure, the answer in general boils down to a series of linear algebra computations.
But by now it is apparent that not all representations are created equal: some representations have earned special privilege and their own transforms for block-diagonalization. 
The quantum Schur transform and the group Fourier transform are exactly examples of this.
In both cases, these transforms are well exposited in the quantum literature: we recommend Harrow's thesis ~\cite{harrow2005applications} for the former and Childs' and van Dam's review~\cite{childs2010quantum} for the latter.

\begin{subsubsection}{Quantum Schur transform}
The quantum Schur transform implements a change of basis from the computational basis to a basis block-diagonalizing the representations of $U(d)^{\otimes n}$ and $S_n$ described in the Schur-Weyl duality section. 
There exists several approaches to implement the Schur transform in a quantum circuit (see for instance ~\cite{bacon2007quantum,kirby2017practical,krovi2019efficient}), and it has in turn given rise to a wide variety of applications across quantum algorithms and quantum information.
This includes quantum teleportation~\cite{studzinski2017port,christandl2021asymptotic}, quantum compression~\cite{knill1997theory,kempe2001theory,yang2016efficient}, distortion-free entanglement concentration~\cite{matsumoto2007universal,beckey2021computable}, estimating the spectrum of density operators~\cite{keyl2005estimating}, and a litany of others ~\cite{harrow2005applications}.
Notably, it has recently breached into the world of QML~\cite{zheng2022super}, where $SU(d)$-equivariant quantum convolutional neural network training algorithms were shown to enjoy super-exponential speedup over classical algorithms. 
Central to the result is block diagonalization via the Schur transform, combined with an $SU(d)$-equivariant architecture. Similarly, it has been recently shown that $S_n$-equivariant quantum neural networks exhibit several favorable properties such as not exhibiting barren plateaus, being able to efficiently reach the overparametrization regime~\cite{larocca2021theory}, and being able to generalize from few training data points~\cite{schatzki2022theoretical}. 
Here, the power of working on irreps instead of the full space is evident.
\end{subsubsection}

\begin{subsubsection}{Group Fourier transform and convolutions}
Of course, there are many other interesting representations. 
Perhaps one of the most important is the so-called ``left regular representation'', which contains \textit{every} irreducible representation as an invariant subspace.
In some sense, this allows us to study all of the representations of a group simultaneously. 
Much as the Schur transform block-diagonalized the representations $U(d)^{\otimes n}$, the group Fourier transform will block-diagonalize the left regular representation.
Indeed, the usual Fourier transform is effectively a special case\footnote{This is slightly misleading, as the theory is much richer when the groups at play are not compact.
See \cite{folland2016course}.} of this, where the group is the additive group of real numbers $(\R,+)$ acting on functions $L^2(\R)$ by translation $g\cdot f(x) = f(x-g)$.
For ease of discussion, we restrict to the case of a finite group $G$, mentioning that via the Peter-Weyl theorem, this can be readily extended to compact Lie groups $G$ acting appropriately upon infinite dimensional Hilbert spaces~\cite{folland2016course}.

\begin{definition}[Left regular representation]
Let $V$ be a complex vector space of dimension $\abs{G}$ with orthonormal basis $\{\ket{g} : g\in G\}$ labeled by elements of $G$.
Then the \emph{left regular representation} $L$ of $G$ is given by 
\[
    L_h\ket{g} = \ket{hg}.
\]
\end{definition}

We need to note that there is another notion of the left regular representation, which effectively subsumes the above definition: if $V = L^2(G)$, the square-integrable functions on $G$ (which recall is a manifold as well as a group), then the left regular representation $L$ acts on functions $f\in V$ by
\begin{equation} \label{def:left regular rep, function version}
    L_g \cdot f(x) = f(g^{-1} x) , \qquad x\in G.
\end{equation} This should be somewhat reminiscent of the permutation representation (Example~\ref{box:Permutation_rep_of_S3_on_3_qubits}).
Indeed, when $G$ a compact Lie group, this is the representation which contains every finite dimensional irrep via the Peter-Weyl theorem~\cite{folland2016course}.

As promised, the block diagonalization result.
\begin{proposition}(Ref.~\cite{childs2010quantum})
Let $\hat{G}$ denote the set of irreps of $G$.
Then as representations,
\[
    L_g \cong \bigoplus_{\rho\in \hat{G}} \rho \otimes \idty_{\dim(\rho)}.
\] Moreover, $L$ is block-diagonalized by the group Fourier transform $F_G$:
\[
    L_g = F_G^\dagger \paran{\bigoplus_{\rho\in \hat{G}} \rho \otimes \idty_{\dim(\rho)}} F_G ,
\] where $F_G$ is a unitary matrix.
\end{proposition}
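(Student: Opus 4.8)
The plan is to prove both claims simultaneously by constructing $F_G$ explicitly and reading off its intertwining property, since this exhibits the abstract isomorphism and the block-diagonalizing unitary in one stroke. First I would note that $L$ is a unitary representation: each $L_h$ merely permutes the orthonormal basis $\{\ket{g}\}$, hence is a permutation matrix, and by Theorem~\ref{thm:complete reducibility of unitary representations} it is completely reducible into irreps drawn from $\hat{G}$; it remains only to pin down the multiplicities and the intertwiner. I would pass to the function picture $V = L^2(G)$ with $L_h f(x) = f(h^{-1}x)$ (equivalent to $L_h\ket{g} = \ket{hg}$ under $\ket{g}\leftrightarrow \delta_g$), and define the group Fourier transform by $\hat{f}(\rho) = \sqrt{\dim(\rho)/|G|}\sum_{g\in G} f(g)\,\rho(g)$, a map $F_G: L^2(G) \to \bigoplus_{\rho\in\hat{G}} M_{\dim(\rho)}(\Cbb)$ sending $f$ to the tuple $(\hat{f}(\rho))_\rho$.

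The crux is to show $F_G$ is unitary. This reduces to the Schur orthogonality relations $\sum_g \rho(g)_{ab}\,\overline{\sigma(g)_{cd}} = \frac{|G|}{\dim(\rho)}\delta_{\rho\sigma}\delta_{ac}\delta_{bd}$, which I would derive from Schur's lemma (Theorem~\ref{lemma:Schur's lemma}) by a standard averaging (twirling) argument: for any linear $T$, the average $\frac{1}{|G|}\sum_g \rho(g)\, T\, \sigma(g)^\dagger$ is an equivariant map, hence $0$ when $\sigma\ncong\rho$ and a scalar multiple of $\idty$ when $\sigma=\rho$, and reading off matrix entries yields the relations. These say exactly that the normalized matrix coefficients $\sqrt{\dim(\rho)/|G|}\,\rho(g)_{ab}$ are orthonormal in $L^2(G)$, and the dimension count $\sum_{\rho}\dim(\rho)^2 = |G|$ confirms that they form a complete basis, so $F_G$ is a unitary change of basis onto $\bigoplus_\rho M_{\dim(\rho)}(\Cbb)$.

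With $F_G$ in hand the block structure follows from a direct change of variables (the convolution theorem in disguise): computing $\widehat{L_h f}(\rho) = \sqrt{\dim(\rho)/|G|}\sum_g f(h^{-1}g)\,\rho(g)$ and substituting $g\mapsto hg$ gives $\widehat{L_h f}(\rho) = \rho(h)\,\hat{f}(\rho)$. Thus in the Fourier domain $L_h$ acts on each block $M_{\dim(\rho)}(\Cbb)$ by left multiplication by $\rho(h)$. Identifying $M_{\dim(\rho)}(\Cbb)\cong \Cbb^{\dim(\rho)}\otimes \Cbb^{\dim(\rho)}$, with the column space carrying $\rho$ and the row index furnishing the multiplicity, left multiplication by $\rho(h)$ is precisely $\rho(h)\otimes \idty_{\dim(\rho)}$. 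This gives $F_G\, L_h\, F_G^\dagger = \bigoplus_{\rho\in\hat{G}} \rho(h)\otimes \idty_{\dim(\rho)}$, which is simultaneously the claimed isomorphism and the explicit block-diagonalization.

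I expect the main obstacle to be establishing that $F_G$ is genuinely unitary, i.e. proving Schur orthogonality together with completeness of the matrix coefficients (equivalently $\sum_\rho \dim(\rho)^2 = |G|$); this is the representation-theoretic heart, with everything afterward being bookkeeping. As an independent consistency check on the multiplicities, I would compute the character $\chi_L(h) = \Tr[L_h] = |G|\,\delta_{h,e}$ (since $L_h$ fixes $\ket{g}$ exactly when $hg=g$) and verify $\langle \chi_\rho, \chi_L\rangle = \frac{1}{|G|}\,\overline{\chi_\rho(e)}\,|G| = \dim(\rho)$, recovering the multiplicity $\dim(\rho)$ without reference to the Fourier construction.
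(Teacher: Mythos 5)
Your proof is sound, but there is nothing in the paper to compare it against: the proposition is stated without proof and attributed to the Childs--van Dam review~\cite{childs2010quantum}. What you have written is the standard self-contained argument, and its main steps are all correct: unitarity (hence complete reducibility, Theorem~\ref{thm:complete reducibility of unitary representations}) of $L$; Schur orthogonality of matrix coefficients derived from Schur's lemma (Theorem~\ref{lemma:Schur's lemma}) via the averaging map $T\mapsto \frac{1}{|G|}\sum_g \rho(g)\,T\,\sigma(g)^\dagger$; the intertwining identity $\widehat{L_h f}(\rho)=\rho(h)\hat f(\rho)$ by the substitution $g\mapsto hg$; and the identification of left multiplication on $M_{\dim(\rho)}(\Cbb)$ with $\rho(h)\otimes\idty_{\dim(\rho)}$, which delivers both the abstract isomorphism and the explicit conjugation by $F_G$ in one stroke.

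The one point to tighten is logical order, not substance. You invoke $\sum_{\rho\in\hat G}\dim(\rho)^2=|G|$ as a known fact to conclude that the normalized matrix coefficients form a complete basis (so that $F_G$ is unitary rather than merely an isometry onto its image). But the standard proof of that dimension formula is precisely the decomposition of the regular representation that the proposition asserts, so as written the completeness step is circular unless the formula is imported from elsewhere. The fix is already in your closing paragraph: promote the character computation from a consistency check to a main step. Complete reducibility gives $L\cong\bigoplus_{\rho} m_\rho\,\rho$ for some multiplicities; character orthogonality (a corollary of the Schur orthogonality you derived) gives $m_\rho=\langle\chi_\rho,\chi_L\rangle=\dim(\rho)$ since $\chi_L(h)=|G|\,\delta_{h,e}$; counting dimensions then yields $|G|=\sum_\rho\dim(\rho)^2$, which in turn gives completeness of the matrix coefficients and unitarity of $F_G$. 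With that reordering, your argument is complete and self-contained.
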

A great many authors, in and out of the quantum community, have written well about the group Fourier transform---see e.g. ~\cite{childs2010quantum} for a quantum perspective and ~\cite{folland2016course} for a harmonic analysis perspective---so we will not say much about its intriguing properties.
Perhaps most notably, the group Fourier transform plays a central role in quantum algorithms attacking hidden subgroup problems, which include as famous special cases prime number factorization and discrete logarithm~\cite{shor1994algorithms,nielsen2000quantum}.
For a review of these and other fundamental applications to quantum algorithms, see~\cite{childs2010quantum}.

For QML, the power lies in that when we have \textit{any} irreducible representation, it can be realized as an invariant subspace of the regular representation (or several copies thereof), and when this is realized, the group Fourier transform block diagonalizes it.
Encouragingly, the regular representation is surprisingly common within classical machine learning, thanks to a privileged relationship with \textit{group convolution}.\footnote{The connection is rich and beyond our scope, but one key to all of it is that for functions $u,v\in L^2(G)$, the left regular representative $L$ commutes with group convolution: $L_g (u\star v) = (L_g u )\star v$ for all $g\in G$.} And beyond being a crucial primitive, one can show that every classical feedforward neural network is equivariant to a compact $G$ (meaning the layers of the network are equivariant maps in the sense of Definition \ref{def:equivariance}) if and only if it is a $G$-convolutional neural network (Theorem 1~\cite{kondor2018generalization}).
This correspondence between equivariant layers and convolutions is remarkable, given the dramatic success of convolutional neural networks in classical machine learning.
The intimate connections between the group Fourier transform, group convolutions, the regular representation, and equivariance is further explored in the contexts of quantum algorithms in~\cite{castelazo2021quantum} and QML in~\cite{nguyen2022atheory,castelazo2021quantum}.

\end{subsubsection}
\end{subsection}
\end{section}

\section{Symmetries in the wild}\label{sec:symmetries in the wild}

Quite a bit of abstract theory has been presented thus far.
We hope that the examples provided earlier and in Boxes~\ref{box:1} and~\ref{box:2} leave the reader with some intuition for first noticing a symmetry, identifying the group/algebra representation that formalizes the symmetry, and applying the powerful tools of representation theory to decompose it into irreducible representations which can then be further analyzed.
In practice, the first step is often the hardest and requires genuine insight into the invariants of the problem at hand: finding and pinning down the symmetry generally requires physical information, geometric information, computational patterns, combinatorial witchcraft, or divine intervention.
It also typically requires some familiarity with ``usual suspect'' symmetry groups, and so we have included a (highly noncomprehensive) list of groups that just have a way of sneaking all over the place in Boxes~\ref{box:3} and~\ref{box:4}.
In this section, we briefly describe a few strategies for finding and capturing wild symmetries which have enjoyed a variety of successes across disciplines.
The distinctions between these types of ``symmetry traps'' is somewhat artificial, and in reality there is a high degree of overlap among them.

\begin{subsection}{Tripwire nets: physical invariants}
The history of symmetries by way of physical invariants and conserved quantities is a rich one. To understate it, it is useful to start at the beginning: at its inception Lie theory was constructed to study symmetries of differential equations, many of which arose from physical models. 
One of the most celebrated instances of this appears while studying radial potentials in both classical and quantum mechanics: in some sense, the ``radial'' assumption of a potential $f:\R^3\to \R$, meaning the 3D coordinates depend only on the distance to a fixed origin $f(\mathbf{x}) = f(\abs{\mathbf{x}})$, can be rephrased by saying that $f$ is invariant under the 3D rotation group $SO(3)$, where the representation is on the (infinite dimensional) space $V = L^2(\R^3)$ and is given by $g\cdot f(\mathbf{x}) = f(g^{-1} \mathbf{x})$.\footnote{This shows up quite prominently in the analysis of the Hydrogen atom: for a treatment which is more Lie-theoretic in flavor, see Hall's other book \cite{hall2013quantum}.}

Let us sketch a common (but not universal) program for identifying these physical invariants when one suspects a given symmetry group $G$ is at work.
We will fundamentally use the following proposition and of course, the correspondence between Lie group and Lie algebra representations (Theorems~\ref{thm:lie group reps yield lie alg reps} and~\ref{thm:lie alg reps lift to simple lie group reps}).
First, one takes a Hamiltonian for a system $H$ and considers a putative unitary representation $U$ of a symmetry group $G$ on the Hilbert space of states $V=\mathcal{H}$. 
This immediately induces a natural representation on operators on this Hilbert space, as discussed underneath Definition~\ref{def:dual representation}.
So, we can check whether the Hamiltonian is invariant under the symmetry either at the level of the Lie group representation $U$ or the Lie algebra representation $u$: so, checking either $U_g H U_g^\dagger = H$ for all $g\in G$, or (taking derivatives as in Example~\ref{box:adjoint rep of alg su(2)}) $[u(X), H] = 0$ for all $X\in \liea$. 
Commonly it is easier to check the latter condition, because $\liea$ is a vector space and so we need only check on a basis (unlike the manifold $G$).
Once this is accomplished, we apply the following proposition:

\begin{proposition}\label{prop:symmetries, Hamiltonians, and eigenstates}
Let $U:G\to GL(\mathcal{H})$ be a representation of a group $G$, and let $H$ be a Hermitian operator such that $[U_g,H] = 0$ for all $g\in G$. 
Then, for any eigenvector $\ket{\psi}$ of $H$ with eigenvalue $\lambda$, $U_g\ket{\psi}$ is also an eigenvector of $H$ of eigenvalue $\lambda$.
\end{proposition}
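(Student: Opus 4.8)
The plan is to give a direct computational proof relying only on the hypothesis $[U_g, H]=0$ and the defining eigenvalue equation $H\ket{\psi} = \lambda\ket{\psi}$. The entire argument amounts to applying $H$ to the candidate eigenvector $U_g\ket{\psi}$ and using the commutation relation to move $H$ past $U_g$. This is a one-line computation once the pieces are assembled, so there is essentially no main obstacle here; the proposition is a standard and elementary consequence of commutation.

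First I would fix an arbitrary $g\in G$ and an arbitrary eigenvector $\ket{\psi}$ of $H$ with eigenvalue $\lambda$, so that $H\ket{\psi}=\lambda\ket{\psi}$. The goal is to show $H(U_g\ket{\psi}) = \lambda (U_g\ket{\psi})$. The key step is to commute $H$ through $U_g$: since $[U_g,H]=0$ we have $H U_g = U_g H$, hence
\begin{equation}
    H\paran{U_g\ket{\psi}} = U_g H\ket{\psi} = U_g \paran{\lambda \ket{\psi}} = \lambda\paran{U_g\ket{\psi}}.
\end{equation}
This exhibits $U_g\ket{\psi}$ as an eigenvector of $H$ with the same eigenvalue $\lambda$, provided $U_g\ket{\psi}\neq 0$; and indeed $U_g$ is invertible (it lies in $GL(\mathcal{H})$ as the image of a representation, per Definition~\ref{def:representation}), so $U_g\ket{\psi}\neq 0$ whenever $\ket{\psi}\neq 0$. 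This completes the argument.

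One remark worth folding into the writeup is the conceptual upshot, which justifies the ``physical invariant'' framing of the surrounding section: the computation shows that each eigenspace $E_\lambda = \{\ket{\psi} : H\ket{\psi}=\lambda\ket{\psi}\}$ is an invariant subspace of the representation $U$ in the sense of Definition~\ref{def:invariant subspace}, since $U_g$ maps $E_\lambda$ into itself for every $g\in G$. This is precisely what makes the eigenspaces of a symmetric Hamiltonian natural arenas for the representation-theoretic block-diagonalization developed throughout the paper, and it is the mechanism by which symmetry labels (e.g.\ angular momentum quantum numbers) organize the spectrum.
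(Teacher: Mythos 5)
Your proof is correct and is essentially identical to the paper's: both consist of the one-line computation $H U_g\ket{\psi} = U_g H\ket{\psi} = \lambda U_g\ket{\psi}$ using the commutation hypothesis. Your added observation that $U_g\ket{\psi}\neq 0$ (by invertibility of $U_g$) is a small careful touch the paper omits, but it does not change the argument.
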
\begin{proof}
Observe that $H U_g \ket{\psi} = U_g H\ket{\psi} = \lambda U_{g}\ket{\psi}.$
\end{proof}
\begin{observation}
\emph{Warning: }As discussed earlier, this does not mean that $\ket{\psi} = U_g\ket{\psi}$! 
An illustrative counterexample is to take $H = \idty$ on $\mathbb{C}^2$ and to consider the representation $U:\mathbb{Z}_2\to GL(\mathbb{C}^2)$ given by $U_1 = \idty$ and $U_\sigma = X$. 
Then certainly $\ket{0}, \ket{1}$ are eigenvectors of $H$, but $U_\sigma \ket{0} = \ket{1}$.
In general, this means that representations can (but do not have to) permute within eigenspaces of $H$, but they cannot map between eigenspaces.
In other words, since $[U_g,H]=0$, $H$ is simultaneously block diagonalized with $U_g$, and distinct eigenspaces necessarily inhabit different blocks.
\end{observation}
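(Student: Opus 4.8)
The plan is to prove the statement by a single direct computation, since the only hypotheses available are the eigenvalue equation and the commutation relation, and these already suffice. First I would fix notation and record the two facts in play: the assumption that $\ket{\psi}$ is an eigenvector, $H\ket{\psi} = \lambda\ket{\psi}$, and the commutation hypothesis $U_g H = H U_g$ for every $g\in G$. The goal is then to verify that the transformed vector $U_g\ket{\psi}$ satisfies the very same eigenvalue equation with the same eigenvalue $\lambda$.

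The central step is to apply $H$ to the candidate vector $U_g\ket{\psi}$ and manipulate it using only these two facts. Pushing $H$ past $U_g$ via the commutation relation and then invoking the eigenvalue equation yields
\begin{equation}
    H\, U_g\ket{\psi} = U_g\, H\ket{\psi} = \lambda\, U_g\ket{\psi},
\end{equation}
which is exactly the defining relation for $U_g\ket{\psi}$ to be an eigenvector of $H$ with eigenvalue $\lambda$. Notably, neither Hermiticity of $H$ nor unitarity of $U_g$ is invoked here; only invertibility of $U_g$ (as an element of $GL(\mathcal{H})$) and the commutation relation enter the argument.

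The sole point requiring any care — and the closest thing to an obstacle in an otherwise immediate proof — is confirming that $U_g\ket{\psi}$ is a \emph{bona fide} eigenvector, i.e. a nonzero vector, since the zero vector is conventionally excluded from being an eigenvector. This follows at once from $U_g\in GL(\mathcal{H})$ being invertible, so that a nonzero input $\ket{\psi}$ produces a nonzero output $U_g\ket{\psi}$. Finally, I would emphasize in the writeup what the statement does \emph{not} claim: it does not assert $U_g\ket{\psi} = \ket{\psi}$. The vector may be moved to a different point of the same eigenspace, and the result only guarantees that representatives $U_g$ map each eigenspace of $H$ into itself. This is precisely the content highlighted in the warning accompanying the proposition, and it is the conceptual payoff that makes the triviality of the calculation worth stating explicitly.
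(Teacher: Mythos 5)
Your central computation, $H\,U_g\ket{\psi} = U_g H\ket{\psi} = \lambda\, U_g\ket{\psi}$, is correct, but it is the paper's proof of the \emph{preceding proposition}, not of this Observation. The Observation is fundamentally a negative, existence-type claim: that the hypotheses $[U_g,H]=0$ and $H\ket{\psi}=\lambda\ket{\psi}$ do \emph{not} force $U_g\ket{\psi}=\ket{\psi}$. To establish that an implication fails you must exhibit a witness in which the hypotheses hold and the stronger conclusion fails; remarking that your derivation never used or asserted $U_g\ket{\psi}=\ket{\psi}$ is not enough, since the implication could conceivably still hold for other reasons. In fact it \emph{does} hold (up to phase) whenever $\lambda$ is nondegenerate, because then $U_g\ket{\psi}$ must be proportional to $\ket{\psi}$ --- a point the paper itself makes at the end of the section --- so any counterexample necessarily involves a degenerate eigenspace. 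This is exactly what the paper's one-line witness supplies: take $H=\idty$ on $\mathbb{C}^2$, so that every vector is an eigenvector with eigenvalue $1$, and the representation $U:\mathbb{Z}_2\to GL(\mathbb{C}^2)$ with $U_1=\idty$, $U_\sigma=X$; then $[U_\sigma,H]=0$ trivially, yet $U_\sigma\ket{0}=\ket{1}\neq\ket{0}$. Your writeup contains no such example, so the warning itself --- the actual content of the statement --- is left unproved.

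What your argument does correctly establish is the complementary positive half of the Observation: $U_g$ maps each eigenspace of $H$ into itself (and, by the invertibility you rightly note, onto itself), hence representatives cannot map between distinct eigenspaces, which is also the source of the simultaneous block-diagonalization remark. Adding the degenerate counterexample above would close the gap and complete the proof.
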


So, what does this do for us?
This means that the eigenspaces of $H$ (often interpreted as energy eigenspaces) are invariant subspaces of the representation $U$, and we can use the powerful methods of representation theory to further analyze them.
Take a look at two examples, one discrete and one continuous, of this approach in Box~\ref{box:1} and Box~\ref{box:2}. 
In the former, we see that translation invariant Hamiltonians have energy eigenspaces which are invariant under translations. 
Heeding the warning, this does not necessarily mean that every eigenstate is translation invariant: for instance, in a process known as dimerization~\cite{boette2016pair}, the Majumdar-Ghosh model~\cite{majumdar1969next} has a translation-invariant Hamiltonian, but its 2 ground states are only 2-periodic. 
In this case, translation by 1 site maps between the two ground states (and so of course preserves the energy).
To translate to the language of representation theory, we would say the ground state space forms an irreducible representation for this lattice translation symmetry.

While a simple case, the theme is quite deep: when an eigenspace forms an irreducible representation, we need only find a single eigenvector---the rest can be generated by acting on this vector with the representation (in the above example, acting via the translation operator).
Alternatively, when the eigenspace forms a reducible representation, restricting to an invariant subspace allows us to continue extracting structure via further decomposition.
As a special case of this lemma, note that if one possesses a nondegenerate eigenstate, i.e. one whose eigenspace is one-dimensional, that state automatically inherits the symmetries of its Hamiltonian. 
\end{subsection}

\begin{subsection}{Deadfalls: geometric invariants}
Representations have close ties to geometry---indeed, many of the physical symmetries we described a moment ago are highly geometric in nature. 
An obvious application to QML is when the data itself is invariant under the symmetry group~\cite{larocca2022group,verdon2019quantumgraph,sauvage2022building}.
But often, geometric invariants have a way of hiding in plain sight: in fact, even the definitions of the classical Lie groups can be phrased as ``the transformations which leave a tensor invariant''.
Just as in the case of the inner product, other tensors (like bilinear forms and the determinant) carry rich geometric information, and many of the labels we consider in quantum machine learning are ultimately tensorial in nature.
For example:
\begin{itemize}
    \item The unitary group $U(d)$ consists of linear transformations $U$ are those which leave the Hermitian inner product $\langle \cdot, \cdot \rangle$ invariant:
    \[
        \langle Uv,Uw \rangle = \langle v,w \rangle.
    \]
    \item The special linear group $SL(d)$ consists of linear transformations which preserve volumes, i.e. invertible maps with $\det = 1$.
    \item The orthogonal group $O(d; \R)$ can be defined as the set of transformations $O$ preserving a symmetric positive definite bilinear form $Q(v,w) = v_1w_1 + v_2w_2 + \dots + v_dw_d$:
    \[
        Q(Ov,Ow) = Q(v,w) .
    \] By adjusting the bilinear form, we can get other interesting groups: for instance, the Lorentz group $O(1,3)$, which is the symmetry group of Minkowski spacetime in relativity, is the set of linear transformations $\Lambda$ preserving the bilinear form
    $Q(v,w) = v_t w_t - v_xw_x - v_yw_y - v_zw_z$
    \[
        Q(\Lambda v, \Lambda w) = Q(v,w) .
    \]
\end{itemize}
Despite their relative simplicity, these symmetry groups already appear in a wide range of QML problems~\cite{larocca2022group} such as classifying datasets based on purity~\cite{garcia2013swap,cincio2018learning,huang2021quantum}, time-reversal dynamics~\cite{sachs1987physics,huang2021quantum,aharonov2022quantum,chen2021exponential},
multipartite entanglement~\cite{schatzki2021entangled,beckey2021computable}, and graph isomorphism~\cite{verdon2019quantumgraph}. We could continue \textit{ad infinitum}, but the takeaway is to pay close attention to tensors which track geometric information: they may have several naturally associated symmetry groups, and if you are lucky, your data labels may respect their representations.

In the Review~\cite{jaeger2005entanglement}, Jaeger details more concrete connections between geometry and quantum information: geometric invariants under local unitary transformations (LUTs) and stochastic local operations and classical communication (SLOCCs) provide rich insight into a variety of entanglement and mixedness measures in multi-qubit systems, with special attention given to invariants of the Lorentz group $O(1,3)$.
\end{subsection}

\begin{subsection}{Snares: algebraic invariants}
Of course, the geometric tensorial invariants we described could be recast as algebraic invariants---they are, after all, given by polynomials and equations. 
But let us go another direction and think about recasting some well-known identities within quantum information in a more symmetry-driven light (we have done this already in our examples in Box~\ref{box:2}, but why not a few more):
\begin{itemize}
    \item The spectrum of an operator/mixed state is invariant under change of basis, i.e. under the symmetry group $GL(d)$.
    We use this fact constantly, for instance using the identity $\Tr[A] = \Tr[gAg^{-1}]$.
    A case could be made that this is the most fundamental symmetry in quantum computing.
    \item Two of the most important rank $n$ tensor invariants are full symmetry and antisymmetry under index permutations, aka the trivial and sign representations of $S_n$, respectively.
    These representations are ubiquitous in quantum information and computation: for starters, we recommend taking a look at Harrow's~\cite{harrow2013church}. 
    As a particularly important example, we note that the determinant (or the Levi-Cevita tensor) is antisymmetric. Leveraging the symmetric subspace, one can find optimal channels for state estimation/cloning and prove a quantum de Finetti theorem~\cite{harrow2013church}.
    \item Rank $n$ tensor invariants arise as representations of the symmetric group $S_n$ (and by Schur-Weyl duality, as representations of $U(d)^{\otimes n}$ acting on all indices). 
    But the symmetric and antisymmetric representations are just the tip of the iceberg: tensor symmetries play a central role in the analysis of tensor networks~\cite{biamonte2019lectures}, including those of matrix product states (MPS) like the AKLT chain and more generally projected entangled pair states (PEPS)~\cite{affleck1988valence, cirac2020matrix}. 
    It should not be surprising that representation theory plays a critical role in the classification of symmetry-protected topological phases of matter. 

    \item A large class of optimization problems in quantum information can be phrased as semidefinite programs with a $U^{\otimes p}\otimes (U^*)^{\otimes q}$ symmetry. Leveraging this symmetry allows one to prove significantly better time scaling than standard semidefinite programming guarantees~\cite{grinko2022linear}.
    \item Representations of the symmetric group in turn are closely tied to symmetric polynomials (aka Schur polynomials). While it is not obvious to us how these will appear in QML contexts, it is still worth keeping them in mind, as their ubiquity in representation theory makes use cases feel inevitable.
    
    \item Classifying states under so-called "stochastic local operations and classical communication" (SLOCC) is a major goal of entanglement theory. Two states are equivalent under SLOCC if one can convert between them with some non-zero probability. In fact, these classes can be formalized through the orbits of the special linear group $SL(d)$. One can classify and measure entanglement through SL-invariant polynomials (SLIPs)~\cite{dur2000three,verstraete2003normal,wootters1998entanglement,coffman2000distributed,leifer2004measuring,osterloh2005constructing}.
\end{itemize}
\end{subsection}

\begin{subsection}{Variable rate loans: other dangerous traps}

Say that we want to check if a given problem, defined in terms of a Hamiltonian $H$ exhibits symmetry to a given group $G$. This can be cast into the task of checking whether $H$ commutes with every representation $U_g$ of the elements $g$ of $G$. Despite its innocuous look, this task can quickly become intractable if $H$ is very large (for $n$-qubit systems, the dimension of $H$ grows as $2^n$). However, one can tackle this problem from the optics of quantum property testing, which quite literally studies the task of ``given a large object that either has a certain property or is somehow far from having that property, a tester should efficiently distinguish between these two cases''. We refer the reader to Ref.~\cite{montanaro2013survey} for a nice review on quantum property testing. 

Recent efforts have been put forward towards developing symmetry-testing algorithms. For instance, the work in~\cite{laborde2022quantum} proposes a quantum algorithm to check whether a Hamiltonian exhibits symmetry with respect to a group, and the work in~\cite{laborde2021testing} sets forth a variety of quantum algorithms with corresponding numerics which test symmetries of states and channels. These point towards an exciting new era of the symmetry program in physics and computer science, wherein we may algorithmically test for symmetries without requiring analytic expressions of models and data.

\end{subsection}

\section{Outlook}\label{sec:outlook}

Representation theory is one of the most powerful tools that any quantum computing or quantum information scientist can possess under their belt. For the particular context of QML, representation theory allows for the manipulation and understanding of symmetries in the data, as well as to study how physical processes can be built to respect those symmetries. Despite its tremendous importance, many key results and insights in representation theory are hidden behind mathematical and algebraic notations that may seem insurmountable for non-experts. However, we hope that after reading this article, the reader will find themselves comfortable enough with the notational and conceptual basics to dig deeper into the rich literature of representation theory (we again cannot recommend enough Refs.~\cite{hall2013lie,simon1996representations,fulton1991representation,serre1977linear}) and its fundamental importance for the future of QML.

Motivating this entire article is the simple idea that building models respecting the symmetries of a dataset should improve their performance. While this claim has been investigated and verified in the classical literature for some years now~\cite{bronstein2021geometric,cohen2016group,maron2018invariant,kondor2018clebsch,kondor2018generalization,bekkers2018roto,anderson2019cormorant,cohen2019gauge,cohen2019general,elesedy2021provably,wang2022approximately}, the same cannot be said for the quantum realm. GQML is a nascent field with many promising results~\cite{larocca2022group,skolik2022equivariant,meyer2022exploiting,zheng2021speeding,sauvage2022building},  but far more work is needed to elucidate problems wherein models with symmetry may be able to outperform symmetry-agnostic ones, possibly granting quantum advantage. Conversely, it might also be interesting to analyze if using slightly-symmetric breaking models has any benefit.  Navigation of this complex landscape of models will inevitably demand intrepid researchers armed with the fundamentals of representation theory.

Let us briefly highlight our journey. First, the GQML program is straightforward: given a task a hand, we need to \textit{identity the relevant symmetries underlying the data}. This is not always an easy task. Here, there is no better advice than ``learn the examples and trust your gut''. To this end we have presented several ``candidate'' symmetry group examples that should occupy a privileged place in one's mind, as well as standard methods that can facilitate their discovery: see Box~\ref{box:1} and Box~\ref{box:2} for a field guide on some common symmetry groups.

Once the symmetry group has been identified we can plan to build GQML models respecting these symmetries (e.g., \textit{equivariant} quantum channels, and  measurement operators). In particular, when finding the channels that respect the symmetry group it is highly advisable to follow the trick: \textit{When given a problem with Lie group symmetry, pass to the Lie algebra, analyze it, and return to the Lie group.} As we have seen, working at the Lie algebra level allows us to harness the full power of linear algebra.

The next fundamental step is understanding how the representation of the symmetry group induces a block diagonal structure in the problem. We cannot overemphasize that \textit{GQML is all about block-diagonal structures}. Understanding the decomposition into irreps of the group representation allows us to identify the accessible and processable information of a quantum state for an equivariant quantum neural network or an equivariant measurement operator. Crucially, here we can leverage the full power of changes of representation (e.g. acting on multiple copies of the data, or tracing out qubits), as these change the block-diagonal structure and concomitantly, the information that the model can ``see''. This realization paves the way towards a more detailed understanding of how information gets embedded into the different irreps and how one can use this to solve a given QML task.

We hope that this work will serve the reader as a starting point towards the exciting world of representation theory. But be warned, as pursuing representation theory is a dangerous business. If one does not keep their feet, there is no knowing to where one may be swept off.

\section*{Acknowledgements}

We thank Bruno Nachtergaele, Eugene Gorsky, Timo Eckstein, and Laura Gentini for helpful discussions and comments on our work.
This work was partly supported by the U.S. Department of Energy (DOE) through a quantum computing program sponsored by the Los Alamos National Laboratory (LANL) Information Science \& Technology Institute.
M.R. was partially supported  by the National Science Foundation through DMS-1813149 and DMS-2108390.
L.S. was partially supported by the NSF Quantum Leap Challenge Institute for Hybrid Quantum Architectures and Networks (NSF Award 2016136).
P.J.C. and M.L. were initially supported by the U.S. DOE, Office of Science, Office of Advanced Scientific Computing Research, under the Accelerated Research in Quantum Computing (ARQC) program. P.J.C. and L.S. were also supported by the LANL ASC Beyond Moore's Law project. 
F.S. was supported by the Laboratory Directed Research and Development (LDRD) program of LANL under project number 20220745ER. 
M.L. was also supported by the Center for Nonlinear Studies at LANL. 
M.C. acknowledges support by the LDRD program of LANL under project numbers 20210116DR and 20230049DR.

\bibliography{quantum.bib}

\clearpage
\newpage

\begin{pabox}[float*, floatplacement=h!t,label={box:1},width=2\columnwidth]{Discrete symmetries}
\textbf{Bit parity with bit-flip symmetry:} Let $n$ be an even number of qubits. 
Consider a problem of classifying computational basis product states according to their parity. Here, $\rho_i=\dya{\psi_i}$, with  $\ket{\psi_i} = \ket{z_{i_1}z_{i_2}\dots z_{i_n}}$, and where $z_{i_k}\in \{0,1\}$ the parity of $\ket{\psi_i}$ is defined as $y_i = f(\ket{\psi_i}) = \sum_{k=1}^n z_{i_k} \mod 2$.
Defining the spin-flip operator $P=\bigotimes_{j=1}^n X_j$, where $X_j$ is the Pauli-$x$ operator acting on the $j$-th qubit.
One can readily see that while $P\ket{\psi_i}\neq \ket{\psi_i}$, the parity is invariant under $P$, i.e. $f(P\ket{\psi_i}) = f(\ket{\psi_i})$. 
For a concrete example, $f(P\ket{01}) = 1 = f(\ket{10})$.
In other words, the states are not invariant under the symmetry, but the labels are.
\begin{itemize}
    \item \textit{States:} Bitstring product states $\ket{\psi_i} = \ket{z_{i_1}z_{i_2}\dots z_{i_n}} \in (\mathbb{C}^{2})^{\otimes n}$, where $z_{i_k}\in \{0,1\}$ and $n$ even
    \item \textit{Labels:} Parity $y_i = f(\ket{\psi_i}) = \sum_{k=1}^n z_{i_k} \mod 2$
    \item \textit{Group:} $\mathbb{Z}_2 = \{1, p\}$
    \item \textit{Representation: } $R:G\mapsto GL((\mathbb{C}^{2})^{\otimes n})$, where $1\cdot \ket{\psi_i} = \ket{\psi_i}$,     $\sigma\cdot \ket{\psi_i} = P\ket{\psi_i}$
\end{itemize}
\textbf{Qubit reflection parity:} Consider a problem of classifying states according to their qubit-reflection parity. Defining the qubit-reflection operator $R:=R_{1,n}R_{2,n-1}\dots R_{\lfloor{n/2}\rfloor, \lfloor{n/2}\rfloor+1}$, where $R_{j,j'}$ swaps qubits $j$ and $j'$, and writing $\rho_i=\dya{\psi_i}$, the states will have label $y_i=0$ ($y_i=1$) if $\rho_i$ is an eigenstate of $R$ with eigenvalue $1$ ($-1$). Here, one can readily verify that $R\rho_i R\ad=\rho_i$. \\
\textbf{Qubit permutations:} Learning problems with permutation symmetries abound. Examples include learning over sets of elements, modeling relations between pairs (graphs) or multiplets (hypergraphs) of entities, problems defined on grids (such as condensed matter systems), molecular systems, evaluating genuine multipartite entanglement, or working with distributed quantum sensors. Consider for instance a problem where an $n$-qubit $\rho$ is a graph state  encoding  the topology of an underlying graph. One can create such state by starting with the state $\ket{+}^{\otimes n}$, and applying a unitary $U^{(a,b)}$ for each edge $(a,b)$ in the graph. Here  $U^{(a,b)} = e^{-i\gamma ((\dya{0})^a \otimes \id^b + (\dya{1})^a\otimes Z^b)}$ is an Ising-type interaction. By conjugating
the state with an element of $S_n$, one obtains a new quantum
state whose interaction graph  is isomorphic to the original one. 
\begin{itemize}
    \item \textit{States:} Quantum states on qubits, where the qubit labeling index do not matter.
    \item \textit{Labels: (Here, any label will work, since the states themselves are invariant).}
    \item \textit{Group:} $G = S_n$, the symmetric group on $n$ letters
    \item \textit{Representation: } $R:G\mapsto GL((\mathbb{C}^{2})^{\otimes n})$, where the 2-cycle $(j,j')\cdot \ket{\psi_i} = \SWAP_{j,j'} \ket{\psi_i}$. Note that since any permutation in $S_n$ can be expressed as a product of swaps, this defines our representation on all permutations.
\end{itemize}
\textbf{Translation invariance:} Let $H$ a Hamiltonian and consider the problem of classifying energies $y_i$ of a set of eigenstates $\ket{\psi_i}$.
Suppose $H = \sum_{j=1}^{n} h_{j,j+1}$, where $h_{j,j+1}$ is a nearest-neighbor interaction and we impose periodic boundary conditions so that $n+1 \equiv 1$. 
Then $H$ commutes with the translation operator $\tau_g: (\mathbb{C}^2)^{\otimes n} \to (\mathbb{C}^2)^{\otimes n}$, which translates the state $\ket{\psi}$ to the right by $g$ sites (e.g. $\tau_0 = \idty$ and $\tau_2\ket{01101} = \ket{01011}$).
We can then use Prop. \ref{prop:symmetries, Hamiltonians, and eigenstates} to argue that the energy label $y_i$ is invariant under the group of translations, so $f(\ket{\psi_i}) = f(\tau_g\ket{\psi_i})$ for all translations $\tau_g$.
\begin{itemize}
    \item \textit{States:} Eigenstates $\ket{\psi_i}$ of a Hamiltonian $H$ on a ring of $n$ qubits
    \item \textit{Labels:} Eigenenergies $y_i$, i.e. $H\ket{\psi_i} = y_i \ket{\psi_i}$
    \item \textit{Group:} $G = \mathbb{Z}_n$, the cyclic group of order $n$.
    \item \textit{Representation: } $\tau:G\mapsto GL((\mathbb{C}^{2})^{\otimes n})$, where $\tau_g$ translates to the right by $g$ sites
\end{itemize}
\end{pabox}

\begin{pabox}[float*, floatplacement=h!t,label={box:2},width=2\columnwidth]{Continuous symmetries}
\textbf{Unitary transformations and purity:} Consider a problem of classifying pure states from mixed states. 
The dataset here is composed of states with label $y_i=0$ ($y_i \neq 0$) if $\rho_i$ is pure (mixed).  
Since the purity is a spectral property, then the labels in $\SC$ are invariant under the action of any unitary. Note that here  $f(U \rho_i U\ad)=f(\rho_i)$, but in general $U \rho_i U\ad\neq \rho_i$.
\begin{itemize}
    \item \textit{States:} States $\rho_i\in \DC(\mathcal{H})$.
    \item \textit{Labels:} Pure $y_i = 0$ and mixed $y_i\neq 0$.
    \item \textit{Group:} $G = U(d)$, the unitary group on $\mathcal{H}$. 
    \item \textit{Representation: } $U:G\mapsto U(d)$, where $g\cdot \rho_i = U_g \rho_i U_g^\dagger$.
\end{itemize}
\textbf{Orthogonal transformations:} Consider a problem of classifying orthogonal (real-valued) states from Haar-random states.  The dataset here is composed of states with label $y_i=0$ ($y_i \neq 0$) if $\rho_i$ is a real-valued state (a Haar random state).  
Here, the labels $y_i=0$ are invariant under the action of any orthogonal unitary, as conjugated a real-valued state by a real-valued unitary yields a real-valued state. Note that here  $f(U \rho_i U\ad)=f(\rho_i)$, but in general $U \rho_i U\ad\neq \rho_i$.
\begin{itemize}
    \item \textit{States:} States $\rho_i\in \DC(\mathcal{H})$.
    \item \textit{Labels:} Orthogonal $y_i = 0$ and mixed $y_i\neq 0$.
    \item \textit{Group:} $G = O(d)$, the orthogonal group on $\mathcal{H}$
    \item \textit{Representation: } $U:G\mapsto O(d)$, where $g\cdot \rho_i = U_g \rho_i U_g^\dagger$
\end{itemize}
\textbf{Local unitary transformations and the XXX model:} Consider the problem of classifying ground states of the Heisenberg $XXX$ model $H=J\sum_{j=1}^n(X_j X_{j+1}+Y_j Y_{j+1}+Z_j Z_{j+1})$. Here, $y_i=0$ ($y_i=1$) if $\rho_i$ is a ferromagnetic (antiferromagnetic) ground state of $H$ with $J<0$ ($J>0$). Since  $H$ commutes with the total magnetization operators $S_x=\sum_{j=1}^nX_j$, $S_y=\sum_{j=1}^nY_j$, $S_z=\sum_{j=1}^n Z_j$, then the labels are invariant under the action of the same local unitary acting on all qubits. That is, $f((\bigotimes_i^n U) \rho_i (\bigotimes_i^n U\ad))=f(\rho_i)$ for any local unitary $U$. 
\begin{itemize}
    \item \textit{States:} Ground states of the $XXX$ chain $\rho_i\in \DC(\mathcal{H})$
    \item \textit{Labels:} Ferromagnetic $y_i = 0$ and antiferromagnetic $y_i = 1$
    \item \textit{Group:} $G = U(2)$
    \item \textit{Representation: } $U:G\mapsto U(d)$, where $g \cdot \rho_i = (U_{g} \otimes \dots \otimes U_{g}) \rho_i (U_{g} \otimes \dots \otimes U_{g})\ad $
\end{itemize}
\textbf{Local unitary transformations and multipartite entanglement:} Consider the problem of classifying pure quantum states according to the amount of multipartite entanglement they posses.  Here, $y_i=1$ if the states posses a large amount of multipartite entanglement (according to some measure), while $y_i=0$ if the states are separable. Since local unitaries do not change the multipartite entanglement in a quantum state, then we have that $f((\bigotimes_j^n U_j) \rho_i (\bigotimes_j^n U_j\ad))=f(\rho_i)$ for any local unitary $U_j$  acting on the $j$-th qubit.
\begin{itemize}
    \item \textit{States:} Pure states $\rho_i\in \DC(\mathcal{H})$
    \item \textit{Labels:} $y_i\in [0,1]$, where 0 means separable and 1 means ``highly entangled''
    \item \textit{Group:} $G = U(2) \times \dots \times U(2)$, ($n$ times)
    \item \textit{Representation: } $U:G\mapsto U(d)$, where $(g_1,\dots,g_n)\cdot \rho_i = (U_{g_1} \otimes \dots \otimes U_{g_n}) \rho_i (U_{g_1} \otimes \dots \otimes U_{g_n})\ad $
\end{itemize}

\end{pabox}

\begin{pabox}[float*, floatplacement=h!t,label={box:3},width=2\columnwidth]{The usual suspects: commonly appearing discrete groups}
While a veritable zoo of group symmetries can (and do!) appear in the wild, any QML practicioner should be familiar with some especially common species. 
Consider this a beginner's field guide to some frequently appearing groups and a few classic applied locations they have been observed.

Note that discrete group representations often appear in disguise as subgroups of continuous groups (which is inevitable, because as we described earlier, we care about unitary representations on Hilbert spaces). For example, a $\pi/2$ pulse instantiates a rotation by $\pi/2$ on the Bloch sphere for a single qubit---the possible actions of strings of $\pi/2$ pulses generate the group of rotations $\mathbb{Z}_4$. 
\\ \\
\textbf{Discrete groups}
\begin{itemize}
    \item $\mathbb{Z}_n$, the cyclic group of integers modulo $n$. 
    \begin{itemize}
        \item \textit{Type, size, \# irreps:} Abelian, $\abs{\mathbb{Z}_n} = n$, n irreps (given by roots of unity)
        \item \textit{Where you might find them:} translations on periodic lattices, rotations of 2D regular polygons, roots of unity, parity transformations
        \item \textit{Useful fact:} Every finite abelian group is a direct product of cyclic groups, so finite abelian symmetries can often be studied by restricting to cyclic groups.
    \end{itemize}
    \item $S_n$, the symmetric group on $n$ letters (aka the group of all permutations on a set of size $n$).
        \begin{itemize}
        \item \textit{Type, size, \# irreps:}  nonabelian for $n\geq 3$, $\abs{S_n} = n!$, integer partitions $\lambda$ of $n$ 
        \begin{itemize}
            \item  A tuple of positive integers $\lambda=(\lambda_1,\dots,\lambda_k)$ is called a \textit{partition} if $\lambda_1 + \dots + \lambda_k = n$ and $\lambda_1\geq \dots \lambda_k > 0$. 
            Partitions, and thus irreps of $S_n$, are labeled by Young diagrams.
        \end{itemize}
        \item \textit{Where you     might find them:} Symmetric and antisymmetric vectors, determinants, more general permutations on tensor indices, combinatorial identities
        \item \textit{Useful fact:} Schur-Weyl duality tells us that the tensor representations $U^{\otimes n}$ of $U(d)$ on $(\mathbb{C}^d)^{\otimes n}$ can be decomposed into representations of $U(d)$ and $S_n$, where $U(d)$ acts on one $\mathbb{C}^d$ and $S_n$ permutes tensor indices.
        \end{itemize}
    \item $D_n$, the dihedral group of symmetries of the regular $n$-gon.
    \begin{itemize}
        \item \textit{Type, size, \# irreps:} nonabelian for $n\geq 3$, $\abs{D_n} = 2n$, $(n+3)/2$ if $n$ odd, $(n+6)/2$ if $n$ even
        \item \textit{Where you     might find them:} Molecular symmetry, finite subgroups of $O(2)$
        \item \textit{Useful fact:} $D_n$ is generated by rotations of angle $2\pi/n$ and reflections, and so on the complex plane is commonly thought of as the group generated by multiplication by $e^{2\pi i/n}$ and complex conjugation.
    \end{itemize}
    \item $\mathbb{Z}^n=\mathbb{Z}\times \dots \times \mathbb{Z}$, the additive group of integer vectors $(k_1,\dots,k_n)$
    \begin{itemize}
        \item \textit{Type, size:} abelian, $\abs{\mathbb{Z}^n} = \infty$
        \item \textit{Where you might find them:} Fourier series, translations on infinite lattices
        \item \textit{Useful fact:} The cyclic group $\mathbb{Z}_\ell$ can be thought of a quotient group of the group of integers $\mathbb{Z}$, or more geometrically, $\mathbb{Z}_\ell$ is the group of translations on a ring with $\ell$ sites (``periodic boundary conditions''). Likewise, if we quotient every $\mathbb{Z}$ in the translation group $\mathbb{Z}^n$, we get tori $\mathbb{Z}_{\ell_1}\times \mathbb{Z}_{\ell_2} \times \dots \times \mathbb{Z}_{\ell_n}$.
    \end{itemize}   
\end{itemize}

\end{pabox} 

\begin{pabox}[float*, floatplacement=h!t,label={box:4},width=2\columnwidth]{The usual suspects: commonly appearing continuous groups}
The field guide continues with some commonly appearing continuous groups. 
These species hold privileged positions in physics as symmetries enjoyed by a variety of differential equations, and they are often detected at the level of their Lie algebra as ``infinitesimal symmetries''.
While certainly the unitary groups are most important for QML, we would not put it past these other groups to sneakily appear in a variety of tasks.
Perhaps you will tell us where you have caught them!
\\ \\
\textbf{Continuous groups}
\begin{itemize}
    \item $GL(d; \Cbb)$ and $SL(d; \Cbb)$, the (complex) general and special linear groups
    \begin{itemize}
        \item \textit{Topological info:} Not compact, connected, simply connected
        \item \textit{Lie algebra:} $\mathfrak{gl}(d) = M_d(\Cbb)$, the $d\times d$ complex matrices, and $\mathfrak{sl}(d) = \{X\in \mathfrak{gl}(d) : \Tr[X] = 0\}$, the traceless $d\times d$ complex matrices
        \item \textit{Where you might find them:} Change of bases, so basically everywhere.
        \item \textit{Useful fact:} The complexified Lie algebras $ \mathfrak{u}(d) \otimes \Cbb \cong \mathfrak{gl}(d)\otimes \Cbb \cong \mathfrak{gl}(d)$ and $ \su(d) \otimes \Cbb \cong \mathfrak{sl}(d)\otimes \Cbb \cong \mathfrak{sl}(d)$, which means their (complex) representation theory is the same, even though $\mathfrak{gl}(d)\neq \mathfrak{u}(d)$ and $\mathfrak{sl}(d) \neq \mathfrak{su}(d)$ ($\mathfrak{u}(d),\mathfrak{su}(d)$ are real vector spaces but not complex vector spaces).
    \end{itemize}   
    \item $U(d)$ and $SU(d)$, the unitary and special unitary groups
    \begin{itemize}
        \item \textit{Topological info (for $SU(d)$):} Compact, connected, simply connected
        \begin{itemize}
            \item $U(d)/SU(d) \cong U(1)$, the circle group. 
            The isomorphism follows immediately from $\det:U(d)\to U(1)$.
            $U(d)$ is not simply connected.
        \end{itemize}
        \item \textit{Lie algebra:} $\mathfrak{u}(d) = \{X \in M_d(\Cbb) : X = -X^\dagger\}$, the $d\times d$ skew-hermitian matrices, and $\mathfrak{su}(d) = \{X\in \mathfrak{u}(d) : \Tr[X] = 0\}$, the traceless $d\times d$ skew-hermitian matrices. Note that $\mathfrak{u}(d), \mathfrak{su}(d)$ are real vector spaces: e.g. if $X\in \su(d)$, then $iX\not\in \su(d)$.
        \item \textit{Where you might find them:} Literally everywhere in quantum.
        \item \textit{Useful fact:} Wigner's theorem assures us that every symmetry transformation on physical states which preserves the Hermitian inner product is either a unitary or antiunitary transformation. 
        This in large part motivates the focus on unitary representations from a physical standpoint.
    \end{itemize}   
    \item $O(d)$ and $SO(d)$, the orthogonal and special orthogonal groups
    \begin{itemize}
        \item \textit{Topological info (for $SO(d)$):} Compact, connected, not simply connected
        \begin{itemize}
            \item Note that the orthogonal group $O(d)$ consists of two disconnected copies of $SO(d)$: matrices $\{R: R\in SO(d)\}$, and matrices $\{SR : R\in SO(d) \text{ and } \det(S)=-1\}$. 
            We can think of $O(d)$ as being generated by ``rotations'' ($SO(d)$) and ``hyperplane reflections'', like the dihedral group.
        \end{itemize}
        \item \textit{Lie algebra:} $\mathfrak{o}(d) = \mathfrak{so}(d) = \{X \in M_d(\R) : X = -X^T\}$, the skew-symmetric $d\times d$ matrices. 
        \item \textit{Where you might find them:} Spin systems like AKLT chain, physical rotations and radial potentials, angular momentum, Clifford algebras
        \item \textit{Useful fact:} Even and odd orthogonal groups have significantly different structures and representation theory. 
        This is closely related to the presence of subgroups which control rotations in planes, i.e. subgroups isomorphic to $SO(2)$.
    \end{itemize}   
    \item $Sp(2d,\R)$, the symplectic group
    \begin{itemize}
        \item \textit{Topological info:} Noncompact, connected, not simply connected
        \item \textit{Lie algebra:} $\mathfrak{sp}(2d) = \{X \in M_d(\R) : X^T J + J X = 0\}$, where $J = \begin{pmatrix} 0 & \idty_d \\ -\idty_d & 0 \end{pmatrix}$ is a symplectic matrix, i.e. a linear transformation preserving the symplectic form $\omega(v,w) = \omega(Jv,Jw)$ $\forall v,w \in \R^{2d}$.
        \item \textit{Where you might find them:} Classical Hamiltonian Systems and their quantizations, quantum optics, gaussian states, quasi-free states
        \item \textit{Useful fact:} The symplectic group describes canonical transformations in classical mechanics, i.e. linear maps which preserve Hamilton's equations. 
    \end{itemize}   
\end{itemize}

\end{pabox} 

\end{document}